\documentclass{CSML}
\pdfoutput=1

\usepackage{lastpage}

\lmcsheading{}{1--\pageref{LastPage}}{}{}%
{Feb.~01, 2017}{Jan.~23, 2018}{}

\keywords{Total Store Order, Weak Memory Models, Reachability Problem, Parameterized Systems, Well-quasi-ordering}

\ACMCCS{[{\bf Software and its engineering}]: Software organization and properties---Software functional properties---Formal methods---Software verification}

\usepackage{hyperref}
\usepackage{multirow}
\usepackage{subcaption}
\captionsetup{style=base}
\usepackage{mdframed}
\usepackage{gastex,delarray,wrapfig}
\usepackage{graphicx}
\usepackage{tikz}
\usepackage{listings}
\usetikzlibrary{automata}
\usetikzlibrary{trees}
\usetikzlibrary{shapes}
\usetikzlibrary{petri}
\usetikzlibrary{arrows}
\usetikzlibrary{decorations}
\usetikzlibrary{backgrounds}
\usepackage{bm}
\usepackage{bold-extra}
\usetikzlibrary{positioning}
\usetikzlibrary{calc}
\usetikzlibrary{fit}

\usepackage{amssymb}

\captionsetup[table]{belowskip=10pt}

\def\by{\xrightarrow}
\newcommand{\minnof}[1]{\minn\left({#1}\right)}
\newcommand{\minn}{\texttt{min}}
\newcommand{\upclosure}[1]{{#1}\uparrow}

\newcommand{\minpreof}[1]{\minpre\left({#1}\right)}

\newcommand{\Wsts}{{\sc Wsts}}
\newcommand{\m}[1]{\mathcal{#1}\hspace{-0.05cm}}

\newcommand{\rop}{{\sf r}}
\newcommand{\wop}{{\sf w}}
\newcommand{\arw}{{\sf arw}}
\newcommand{\nop}{{\sf nop}}
\newcommand{\fenceop}{{\sf fence}}
\newcommand{\updateop}{{\sf update}}
\newcommand{\propagate}{{\sf propagate}}
\newcommand{\delete}{{\sf delete}}

\newcommand{\ra}{\rightarrow}

\newcommand{\op}{{\it op}}

\newcommand{\flh}{\hookleftarrow}


\newcommand{\cprogram}{\m{P}}
\newcommand{\system}{\cprogram}

\newcommand{\matchingof}[1]{{\tt match}\left(#1\right)}
\newcommand{\processof}[1]{{\tt proc}\left(#1\right)}
\newcommand{\posof}[2]{{\tt pos}\left(#1,#2\right)}
\newcommand{\set}[1]{\left\{#1\right\}}
\newcommand{\setcomp}[2]{\left\{{#1}|\;{#2}\right\}}
\newcommand{\tuple}[1]{\left(#1\right)}
\newcommand{\vars}{{\mathbb X}}
\newcommand{\xvar}{x}
\newcommand{\yvar}{y}
\newcommand{\dataset}{{\mathbb V}}
\newcommand{\data}{v}
\newcommand{\valset}{{\mathbb V}}

\newcommand{\proc}{p}
\newcommand{\procs}{{\mathbb P}}
\newcommand{\automaton}{A}

\newcommand{\cstuple}{\tuple{\automaton_1,\automaton_2,\ldots,\automaton_n}}

\newcommand{\states}{Q}

\newcommand{\state}{q}
\newcommand{\initstate}{\state^{\it init}}
\newcommand{\transitions}{\Delta}
\newcommand{\transition}{t}

\newcommand{\targetof}[1]{{\it target}\left(#1\right)}

\newcommand{\confs}{C}
\newcommand{\allconfs}{{\tt C}}
\newcommand{\tsoconfs}{\allconfs_{{\sf TSO}}}
\newcommand{\dtsoconfs}{\allconfs_{\sf DTSO}}

\newcommand{\conf}{c}
\newcommand{\initconf}{c_{\it init}}
\newcommand{\dinitconf}{c^D_{\it init}}
\newcommand{\dconf}{d}
\newcommand{\initconfs}{{\tt Init}}

\newcommand{\tsys}{{\mathcal T}}
\newcommand{\actions}{{\tt Act}}
\newcommand{\tstuple}{\tuple{\allconfs,\initconfs,\actions,\cup_{a \in \actions}\by{a}}}

\newcommand{\mem}{{\bf mem}}

\newcommand{\initmem}{{\bf mem}_{\it init}}
\newcommand{\buffermapping}{{\bf \buffer}}

\newcommand{\initbuffermapping}{{\bf \buffer_{\it init}}}
\newcommand{\buffer}{b}

\newcommand{\statemapping}{{\bf \state}}

\newcommand{\initstatemapping}{{\bf \state_{\it init}}}

\newcommand{\onestatemappingof}[1]{{\tt states}\left(#1\right)}
\newcommand{\twostatemappingof}[2]{{\tt states}\left(#1\right)\left(#2\right)}

\newcommand{\onebuffermappingof}[1]{{\tt buffers}\left(#1\right)}
\newcommand{\twobuffermappingof}[2]{{\tt buffers}\left(#1\right)\left(#2\right)}
\newcommand{\threebuffermappingof}[3]{{\tt buffers}\left(#1\right)\left(#2\right)\left(#3\right)}

\newcommand{\onememoryof}[1]{{\tt mem}\left(#1\right)}
\newcommand{\twomemoryof}[2]{{\tt mem}\left(#1\right)\left(#2\right)}

\newcommand{\update}[2]{\left[#1\flh #2\right]}
\newcommand{\tsomovesto}[1]{\by{#1}_{\sf TSO}}
\newcommand{\dtsomovesto}[1]{\by{#1}_{{\sf DTSO}}}

\newcommand{\app}{\cdot}
\newcommand{\sizeof}[1]{|#1|}
\newcommand{\emptyword}{\epsilon}

\newcommand{\nat}{{\mathbb N}}

\newcommand{\word}{w}

\newcommand{\run}{\pi}
\newcommand{\comp}{\pi}
\newcommand{\apprun}{\bullet}
\newcommand{\tsocomp}{\comp_{\sf TSO}}
\newcommand{\sbcomp}{\comp_{\sf DTSO}}
\newcommand{\sbtotso}[1]{{\tt DTSO2TSO}\left(#1\right)}
\newcommand{\dtsototso}[1]{{\tt DTSO2TSO_+}\left(#1\right)}

\newcommand{\maxproc}{\proc_{\it max}}
\newcommand{\minproc}{\proc_{\it min}}
\newcommand{\procsuccof}[1]{{\it succ}\left(#1\right)}

\newcommand{\procordering}{\prec}

\newcommand{\numberof}[2]{\sharp\left({#1},{#2}\right)}

\newcommand{\preof}[2]{{\tt Pre}_{#1}\left(#2\right)}
\newcommand{\genordering}{\sqsubseteq}


\newcommand{\instance}{\it Inst}

\newcommand{\cordering}{\sqsubseteq}
\newcommand{\pconf}{\alpha}

\newcommand{\ownof}[1]{[{#1}]_{\it own}}
\newcommand{\wordering}{\preceq}

\newcommand{\minpre}{{\sf minpre}}

\newcommand{\parsys}{{\mathcal S}}

\def\dx{1.5cm} 
\def\dy{0.5cm}

\tikzset{
  state/.style={draw}
}

\newcommand{\newState}[4]{\node[state,#3](#1)[#4]{#2};}
\newcommand{\newTransition}[4]{\path[->] (#1) edge [#4] node {#3} (#2);}

\colorlet{bggreen}{green!20}

\tikzstyle{background rectangle}=
[rounded corners,inner frame sep=0pt]

\tikzstyle{autedge}=[->,color=red,line width=1pt]
\tikzstyle{labelnode}=[text=black,align=left,sloped,above=-2pt,font=\tiny]
\tikzstyle{fitnode}=[draw=black, rounded corners,line width=1pt]

\tikzstyle{typenode}=
[circle,draw,fill=yellow!20,inner sep=2pt]

\tikzstyle{equivedge}=[->,line width=1pt]
\tikzstyle{commentedge}=[->,color=red,line width=1pt,decorate, decoration={snake}]
\tikzstyle{trivialedge}=[->,color=red,dotted,line width=1pt]
\tikzstyle{impliesedge}=[->,color=black,line width=1pt]
\tikzstyle{transedge}=[->,color=black,line width=0.5pt,>=latex]

\tikzstyle{midnode}=[midway,circle,minimum size=3pt,inner sep=0mm,fill=black]

\tikzstyle{stabnode}=
[rounded corners,draw,inner sep=4pt]

\tikzstyle{commentnode}=
[ellipse callout,draw,inner sep=0.5pt]

\tikzstyle{blackdiagramnode}=
[rounded corners,fill=white,text=black,inner sep=1pt, inner ysep=2pt,anchor=north]
\tikzstyle{reddiagramnode}=
[rounded corners,fill=white,text=black,inner sep=1pt, inner ysep=2pt,anchor=north]
\tikzstyle{bluediagramnode}=
[rounded corners,fill=white,text=black,inner sep=1pt, inner ysep=2pt,anchor=north]

\definecolor{mcolor}{RGB}{200,255,200}
\lstset{backgroundcolor=\color{mcolor}}

\tikzstyle{buffernode}=[draw,line width=1pt,minimum height=5mm,outer sep=0mm,inner sep=2pt]

\tikzstyle{statenode}=[draw,circle,inner sep=1pt]

\pgfdeclarelayer{mybackground}
\pgfdeclarelayer{background}
\pgfdeclarelayer{foreground}
\pgfsetlayers{background,mybackground,foreground,main}

\newcommand\xqed[1]{%
  \leavevmode\unskip\penalty9999 \hbox{}\nobreak\hfill
  \quad\hbox{#1}}
\newcommand\myend{\xqed{$\triangle$}}


\usepackage{xcolor}

\hypersetup{hidelinks}

\begin{document}

\title[A Load-Buffer Semantics for Total Store Ordering]{A Load-Buffer Semantics for Total Store Ordering\rsuper*}

\titlecomment{{\lsuper*} A preliminary version of this paper appeared as
at CONCUR'16 \cite{DBLP:conf/concur/AbdullaABN16}.}
\author[P.A.~Abdulla]{Parosh Aziz Abdulla\rsuper a}	
\address{\lsuper{a,b,d}Uppsala University, Sweden}	
\email{parosh@it.uu.se}
\email{mohamed\_faouzi.atig@it.uu.se}
\email{tuan-phong.ngo@it.uu.se}
\thanks{This work was supported in part by the Swedish Research Council and carried out within the Linnaeus centre of excellence UPMARC, Uppsala Programming for Multicore Architectures Research Center.}	
\author[M.F.~Atig]{Mohamed Faouzi Atig\rsuper b}	
\address{\vskip-6pt}
\author[A.~Bouajjani]{Ahmed Bouajjani\rsuper c}	
\address{ \lsuper{c}IRIF Universit\'e Paris Diderot - Paris 7, France}	
\email{abou@liafa.univ-paris-diderot.fr}  
\author[T.P.~Ngo]{Tuan Phong Ngo\rsuper d}
\address{\vskip-6pt}

\begin{abstract}
  \noindent We address the problem of verifying safety properties of concurrent programs running over the Total Store Order (TSO) memory model. Known decision procedures for this model are based on complex encodings of store buffers as lossy channels. These procedures assume that the number of processes is fixed. However, it is important in general to prove the correctness of a system/algorithm in a parametric way with an arbitrarily large number of processes.
  
In this paper, we introduce an alternative (yet equivalent) semantics to the classical one for the TSO semantics that is more amenable to efficient algorithmic verification and for the extension to parametric verification. For that, we adopt a {\em dual} view where {\em load buffers} are used instead of store buffers. The flow of information is now from the memory to load buffers. We show that this new semantics allows (1) to simplify drastically the safety analysis under TSO, (2) to obtain a spectacular gain in efficiency and scalability compared to existing procedures, and (3) to extend easily the decision procedure to the parametric case, which allows obtaining a new decidability result, and more importantly, a verification algorithm that is more general and more efficient in practice than the one for bounded instances.

\end{abstract}

\maketitle


\section{Introduction}
Most modern processor architectures execute instructions in an
out-of-order manner to gain efficiency.
In the context of {\it sequential} programming, this
out-of-order execution is transparent to the programmer
since one can still work
under the Sequential Consistency (SC) model~\cite{lamport-79}.
However, this is not true when
we consider concurrent processes that share the memory.
In fact, it turns out that concurrent algorithms
such as mutual exclusion and producer-consumer
protocols may not behave correctly any more.
Therefore, program verification is a relevant
(and difficult) task in order to prove correctness under the new
semantics.
The out-of-order execution of instructions
has led to the invention of new program semantics, so called
{\it Weak (or relaxed) Memory Models} (WMMs),
by allowing permutations between certain types of memory operations
\cite{adve-gharachorloo-96,DBLP:conf/isca/DuboisSB86,DBLP:conf/isca/AdveH90}.
 Total Store Ordering (TSO) is one of the the most common models, and
it corresponds to the relaxation
adopted by Sun's SPARC multiprocessors
\cite{sparc-v9-manual} and formalizations of the
x86-TSO memory model \cite{OSS2009,SSONM2010}.
These models put an unbounded perfect (non-lossy) {\it store buffer}
between each process and the main memory where
a store buffer carries the pending store operations of the process.
When a process performs a store operation, it appends it to the end of
its buffer.
These operations are propagated to the shared memory non-deterministically
in a FIFO manner.
When a process reads a variable, it searches its buffer
for a pending store operation on that variable.
If no such a store operation exists, it fetches the value
of the variable from the main memory.
Verifying programs running on the TSO memory model poses
a difficult challenge since the  unboundedness
of the buffers implies that
the state space of the system is infinite even in the case
where the input program is finite-state.
Decidability of safety properties
has been obtained by constructing equivalent models
that replace the perfect store buffer by {\em lossy} channels
\cite{ABBM10,ABBM12,DBLP:conf/tacas/AbdullaACLR12}.
However, these constructions are complicated and
involve several ingredients that lead to
inefficient verification procedures.
For instance, they require each message inside a lossy channel to carry
(instead of a single store operation)
a full snapshot of the memory representing a local view of the memory
contents by the process.
Furthermore, the reductions involve non-deterministic guessing the lossy channel contents.
The guessing is then resolved either by consistency checking
\cite{ABBM10} or by using explicit pointer variables
(each corresponding to one process) inside the buffers
\cite{DBLP:conf/tacas/AbdullaACLR12}, causing
a serious state space explosion problem.

In this paper, we introduce a novel semantics which we call
the {\it Dual TSO} semantics. Our aim is to provide an alternative (and equivalent) semantics that is more amenable for efficient algorithmic verification.
The main idea is to have {\it load buffers} that contain
pending load operations (more precisely, values that will potentially be taken by forthcoming load operations) rather than store buffers (that contain store operations).
%
%
The flow of information will now be in the reverse direction, i.e.,
store operations are performed by the processes atomically on the main memory,
while values of variables are propagated non-deterministically  from the memory to the load
buffers of the processes.
When a process performs a load operation, it can fetch the value of the variable
from the head of its load buffer.  
We show that the Dual TSO  semantics is equivalent to the original
one in the sense that any given set of processes 
will reach the same set of local
states under both semantics.
The Dual TSO semantics allows us to understand 
the TSO model
in a totally different way compared to the classical semantics.
Furthermore, the Dual TSO semantics offers several important advantages
from the point of view of formal reasoning and program verification.
First,  the Dual TSO semantics allows transforming the load buffers
to {\it lossy} channels without adding the 
costly overhead that was necessary in the case of  store buffers.   This means that we can assume w.l.o.g. that any message in the load buffers (except a finite number of messages) can be lost in non-deterministic manner. 
Hence, we can apply the theory of 
{\it well-structured systems} \cite{DBLP:journals/bsl/Abdulla10,abdulla-general-96,FinkelS01} in a straightforward manner leading 
to a much simpler proof of decidability of safety properties.
Second, the absence of extra overhead means  that we obtain 
more efficient algorithms and better scalability 
(as shown by our experimental results).
Finally, the Dual TSO semantics allows extending the framework to
perform {\it parameterized verification} which is an
important paradigm in concurrent program verification.
Here, we consider systems, e.g., mutual exclusion protocols,
that consist of an arbitrary number
of processes.
The aim of parameterized verification is
to prove correctness of the system regardless
of the number of processes.
It is not obvious how
to perform parameterized verification under the classical semantics.
For instance, extending the framework of
\cite{DBLP:conf/tacas/AbdullaACLR12}, 
 would involve an unbounded number of pointer variables, thus
leading to channel systems with unbounded message 
alphabets.
In contrast, as we show in this paper, the simple nature
of the Dual TSO semantics allows a straightforward extension
of our verification algorithm to the case of parameterized verification. This is the first time a decidability result is established for the parametrized verification of programs running over WMMs. Notice that this result is taking into account two sources of infinity: the number of processes and the size of the buffers. 

Based on our framework, we have implemented a tool and applied
it to a large set of benchmarks.
The experiments demonstrate the efficiency of the Dual TSO semantics compared to the classical one (by two order of magnitude in average), and the feasibility of parametrized  verification in the former case. In fact, besides its theoretical generality, parametrized verification is practically crucial in this setting: as our experiments show, it is much more efficient than verification of bounded-size instances (starting from a number of components of 3 or 4), especially concerning memory consumption (which also is a critical resource).

\paragraph*{\bf Related Work.}
There have been a lot of works related to the analysis  of programs running under WMMs  (e.g.,~\cite{DBLP:conf/pldi/LiuNPVY12,KVY2010,KVY2011,DBLP:conf/sas/DanMVY13,DBLP:conf/tacas/AbdullaACLR12,BM2008,BSS2011,DBLP:conf/esop/BouajjaniDM13,BAM07,yang-gopalakrishnan-PDPS04,DBLP:conf/netys/AbdullaALN15,DBLP:conf/sas/AbdullaACLR12,DBLP:conf/cav/AbdullaAJL16,Dan201762,DBLP:conf/ictac/TravkinW16,DBLP:conf/fm/LahavV16,DBLP:conf/icalp/LahavV15,DBLP:conf/popl/Vafeiadis15,DBLP:conf/pdp/HeVQF16}). 
Some of these works propose precise analysis techniques for checking safety properties or stability  of finite-state programs under WMMs (e.g.,~\cite{DBLP:conf/tacas/AbdullaACLR12,DBLP:conf/esop/BouajjaniDM13,DM14,DBLP:conf/esop/AbdullaAP15,DBLP:conf/netys/AbdullaALN15}). 
Others propose context-bounded analyzing techniques (e.g., \cite{AtigBP11,fmcad16,eps402285,abdullaABN17}) or stateless model-checking techniques (e.g., \cite{tacas15:tso,Zhang:pldi15,DBLP:conf/oopsla/DemskyL15,oopsla16}) for programs under TSO and PSO. Different other techniques based on monitoring and testing have also been developed during these last years (e.g.,~\cite{BM2008,BSS2011,DBLP:conf/pldi/LiuNPVY12}). There are also a number of efforts to design bounded model checking techniques for programs under  WMMs  (e.g.,~\cite{DBLP:conf/esop/AlglaveKNT13,AlglaveKT13,yang-gopalakrishnan-PDPS04,BAM07}) which encode the verification problem in SAT/SMT.

The  closest works to ours are those presented in~\cite{DBLP:conf/tacas/AbdullaACLR12,ABBM10,DBLP:conf/tacas/AbdullaACLR13,ABBM12} which provide precise and sound techniques  for checking  safety properties for finite-state programs running under TSO. However, as stated in the introduction, these    techniques  are  complicated and can not be extended, in a  straightforward manner, to the verification of parameterized systems  (as it is the case of the developed techniques  for the Dual TSO semantics). 
In Section~\ref{experiments:section}, we experimentally compare our techniques   with {\sf Memorax}~\cite{DBLP:conf/tacas/AbdullaACLR12,DBLP:conf/tacas/AbdullaACLR13} which  is the {only precise and sound tool} for  checking  safety properties for concurrent programs under TSO.



\section{Preliminaries}
\label{sec:prels}

Let $\Sigma$ be a finite alphabet. We use $\Sigma^{*}$ (resp. $\Sigma^{+}$) to denote  the set
of all \textit{words} (resp. non-empty words) over $\Sigma$. Let  $\emptyword$ be the empty word.
The length of a word $\word \in \Sigma^*$ is denoted by $\sizeof \word$ (and in particular $|\epsilon| = 0$).
For every $i:1\leq i\leq\sizeof\word$,
let $\word(i)$  be the symbol at position $i$ in $\word$.
For $a \in \Sigma$, we write $a \in \word$ if $a$ appears in $\word$, i.e.,
$a = \word(i)$ for some $i:1\leq i\leq\sizeof \word$.

Given two words $u$ and $v$ over $\Sigma$, we use  $u \preceq v$ to denote that $u $ is a (not necessarily contiguous) subword of $v$, i.e., if there is an injection $h: \{1,\ldots,|u|\} \mapsto \{1,\ldots, |v|\}$  such that: $(1)$ $h(i) < h(j)$ for all $i, j: 1\leq i <j \leq |u|$ and $(2)$ for every $i: 1\leq i \leq |u|$, we have $u(i)=v(h(i))$.

Given a subset $\Sigma' \subseteq \Sigma$ and a word $\word \in \Sigma^*$, we use $\word|_{\Sigma'}$ to denote the projection of $\word$ over $\Sigma'$, i.e., the word obtained from $\word$ by erasing all the symbols that are not in $\Sigma'$.

Let $A$ and $B$ be two sets and let $f:A\mapsto B$ be a total function from $A$ to $B$.  We use $f[a\flh b]$
to denote the function $g$ such that $g(a)=b$ and $g(x)=f(x)$ for all $x\neq a$.
%

A transition system $\tsys$ is a tuple $\tstuple$ where
$\allconfs$ is a (potentially infinite) set of {\em configurations};
$\initconfs\subseteq\allconfs$ is a set of {\em initial configurations}; $\actions$ is a  set of actions; 
and for every $a \in \actions $, $\by{a}\subseteq\allconfs\times \allconfs$
is a {\em transition relation}.
We use  $\conf\by{a}\conf'$ to denote that
$\tuple{\conf,\conf'}\in\by{a}$. Let  $\by{} := \cup_{a\in \actions} \by{a}$.    %

A {\em run} $\run$ of $\tsys$ is of the form
$\conf_0\by{a_1}\conf_1\by{a_2}\cdots\by{a_n}\conf_n$
where
$\conf_i\by{a_{i+1}}\conf_{i+1}$ for all $i:0\leq i <n$.
Then, we write
$\conf_0\by{\run}\conf_n$.
We use $\targetof{\run}$ to denote the configuration $\conf_n$.
The run $\run$ is said to be a {\em computation}
if $\conf_0\in\initconfs$.
Two runs
$\run_1=\conf_0\by{a_1}\conf_1\by{a_2}\cdots\by{a_m}\conf_m$ and
$\run_2=\conf_{m+1}\by{a_{m+2}}\conf_{m+2}\by{a_{m+3}}\cdots\by{a_n}\conf_n$
 are  {\em compatible} if
$\conf_m=\conf_{m+1}$.
Then, we write
$\run_1\apprun\run_2$ to denote the run
$$\run=\conf_0\by{a_1}\conf_1\by{a_2}\cdots\by{a_m}\conf_m\by{a_{m+2}}\conf_{m+2}\by{a_{m+3}}\cdots\by{a_n}\conf_n.$$
%
For two configurations $\conf$ and $\conf'$, we use 
$\conf\by{*}\conf'$ to denote that 
$\conf\by\run\conf'$ for some run $\run$.
A configuration $\conf$ is said to be
{\it reachable} in $\tsys$ if $\conf_0\by{*}\conf$ for some
$\conf_0\in\initconfs$, and a set $\confs$ of configurations
is said to be 
{\it reachable} in $\tsys$ if some $\conf\in\confs$ is reachable in $\tsys$.



\section{Concurrent Systems}
\label{definitions:section}

In this section, we define the syntax  we use for  {\em concurrent programs}, a model for representing   communication of  concurrent processes. Communication between processes is performed through a shared memory that consists of a finite number of shared variables
(over finite domains) to which all processes can read and write. Then we recall the classical TSO semantics including the transition system it induces and its reachability problem. Next, we  introduce the {\em Dual TSO} semantics and its induced   transition system. Finally, we state the equivalence between the two semantics; i.e., for  a given concurrent  program, we can reduce its reachability problem  under the classical TSO semantics to its reachability problem under Dual TSO semantics and vice-versa.

\subsection{Syntax}

Let $\valset$ be a finite data domain and    $\vars$ be a finite set of variables. We assume w.l.o.g. that $\valset$ contains the value $0$. Let    $\Omega(\vars,\valset)$  be   the smallest set of memory operations that contains with  $\xvar\in\vars$ 
and $\data, \data'\in\dataset$:
\begin{enumerate}
\item
{\em ``no''} operation $\nop$,
\item
{\em read} operation $\rop(\xvar,\data)$,
\item  
{\em write} operation $\wop(\xvar,\data)$,
\item   
{\em fence} operation $\fenceop$,  and
\item
{\em atomic read-write} operation
$\arw(\xvar,\data,\data')$.
\end{enumerate}

A {\em  concurrent system} (or a concurrent program) is a tuple
$\cprogram=\cstuple$ where for every $\proc: 1 \leq \proc \leq n$,  $\automaton_\proc$ is a  finite-state automaton describing the behavior of the process $\proc$.
The automaton $\automaton_\proc$ is defined as a triple 
$\tuple{\states_\proc,\initstate_\proc,\transitions_\proc}$ where
$\states_\proc$ is a finite set of {\em local states},
$\initstate_\proc\in\states_\proc$ is the {\em initial} local state,
and $\transitions_\proc \subseteq \states_\proc \times \Omega(\vars,\valset) \times \states_\proc$ is a finite set of {\em transitions}.
 We define $\procs:=\{1,\ldots,n\}$  to be  the set of process IDs, $\states:=\cup_{\proc\in\procs}\states_\proc$ to be the set of all local states and $\transitions:=\cup_{\proc\in\procs}\transitions_\proc$ to be the set of all transitions.

\begin{exa}
\begin{figure}
\begin{tikzpicture}[node distance=\dy and \dx,
  >=latex,shorten >=2pt,shorten <=2pt,auto,
  semithick,   initial distance=1cm,
  every initial by arrow/.style={*->}
  ]
  \newState{q-init-p1}{$\state_0$}{initial left}{}
  \newState{q11}{$\state_1$}{right=of q-init-p1}{}
  \newState{q21}{$\state_2$}{right=of q11}{accepting}
  \node[right=of q21,](a1)[]{\Large $\automaton_{1}$};
  \newTransition{q-init-p1}{q11}{$\wop(\xvar,2)$}{}
  \newTransition{q11}{q21}{$\rop(\yvar,0)$}{}
 
  \newState{q-init-p2}{$\state'_0$}{initial left, below=of q-init-p1}{}
   \newState{q12}{$\state'_1$}{right=of q-init-p2}{}
  \newState{q22}{$\state'_2$}{right=of q12}{}
   \newState{q32}{$\state'_3$}{right=of q22}{accepting}
   \node[right=of q32](a2)[]{\Large $\automaton_{2}$};
   \newTransition{q-init-p2}{q12}{$\wop(\yvar,1)$}{}
  \newTransition{q12}{q22}{$\wop(\xvar,1)$}{}
   \newTransition{q22}{q32}{$\rop(\xvar,2)$}{}
 \end{tikzpicture}
\caption{An example of a concurrent system $\cprogram=\set{\automaton_{1}, \automaton_2}$.}
\label{example:fig}
\end{figure}

Figure~\ref{example:fig} shows an example of a concurrent system $\cprogram=\set{\automaton_{1}, \automaton_2}$ consisting of  two concurrent processes,
called $\proc_1$ and $\proc_2$. Communication between processes is performed through two shared variables $\xvar$ and $\yvar$
to which the processes can read and write. 
The automaton
$\automaton_{1}$
is defined as a triple
$\tuple{\set{\state_0,\state_1,\state_2}, \set{\state_0}, \set{\tuple{\state_0,\wop(\xvar,2),\state_1}, \tuple{\state_1,\rop(\yvar,0),\state_2}}}$.
Similarly,
$\automaton_{2}=\tuple{\set{\state'_0,\state'_1,\state'_2,\state'_3}, \set{\state'_0}, \set{\tuple{\state'_0,\wop(\yvar,1),\state'_1}, \tuple{\state'_1,\wop(\xvar,1),\state'_2}, \tuple{\state'_2,\rop(\xvar,2),\state'_3}}}$.
\myend
\end{exa}

\subsection{Classical TSO Semantics}

In the following, we recall  the semantics of concurrent systems under the classical TSO model as formalized in ~\cite{OSS2009,SSONM2010}. To do that, we define the set of configurations and the induced  transition relation. Let $\system=\cstuple$ be a concurrent system.

\paragraph*{\bf TSO-configurations.}
A {\it TSO-configuration}
$\conf$ is a triple $\tuple{\statemapping,\buffermapping,\mem}$
where: 
\begin{enumerate}
\item
$\statemapping: \procs \mapsto\states $  is the {\em global state} of $\cprogram$,    mapping  each process $\proc \in \procs$ to  a   local state in $\states_\proc$ (i.e., 
$\statemapping(\proc)\in\states_\proc$). 
\item
$\buffermapping:\procs \mapsto\left(\vars\times\dataset\right)^*$ gives the content of the store buffer of each process.  
\item
$\mem:\vars\mapsto\dataset$ defines the value of each  shared variable. 
\end{enumerate}
Observe that the store buffer  of each process contains a sequence of write operations, where
each write operation is defined by a pair, namely a variable
$\xvar$ and a value $\data$ that is assigned to $\xvar$. 

The  {\it initial} TSO-configuration $\initconf$  is defined by the tuple  $\tuple{\initstatemapping,\initbuffermapping,\initmem}$
where, for all $\proc \in \procs$ and $\xvar \in \vars$, we have that
$\initstatemapping(\proc)=\initstate_\proc$,
$\initbuffermapping(\proc)=\emptyword$ and $\initmem(\xvar)=0$.
In other words, each process is in its initial local state, all
the buffers are empty, and all the variables in the shared memory are initialized to $0$.
We  use
 $\tsoconfs$ to denote the set of  all TSO-configurations.

{
\begin{figure}[t!h]
\begin{mdframed}
  \centering
  
    \begin{tabular}{@{}c@{}}
      \begin{tabular}{@{}c@{}}
     $ \transition=\tuple{\state,\nop,\state'}  \;\;\;\; \; \statemapping(\proc)=\state$\\
              \hline
$\tuple{\statemapping,\buffermapping,\mem} \tsomovesto{\transition}\tuple{\statemapping\update{\proc}{\state'},\buffermapping,\mem}$
            \end{tabular}
 \begin{tabular}{@{}c@{}}
        \\
             \;\;   {\sf Nop}\\
        \\
      \end{tabular}
      \\
      \begin{tabular}{@{}c@{}}
     $ \transition=\tuple{\state,\wop(\xvar,\data),\state'}   \;\;\;\;\;  \statemapping(\proc)=\state$ \\
              \hline
$\tuple{\statemapping,\buffermapping,\mem} \tsomovesto{\transition}\tuple{\statemapping\update{\proc}{\state'},\buffermapping\update{\proc}{(\xvar,\data)\app  \buffermapping(\proc)},\mem}$
            \end{tabular}
 \begin{tabular}{@{}c@{}}
        \\
             \;\;  {\sf Write}\\
        \\
      \end{tabular}
           \\
      \begin{tabular}{@{}c@{}}
      $\transition=\updateop_{\proc} $  \\
              \hline
$\tuple{\statemapping,\buffermapping\update{\proc}{  \buffermapping(\proc) \app(\xvar,\data)},\mem} \tsomovesto{\transition}\tuple{\statemapping,\buffermapping,\mem\update{\xvar}{\data}}$
            \end{tabular}
 \begin{tabular}{@{}c@{}}
        \\
             \;\;   {\sf Update}\\
        \\
      \end{tabular}
            \\
      \begin{tabular}{@{}c@{}}
      $\transition=\tuple{\state,\rop(\xvar,\data),\state'}   \;\;\;\; \; \statemapping(\proc)=\state  \;\;\;\; \;\buffermapping(\proc)|_{\{\xvar\} \times \dataset} =\tuple{\xvar,\data} \app \word$ \\
              \hline
$\tuple{\statemapping,\buffermapping,\mem} \tsomovesto{\transition}\tuple{\statemapping\update{\proc}{\state'},\buffermapping,\mem}$
            \end{tabular}
 \begin{tabular}{@{}c@{}}
        \\
             \;\;   {\sf Read-Own-Write}\\
        \\
      \end{tabular}
                  \\
      \begin{tabular}{@{}c@{}}
    $  \transition=\tuple{\state,\rop(\xvar,\data),\state'}   \;\;\;\; \; \statemapping(\proc)=\state  \;\;\;\; \;\buffermapping(\proc)|_{\{\xvar\} \times \dataset} =\epsilon   \;\;\;\; \; \mem(\xvar)=\data$\\
              \hline
$\tuple{\statemapping,\buffermapping,\mem} \tsomovesto{\transition}\tuple{\statemapping\update{\proc}{\state'},\buffermapping,\mem}$
            \end{tabular}
 \begin{tabular}{@{}c@{}}
        \\
             \;\;   {\sf Read from Memory}\\
        \\
      \end{tabular}
      \\
      \begin{tabular}{@{}c@{}}
    $  \transition=\tuple{\state,\arw(\xvar,\data,\data'),\state'}   \;\;\;\; \; \statemapping(\proc)=\state  \;\;\;\; \;\buffermapping(\proc)=\epsilon   \;\;\;\; \; \mem(\xvar)=\data$\\
              \hline
$\tuple{\statemapping,\buffermapping,\mem} \tsomovesto{\transition}\tuple{\statemapping\update{\proc}{\state'},\buffermapping,\mem\update{\xvar}{\data'}}$
            \end{tabular}
 \begin{tabular}{@{}c@{}}
        \\
             \;\;  {\sf ARW}\\
        \\
      \end{tabular}  
            \\
      \begin{tabular}{@{}c@{}}
    $  \transition=\tuple{\state,\fenceop,\state'}   \;\;\;\; \; \statemapping(\proc)=\state  \;\;\;\; \;\buffermapping(\proc)=\epsilon  $\\
              \hline
$\tuple{\statemapping,\buffermapping,\mem} \tsomovesto{\transition}\tuple{\statemapping\update{\proc}{\state'},\buffermapping,\mem}$
            \end{tabular}
 \begin{tabular}{@{}c@{}}
        \\
             \;\;  {\sf Fence}\\
        \\
      \end{tabular}  
 \end{tabular}
    \end{mdframed}

  \caption{The  transition relation  $\tsomovesto{}$ under  TSO semantics. Here, process $\proc \in \procs$ and transition $\transition \in \transitions_\proc \cup\set{\updateop_{\proc}}$  where
$\updateop_\proc$ is a transition that updates the memory
using the oldest message in the buffer of the process $\proc$.
  }
  
  \label{fig:oper:sem}
  
\end{figure} }

\paragraph*{\bf TSO-transition Relation.}
The {\it transition relation} $\tsomovesto{}$ between TSO-configurations is given by a set of rules, described in Figure~\ref{fig:oper:sem}. Here, we informally explain  these rules.  
A  {\it nop} transition   $\tuple{\state,\nop,\state'} \in \transitions_\proc$  changes  only  the local  state of the  process $\proc$ from $\state$ to $\state'$. 
A {\it write} transition $\tuple{\state,\wop(\xvar,\data),\state'} \in \transitions_\proc$   adds a new message $(\xvar,\data)$ to the tail 
of  the store buffer of the process $\proc$.  
A memory {\it update} transition  $\updateop_{\proc}$ can be performed at any time by removing the (oldest) message 
at the head 
of the store buffer of the process $\proc$ and updating the memory accordingly.  
For a  {\it read} transition $\tuple{\state,\rop(\xvar,\data),\state'} \in \transitions_\proc$, if the store buffer of the process  $\proc$ contains some write operations to $\xvar$, then the read value $\data$ must correspond to the value of the most recent 
such a write operation. Otherwise, the value $\data$ of $\xvar$ is fetched from the memory. 
A {\it fence} transition  $\tuple{\state,\fenceop,\state'}\in \transitions_\proc$ can be performed by the process $\proc$ only if its store buffer is empty. 
Finally, an {\it atomic read-write} transition  $\tuple{\state,\arw(\xvar,\data,\data'),\state'} \in \transitions_\proc$ can be performed by the process $\proc$ only if its store buffer is empty. This transition checks  whether the value of $\xvar$ in the memory is $\data$ and then changes it to $\data'$.

Let $\transitions':=\setcomp{\updateop_\proc}{\proc\in\procs}$, i.e.,\
$\transitions'$ contains all memory update transitions.
We use $\conf\tsomovesto{}\conf'$ to denote that
$\conf\tsomovesto{\transition}\conf'$ for some
$\transition\in\transitions\cup\transitions'$.
The transition system induced by  $\cprogram$ under the classical TSO semantics is then given by
$\tsys_{{\sf TSO}}:=\tuple{\tsoconfs,\{\initconf\},\transitions\cup\transitions',\tsomovesto{}}$.

\paragraph*{\bf The TSO Reachability Problem.} 
  A  global state $\statemapping_{target}$ is said to be {\it reachable} in   $\tsys_{\sf TSO}$  if and only if 
there  is a  TSO-configuration $\conf$ of the form ${\tuple{\statemapping_{target},\buffermapping,\mem}}$, with ${\buffermapping(\proc)=\epsilon  \; \; \text{for all} \; \proc \in \procs}$, such that $\conf$ is reachable  in $\tsys_{\sf TSO}$. 

The {\it TSO reachability problem} for   the concurrent system $\cprogram$ under the TSO semantics asks,
for a given  global state $\statemapping_{target}$, whether $\statemapping_{target}$ is reachable in $\tsys_{\sf TSO}$. 
Observe that, in the definition of the reachability problem,  we require that  the buffers of the configuration $\conf$  must be empty instead of being arbitrary. This is only for the sake of simplicity and does not constitute a restriction. Indeed, we can easily show that the  ``arbitrary buffer'' reachability problem is reducible to the  ``empty buffer'' reachability problem.  

 \begin{figure}[t]
\centering
 \begin{tikzpicture}
[background rectangle/.style={rounded corners,line width=1pt},show background rectangle]
\node [name=dummy]{};

\node[rectangle,rounded corners,align=center,minimum  height=20mm,font=\footnotesize,name=memory,line width=0.5pt,draw] 
at (dummy) {$x=0$\\$y=0$};
\node[font=\small,name=memoryrname] at ($(memory.west)+(5mm,12mm)$){memory};

\node[buffernode,font=\footnotesize,line width=0.5pt,anchor=east,name=b11] at ($(memory.west)+(-19mm,7mm)$){$\tuple{\xvar,2}$};
\node[buffernode,font=\footnotesize,line width=0.5pt,anchor=east,name=b21] at ($(memory.west)+(-15mm,-7mm)$){$\tuple{\yvar,1}$};
\node[buffernode,font=\footnotesize,line width=0.5pt,anchor=east,name=b22] at ($(b21.west)+(0mm,0mm)$){$\tuple{\xvar,1}$};

\node[font=\small,name=buffername] at ($(b11.west)+(5mm,5mm)$){store buffers};
\node[font=\small,name=tailname] at ($(b11.west)+(-2mm,-4mm)$){tail};
\node[font=\small,name=headname] at ($(b11.west)+(10mm,-4mm)$){head};

\node[name=q3,anchor=east,font=\footnotesize] at ($(b11.west)+(-18mm,-1.6mm)$) {$\tuple{\state_1,\rop(\yvar,0),\state_2}$};

\draw[fill=black] ($(q3.west)+(1.0mm,0mm)$) -- ($(q3.west)+(-1.0mm,1mm)$) -- ($(q3.west)+(-1.0mm,-0.7mm)$) -- cycle;
\node[name=p1,anchor=west,font=\footnotesize] at ($(q3.west)+(0mm,5mm)$){$\proc_1: \statemapping(\proc_1)=\state_1$};

\node [rounded corners,fit=(p1) (q3), name=proc1,line width=0.5pt,draw] {};
\node[font=\small,name=memoryrname] at ($(proc1.west)+(12mm,9mm)$){processes};

\node[name=s3,anchor=east,font=\footnotesize] at ($(b21.west)+(-22mm,-1.6mm)$) {$\tuple{\state'_2,\rop(\xvar,2),\state'_3}$};

\draw[fill=black] ($(s3.west)+(1.0mm,0mm)$) -- ($(s3.west)+(-1.0mm,1mm)$) -- ($(s3.west)+(-1.0mm,-0.7mm)$) -- cycle;
\node[name=p2,anchor=west,font=\footnotesize] at ($(s3.west)+(0mm,5mm)$){$\proc_2: \statemapping(\proc_2)=\state'_2$};

\node [rounded corners,fit=(p2)(s3),name=proc2,line width=0.5pt,draw] {};

\draw[transedge] ($(proc1.east)+(0mm,-1mm)$) -- (b11);
\draw[transedge] ($(proc2.east)+(0mm,-1mm)$) -- (b22);
\draw[transedge] (b11) -- ($(b11)+(24mm,0mm)$);
\draw[transedge] (b21) -- ($(b21)+(19.5mm,0mm)$);

\draw[transedge,dashed,line width=1pt,font=\footnotesize] ($(memory.west)+(0mm,-3mm)$)  to [in=320,out=150]
node[above, very near end, sloped]{$\rop(\yvar,0)$}  ($(proc1.east)+(0mm,-6mm)$);

\end{tikzpicture}
\caption{A reachable TSO-configuration  of the concurrent system in Figure~\ref{example:fig}.}
\label{reachable:tso-configuration}
\end{figure}

 \begin{figure}[t]
\centering
 \begin{tikzpicture}
[background rectangle/.style={rounded corners,line width=1pt},show background rectangle]
\node [name=dummy]{};

\node[rectangle,rounded corners,align=center,minimum  height=20mm,font=\footnotesize,name=memory,line width=0.5pt,draw] 
at (dummy) {$x=2$\\$y=1$};
\node[font=\small,name=memoryrname] at ($(memory.west)+(5mm,12mm)$){memory};



\node[name=p1,anchor=west,font=\footnotesize] at ($(q3.west)+(0mm,5mm)$){$\proc_1: \statemapping(\proc_1)=\state_2$};

\node [rounded corners,fit=(p1) (q3), name=proc1,line width=0.5pt,draw] {};
\node[font=\small,name=memoryrname] at ($(proc1.west)+(12mm,9mm)$){processes};


\node[name=p2,anchor=west,font=\footnotesize] at ($(s3.west)+(0mm,5mm)$){$\proc_2: \statemapping(\proc_2)=\state'_3$};

\node [rounded corners,fit=(p2)(s3),name=proc2,line width=0.5pt,draw] {};

\draw[transedge] ($(proc1.east)+(0mm,-1mm)$) -- ($(b11)+(24mm,0mm)$);
\draw[transedge] ($(proc2.east)+(0mm,-1mm)$) -- ($(b21)+(19.5mm,0mm)$);


\end{tikzpicture}
\caption{A reachable ``empty buffer''   TSO-configuration of the concurrent system in Figure~\ref{example:fig}.}
\label{reachable:tso-configuration1}
\end{figure}

\begin{exa}
\label{tso:example}

Figure~\ref{reachable:tso-configuration} illustrates a TSO-configuration $\conf$
that can be reached from the initial configuration $\initconf$
of the concurrent system in Figure~\ref{example:fig}.
To reach  this configuration,
the process $\proc_1$ has executed the  write
transition
$\tuple{\state_0,\wop(\xvar,2),\state_1}$
and
appended  
the message
$\tuple{\xvar,2}$
to its store buffer.
Meanwhile,
the process $\proc_2$ has executed two write
transitions
$\tuple{\state'_0,\wop(\yvar,1),\state'_1}$
and 
$\tuple{\state'_1,\wop(\xvar,1),\state'_2}$.
Hence, the store buffer of $\proc_2$ contains  the sequence  
$\tuple{\xvar,1}\app\tuple{\yvar,1}$.
Now, the process $\proc_1$
can perform
the  read transition
$\tuple{\state_1,\rop(\yvar,0),\state_2}$.
Since the buffer of $\proc_1$ does not contain
any pending write message on $\yvar$, 
the read value 
is fetched from the memory (represented by the dash arrow in Figure~\ref{reachable:tso-configuration}).
Then,
$\proc_1$
and $\proc_2$
perform
the following  sequence
of update transitions
$\updateop_{\proc_2}\app\updateop_{\proc_2}\app\updateop_{\proc_1}$
to empty their buffers
and update the memory to $\xvar=2$ and $\yvar=1$.
Finally,
$\proc_2$ performs the read transition
$\tuple{\state'_2,\rop(\xvar,2),\state'_3}$ (by 
reading from the memory) to reach to the configuration $\conf_{target}$ given in Figure~\ref{reachable:tso-configuration1}.
Observe that the buffers of both  processes are empty in $\conf_{target}$. Let $\statemapping_{target}$ be  the global state
in $\conf_{target}$
 defined as follows:
$\statemapping_{target}(\proc_1)=\state_2$
and 
$\statemapping_{target}(\proc_2)=\state'_3$. Therefore,
we can say that
the global state
$\statemapping_{target}$
is reachable
in 
$\tsys_{\sf TSO}$. 
\myend

\end{exa}


{
\begin{figure}[t]
\begin{mdframed}
  \centering

    \begin{tabular}{@{}c@{}}
      \begin{tabular}{@{}c@{}}
     $ \transition=\tuple{\state,\nop,\state'}  \;\;\;\; \; \statemapping(\proc)=\state$\\
              \hline
$\tuple{\statemapping,\buffermapping,\mem} \dtsomovesto{\transition}\tuple{\statemapping\update{\proc}{\state'},\buffermapping,\mem}$
            \end{tabular}
 \begin{tabular}{@{}c@{}}
        \\
             \;\;   {\sf Nop}\\
        \\
      \end{tabular}
      \\
      \begin{tabular}{@{}c@{}}
     $ \transition=\tuple{\state,\wop(\xvar,\data),\state'}   \;\;\;\;\;  \statemapping(\proc)=\state$ \\
              \hline
$\tuple{\statemapping,\buffermapping,\mem} \dtsomovesto{\transition}\tuple{\statemapping\update{\proc}{\state'},\buffermapping\update{\proc}{(\xvar,\data,{own})\app  \buffermapping(\proc)},\mem\update{\xvar}{\data}}$
            \end{tabular}
 \begin{tabular}{@{}c@{}}
        \\
             \;\;  {\sf Write}\\
        \\
      \end{tabular}
           \\
      \begin{tabular}{@{}c@{}}
      $\transition=\propagate_\proc^\xvar  \;\;\;\; \;  \mem(\xvar)=\data$  \\
              \hline
$\tuple{\statemapping,\buffermapping,\mem} \dtsomovesto{\transition}\tuple{\statemapping,\buffermapping\update{\proc}{(\xvar,\data)\app  \buffermapping(\proc)},\mem}$
            \end{tabular}
 \begin{tabular}{@{}c@{}}
        \\
             \;\;   {\sf Propagate}\\
        \\
      \end{tabular}
            \\
                  \begin{tabular}{@{}c@{}}
      $\transition=\delete_\proc  \;\;\;\; \; |m|=1 $  \\
              \hline
$\tuple{\statemapping,\buffermapping\update{\proc}{  \buffermapping(\proc) \app m},\mem} \dtsomovesto{\transition}\tuple{\statemapping,\buffermapping,\mem}$
            \end{tabular}
 \begin{tabular}{@{}c@{}}
        \\
             \;\;   {\sf Delete}\\
        \\
      \end{tabular}
            \\
                  \begin{tabular}{@{}c@{}}
      $\transition=\tuple{\state,\rop(\xvar,\data),\state'}   \;\;\;\; \; \statemapping(\proc)=\state  \;\;\;\; \;\buffermapping(\proc)|_{\{\xvar\} \times \dataset \times \{{own}\}} =\tuple{\xvar,\data,{own}} \app \word$ \\
              \hline
$\tuple{\statemapping,\buffermapping,\mem} \dtsomovesto{\transition}\tuple{\statemapping\update{\proc}{\state'},\buffermapping,\mem}$
            \end{tabular}
 \begin{tabular}{@{}c@{}}
        \\
             \;\;   {\sf Read-Own-Write}\\
        \\
      \end{tabular}
                  \\
      \begin{tabular}{@{}c@{}}
    $  \transition=\tuple{\state,\rop(\xvar,\data),\state'}   \;\;\; \statemapping(\proc)=\state  \;\;\;\buffermapping(\proc)|_{\{\xvar\} \times \dataset \times \{{own}\}} =\epsilon   \;\;\; \buffermapping(\proc)= \word \app \tuple{\xvar,\data}$\\
              \hline
$\tuple{\statemapping,\buffermapping,\mem} \dtsomovesto{\transition}\tuple{\statemapping\update{\proc}{\state'},\buffermapping,\mem}$
            \end{tabular}
 \begin{tabular}{@{}c@{}}
        \\
             \;   {\sf Read from  Buffer}\\
        \\
      \end{tabular}
      \\
      \begin{tabular}{@{}c@{}}
    $  \transition=\tuple{\state,\arw(\xvar,\data,\data'),\state'}   \;\;\;\; \; \statemapping(\proc)=\state  \;\;\;\; \;\buffermapping(\proc)=\epsilon   \;\;\;\; \; \mem(\xvar)=\data$\\
              \hline
$\tuple{\statemapping,\buffermapping,\mem} \dtsomovesto{\transition}\tuple{\statemapping\update{\proc}{\state'},\buffermapping,\mem\update{\xvar}{\data'}}$
            \end{tabular}
 \begin{tabular}{@{}c@{}}
        \\
             \;\;  {\sf ARW}\\
        \\
      \end{tabular}  
            \\
      \begin{tabular}{@{}c@{}}
    $  \transition=\tuple{\state,\fenceop,\state'}   \;\;\;\; \; \statemapping(\proc)=\state  \;\;\;\; \;\buffermapping(\proc)=\epsilon  $\\
              \hline
$\tuple{\statemapping,\buffermapping,\mem} \dtsomovesto{\transition}\tuple{\statemapping\update{\proc}{\state'},\buffermapping,\mem}$
            \end{tabular}
 \begin{tabular}{@{}c@{}}
        \\
             \;\;  {\sf Fence}\\
        \\
      \end{tabular}  
 \end{tabular}
    \end{mdframed}

  \caption{The induced transition relation  $\dtsomovesto{}$ under the Dual  TSO semantics. Here, process $\proc \in \procs$ and transition $\transition \in \transitions_\proc \cup \transitions'_\proc$  where  $\transitions'_\proc:=\setcomp{\propagate_{\proc}^{\xvar} , \delete_{\proc}}{\xvar\in\vars}$.    }
  
  \label{fig:oper:sem-dtso}
  
\end{figure}
}

\subsection{Dual TSO Semantics}
\label{dual-tso}

In this section, we define  the {\it Dual TSO} semantics.   The model  has a  FIFO {\em load} buffer between the main memory  and each process. 
This load buffer is used to store  {\em potential read} operations
that will be performed by the process. We allow this buffer to  {\it lose} messages at any time by deleting   the messages at its head in non-deterministic manner. 
Each message in the load buffer of a process $\proc$ is either a pair of the form  $(\xvar,\data)$ or a triple of the form $(\xvar,\data,{own})$ where $\xvar \in \vars$ and $\data \in \valset$. A message of the form $(\xvar,\data)$ corresponds to the fact  that  $\xvar$ has had the value  $\data$ in   the shared memory. Meanwhile, a message of the form $(\xvar,\data,{own})$  corresponds to the fact that the process $\proc$ has written the   value  $\data$ to  $\xvar$.
%
%
We say that a message $(\xvar,\data,{own})$ is 
 an {\it own-message}.

A {\it write} operation $\wop(\xvar,\data)$ of the process  $\proc$  immediately  updates the shared memory  and then appends a new own-message   $(\xvar,\data,{own})$   to the tail 
of the load buffer of $\proc$.
Read {\it propagation} is then performed by non-deterministically choosing a variable (let's say $\xvar$ and its value is  $\data$ in the shared memory) and appending the new message $(\xvar,\data)$ to the tail of the load buffer of  $\proc$. This propagation operation speculates on a read operation of $\proc$ on $\xvar$ that will be performed later on.
Moreover, 
{\it delete} operation of the process $\proc$ can be performed at any time
by removing the 
(oldest) message at the head 
of the load buffer of $\proc$.
A {\it read} operation $\rop(\xvar,\data )$  of the process $\proc$  can be executed if the   message at the head of the load buffer  of $\proc$ is of the form $(\xvar,\data)$  and there is no pending  own-message    of the form $(\xvar,\data',{own})$. 
In the case that the load buffer of $\proc$ contains some own-messages 
(i.e., of the form  $(\xvar,\data',{own})$), the  read value must correspond to the value of the most recent 
such an own-message.
Implicitly, this allows to simulate the  Read-Own-Write transitions in the TSO semantics.
A {\it fence} operation means that the load  buffer of  $\proc$
must be empty before $\proc$ can continue.
Finally, an {\it atomic read-write} operation  $\arw(\xvar,\data,\data')$ means that the load  buffer of  $\proc$
must be empty
and the value of the variable $\xvar$ in the memory is $\data$ before $\proc$ can continue. 
%
%
%
%

\paragraph{\bf DTSO-configurations.}
A {\it DTSO-configuration}
$\conf$ is a triple $\tuple{\statemapping,\buffermapping,\mem}$
where:
\begin{enumerate}
\item
 $\statemapping: \procs \mapsto\states $  is the {\em global state} of $\cprogram$.
 \item
$\buffermapping:\procs\mapsto\left((\vars\times\dataset) \cup (\vars\times\dataset\times \{own\})\right)^*$ is the content of the load buffer of each process.
\item
$\mem:\vars\mapsto \dataset$ gives the value of each  shared variable. 
\end{enumerate}

The  {\it initial} DTSO-configuration ${\dinitconf}$  is defined by   $\tuple{\initstatemapping,\initbuffermapping,\initmem}$
where, for all $\proc\in\procs$ and $\xvar \in \vars$, we have that
$\initstatemapping(\proc)=\initstate_\proc$,
$\initbuffermapping(\proc)=\emptyword$ and $\initmem(\xvar)=0$.

We  use
 $\dtsoconfs$ to denote the set of all DTSO-configurations.

\paragraph*{\bf DTSO-transition Relation.}
The   {\it transition relation} $\dtsomovesto{}$ 
between DTSO-configurations is given by a set of rules, described in Figure~\ref{fig:oper:sem-dtso}.
%
This  relation is induced by members of
  $\transitions\cup\transitions''$ where
 $\transitions'':=\setcomp{\propagate_{\proc}^{\xvar} , \delete_{\proc}}{\proc\in\procs ,\; \xvar \in \vars}$.
   
We informally explain the transition relation rules.
The {\it propagate} transition  $\propagate_\proc^\xvar$ speculates on a read operation of $\proc$ over $\xvar$ that will be executed later. This is done by appending a new message  $(\xvar,\data)$ to the tail 
of the load buffer of $\proc$ where $\data$ is the current value of $\xvar$ in the shared memory. 
The {\it delete} transition $\delete_{\proc}$  removes  the (oldest) message at the head 
of the load buffer of the process $\proc$.
A {\it write} transition $\tuple{\state,\wop(\xvar,\data),\state'}\in\transitions_\proc$ updates the memory and appends a new own-message $(\xvar,\data,{own})$ to the tail of the load buffer. 
A {\it read} transition $\tuple{\state,\rop(\xvar,\data),\state'}\in\transitions_\proc$ checks first if the load buffer of $\proc$ contains an own-message of the form  $(\xvar,\data',{own)}$. In that case, the read value $\data$ should correspond to the value of the most recent 
such an own-message. If there is no such message on the variable $\xvar$  in the load buffer of $\proc$,  then the value $\data$ of $\xvar$
is fetched from the (oldest) message at the head of the load buffer of $\proc$.

We use $\conf\dtsomovesto{}\conf'$ to denote that
$\conf\dtsomovesto{\transition}\conf'$ for some
$\transition\in\transitions\cup\transitions''$.
The transition system induced by $\cprogram$ under the Dual TSO semantics is then given by
$\tsys_{\sf DTSO}=\tuple{\dtsoconfs,\{{\dinitconf}\},\transitions\cup\transitions'',\dtsomovesto{}}$.

\paragraph*{\bf The DTSO Reachability Problem.}
The  {\it DTSO reachability problem} for   $\cprogram$ under the Dual TSO semantics is defined in a similar manner to the case of the TSO semantics.  A  global state $\statemapping_{target}$ is said to be {\it reachable} in   $\tsys_{\sf DTSO}$  if and only if 
there  is a DTSO-configuration $\conf$ of the form ${\tuple{\statemapping_{target},\buffermapping,\mem}}$, with ${\buffermapping(\proc)=\epsilon  \; \; \text{for all} \; \proc \in \procs}$, such that $\conf$ is reachable in $\tsys_{\sf DTSO}$. Then, the {\it DTSO reachability problem} consists in checking whether $\statemapping_{target}$ is reachable in  $\tsys_{\sf DTSO}$. 

\begin{exa}
\label{dtso:example}

Figure~\ref{reachable:dtso-configuration}
illustrates a DTSO-configuration $\conf'$ that can be reached from the initial configuration $\dinitconf$ of the concurrent system in Figure~\ref{example:fig}.
To reach  this configuration, a propagation operation 
is performed by appending the  message $\tuple{\yvar,0}$
into the load buffer of $\proc_1$.
Then, the 
process $\proc_2$
 executes two write transitions
$\tuple{\state'_0,\wop(\yvar,1),\state'_1}$
and $\tuple{\state'_1,\wop(\xvar,1),\state'_2}$ that 
update the shared memory to $\xvar=1$ and $\yvar=1$
and add two own-messages 
to the tail of the load buffer
of $\proc_2$. 
Hence, the load buffer of $\proc_2$
contains the sequence $\tuple{\xvar,1,{own}}\app\tuple{\yvar,1,{own}}$.
Then,
the process $\proc_1$ executes
the  write transition
$\tuple{\state_0,\wop(\xvar,2),\state_1}$ which updates 
 the shared memory  and appendes
the own-message $\tuple{\xvar,2,{own}}$
to the tail of the load buffer of $\proc_1$.
After that, 
a propagation operation appending 
the  message $\tuple{\xvar,2}$
into the load buffer of 
$\proc_2$ is performed. Hence, the value of $\xvar$  (resp.  $\yvar$ ) is $2$ (resp. $1$) in the shared memory. Furthermore,  the load buffer of $\proc_1$ (resp. $\proc_2$)
contains the following sequence
$\tuple{\xvar,2,{own}}\app\tuple{\yvar,0}$ (resp, 
$\tuple{\xvar,2}\app\tuple{\xvar,1,{own}}\app\tuple{\yvar,1,{own}}$).
Now from the configuration $\conf'$ (given in Figure~\ref{reachable:dtso-configuration}),
the process
$\proc_1$ can perform
a read transition $\tuple{\state_1,\rop(\yvar,0),\state_2}$.
Since
there is no pending
own-message
of the form $\tuple{\yvar,\data,{own}}$ for some $\data\in\dataset$
in the load buffer of $\proc_1$,
$\proc_1$ reads from the message at the head of its load buffer, i.e. the message $\tuple{\yvar,0}$ (represented by the dash arrow for $\proc_1$).
Then,
$\proc_2$ performs
two delete transitions
$\delete_{\proc_2}$
to remove
two own-messages
at the head of its load buffer. Now, the process $\proc_2$ can perform  the  read transition
$\tuple{\state'_2,\rop(\xvar,2),\state'_3}$ to read from its load buffer.
Finally,
$\proc_1$ and $\proc_2$ performs a sequence of delete transitions
$\delete_{\proc_1}\app\delete_{\proc_1}\app\delete_{\proc_2}$ to empty their load buffers,
reaching to the configuration $\conf'_{target}$ given in  Figure~\ref{reachable:dtso-configuration1}.  Let $\statemapping_{target}$ be  the global state
in $\conf'_{target}$
 defined as follows:
$\statemapping_{target}(\proc_1)=\state_2$
and 
$\statemapping_{target}(\proc_2)=\state'_3$. Therefore,
we can say that
the global state $\statemapping_{target}$
in $\conf'_{target}$
is reachable in 
$\tsys_{\sf DTSO}$.
\myend

\end{exa}

\begin{figure}[t]
\centering
 \begin{tikzpicture}
[background rectangle/.style={rounded corners,line width=1pt},show background rectangle]
\node [name=dummy]{};

\node[rectangle,rounded corners,align=center,minimum  height=20mm,font=\footnotesize,name=memory,line width=0.5pt,draw] 
at (dummy) {$x=2$\\$y=1$};
\node[font=\small,name=memoryname] at ($(memory.west)+(5mm,12mm)$){memory};

\node[buffernode,font=\footnotesize,line width=0.5pt,anchor=east,name=b11] at ($(memory.west)+(-16mm,7mm)$){$\tuple{\xvar,2,{own}}$};
\node[buffernode,font=\footnotesize,line width=0.5pt,anchor=east,name=b12] at ($(b11.west)+(0mm,0mm)$){$\tuple{\yvar,0}$};

\node[font=\small,name=buffername] at ($(b11.west)+(4mm,5mm)$){load buffers};
\node[font=\small,name=headname] at ($(b11.west)+(16mm,-5mm)$){tail};
\node[font=\small,name=headname] at ($(b11.west)+(-8mm,-5mm)$){head};

\node[buffernode,font=\footnotesize,line width=0.5pt,anchor=east,name=b21] at ($(memory.west)+(-7mm,-7mm)$){$\tuple{\xvar,2}$};
\node[buffernode,font=\footnotesize,line width=0.5pt,anchor=east,name=b22] at ($(b21.west)+(0mm,0mm)$){$\tuple{\xvar,1,{own}}$};
\node[buffernode,font=\footnotesize,line width=0.5pt,anchor=east,name=b23] at ($(b22.west)+(0mm,0mm)$){$\tuple{\yvar,1,{own}}$};

\node[name=q3,anchor=east,font=\footnotesize] at ($(b12.west)+(-17.5mm,-1.6mm)$) {$\tuple{\state_1,\rop(\yvar,0),\state_2}$};

\draw[fill=black] ($(q3.west)+(1.0mm,0mm)$) -- ($(q3.west)+(-1.0mm,1mm)$) -- ($(q3.west)+(-1.0mm,-0.7mm)$) -- cycle;
\node[name=p1,anchor=west,font=\footnotesize] at ($(q3.west)+(0mm,5mm)$){$\proc_1: \statemapping(\proc_1)=\state_1$};

\node [rounded corners,fit=(p1) (q3), name=proc1,line width=0.5pt,draw] {};
\node[font=\small,name=processname] at ($(proc1.west)+(13mm,9mm)$){processes};

\node[name=s3,anchor=east,font=\footnotesize] at ($(b23.west)+(-10mm,-1.6mm)$) {$\tuple{\state'_2,\rop(\xvar,2),\state'_3}$};

\draw[fill=black] ($(s3.west)+(1.0mm,0mm)$) -- ($(s3.west)+(-1.0mm,1mm)$) -- ($(s3.west)+(-1.0mm,-0.7mm)$) -- cycle;
\node[name=p2,anchor=west,font=\footnotesize] at ($(s3.west)+(0mm,5mm)$){$\proc_2: \statemapping(\proc_2)=\state'_2$};

\node [rounded corners,fit=(p2)(s3),name=proc2,line width=0.5pt,draw] {};

\draw[transedge] (b12) -- ($(proc1.east)+(0mm,-1mm)$);
\draw[transedge] (b23) -- ($(proc2.east)+(0mm,-1mm)$);
\draw[transedge] ($(b11)+(24mm,0mm)$) -- (b11);
\draw[transedge] ($(b21)+(11.5mm,0mm)$) -- (b21);

\draw[transedge,dashed,line width=1pt,font=\footnotesize] ($(b12.west)+(0mm,3mm)$)  to [in=30,out=150]
node[above,  sloped]{$\rop(\yvar,0)$}  ($(proc1.east)+(0mm,2mm)$);

\draw[transedge,dashed,line width=1pt,font=\footnotesize] ($(b21.west)+(5mm,-3mm)$)  to [in=320,out=200]
node[above, sloped]{$\rop(\xvar,2)$}  ($(proc2.east)+(0mm,-3mm)$);

\end{tikzpicture}
\caption{A reachable DTSO-configuration  of the concurrent system in Figure~\ref{example:fig}.}
\label{reachable:dtso-configuration}
\end{figure}

 \begin{figure}[t]
\centering
 \begin{tikzpicture}
[background rectangle/.style={rounded corners,line width=1pt},show background rectangle]
\node [name=dummy]{};

\node[rectangle,rounded corners,align=center,minimum  height=20mm,font=\footnotesize,name=memory,line width=0.5pt,draw] 
at (dummy) {$x=2$\\$y=1$};
\node[font=\small,name=memoryname] at ($(memory.west)+(5mm,12mm)$){memory};


%

%
\node[name=p1,anchor=west,font=\footnotesize] at ($(q3.west)+(0mm,5mm)$){$\proc_1: \statemapping(\proc_1)=\state_2$};

\node [rounded corners,fit=(p1) (q3), name=proc1,line width=0.5pt,draw] {};
\node[font=\small,name=processname] at ($(proc1.west)+(13mm,9mm)$){processes};

%
%

\node[name=p2,anchor=west,font=\footnotesize] at ($(s3.west)+(0mm,5mm)$){$\proc_2: \statemapping(\proc_2)=\state'_3$};

\node [rounded corners,fit=(p2)(s3),name=proc2,line width=0.5pt,draw] {};

\draw[transedge] ($(b11)+(24mm,0mm)$) -- ($(proc1.east)+(0mm,-1mm)$);
\draw[transedge] ($(b21)+(11.5mm,0mm)$) -- ($(proc2.east)+(0mm,-1mm)$);

\end{tikzpicture}
\caption{A reachable ``empty buffer''   DTSO-configuration of the concurrent system in Figure~\ref{example:fig}.}
\label{reachable:dtso-configuration1}
\end{figure}

\subsection{Relation between TSO and DTSO Reachability Problems}

The following theorem  states  the equivalence of the reachability problems under the TSO and Dual TSO semantics.
\begin{thm}[TSO-DTSO reachability equivalence]
\label{DTSO:TSO:equivalence:theorem}
A global state  $\statemapping_{target}$ is reachable in $\tsys_{\sf TSO}$ iff  $\statemapping_{target}$ is reachable in $\tsys_{\sf DTSO}$.\end{thm}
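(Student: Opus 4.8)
The plan is to prove the two implications separately, each by a simulation that transforms a computation of one semantics into a computation of the other reaching the same global state with empty buffers. In both directions the bridge is the \emph{store order}: the total order in which write operations become globally visible in the shared memory. Under TSO this is the order of the $\updateop_\proc$ (commit) events; under Dual TSO it is the order of the $\wop$ transitions, which update $\mem$ immediately. In either case this order is a linearization of all write operations that respects the program order of each individual process (the FIFO discipline of the respective buffer), and it completely determines how $\mem$ evolves over time, hence the value that every read-from-memory (resp. read-from-buffer) operation may legitimately observe.

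First I would treat the direction $\tsys_{\sf TSO} \Rightarrow \tsys_{\sf DTSO}$. Fix a TSO computation $\run$ ending in a configuration $\tuple{\statemapping_{target},\buffermapping,\mem}$ with all buffers empty. From $\run$ I extract, for each process, its sequence of executed transitions (a path in $\automaton_\proc$ ending in $\statemapping_{target}(\proc)$), the induced store order, and the \emph{observation map} recording for every read whether it was a read-own-write or a read-from-memory and which value it obtained. I then construct a Dual TSO computation realizing the same per-process transition sequences: each $\wop(\xvar,\data)$ now commits to $\mem$ immediately and in the same relative order as its TSO commit, which is possible precisely because both orders respect per-process order; I insert a $\propagate_\proc^\xvar$ right after the write that establishes in $\mem$ the value read by a TSO read-from-memory, and I insert $\delete_\proc$ transitions to (i) bring that message to the head of $\proc$'s load buffer in time for the read, (ii) discard each own-message exactly when the corresponding TSO write has committed, so that the ``no pending own-message on $\xvar$'' side condition of \textsf{Read from Buffer} holds precisely when the TSO read saw an empty $\xvar$-buffer, and (iii) flush every buffer at the end. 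Read-own-write transitions carry over directly, since the most recent own-message mirrors the most recent pending store. The correctness invariant is that, at each matched read, the Dual TSO memory (at the relevant propagate) and own-message contents agree with the TSO memory and pending-store contents, so every read obtains the same value and the same final global state is reached.

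The converse direction $\tsys_{\sf DTSO} \Rightarrow \tsys_{\sf TSO}$ is symmetric and exploits the reversal of the flow of information. Given a Dual TSO computation, I recover the store order from the order of its $\wop$ events and replay the same per-process transition sequences under TSO, now deferring each write into the store buffer and scheduling the matching $\updateop_\proc$ so that the commit order coincides with the recovered store order. A Dual TSO read-from-buffer obtained its value from a $\propagate$ that copied $\mem$ at some earlier instant; correspondingly I schedule the TSO read-from-memory at a point of the interleaving where $\mem$ still holds that value, i.e.\ after the write that produced it has committed but before the next write to that variable does, which is always possible because a propagated value is by definition a past memory value. Own-reads map to \textsf{Read-Own-Write} using the still-pending store, while $\fenceop$ and $\arw$ transitions transpose directly since both semantics require an empty buffer.

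The main obstacle, in both directions, is the FIFO bookkeeping of the load buffer together with the own-message shadowing. The essential point to get right is that a read-from-memory in TSO observes a value that was present in $\mem$ at some past moment, which is exactly the information a $\propagate$ message records; turning this observation into a correct schedule requires showing that the $\propagate$ and $\delete$ operations can always be interleaved so that each required value sits at the head of the load buffer at the moment of its read, that own-messages are shadowing (and then deleted) in lockstep with the presence (and commit) of the matching pending stores in TSO, and that all buffers can be emptied at the end. Making this scheduling argument precise---most cleanly through an explicit inductive invariant relating a TSO configuration and a Dual TSO configuration along the two matched computations---is where the real work lies; the remaining cases ($\nop$, $\fenceop$, $\arw$) are immediate because their side conditions (empty buffer) are identical in both semantics.
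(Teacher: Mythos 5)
Your overall strategy --- a bidirectional run transformation that aligns the order in which writes become visible in memory (the TSO commit order on one side, the order of $\wop$ transitions on the other) and places a $\propagate$ at the moment a read's value is current in $\mem$ --- is the same as the paper's, and the skeleton of both directions is sound. There is, however, one concrete inconsistency in the TSO $\Rightarrow$ DTSO direction. You schedule each Dual TSO $\wop$ transition at the position of its TSO commit (you must: in Dual TSO the write updates $\mem$ atomically with the transition, and the two memory evolutions have to agree for the propagated values to be correct). But then the own-message $(\xvar,\data,{own})$ is \emph{created} at the commit position, so your instruction to ``discard each own-message exactly when the corresponding TSO write has committed'' would delete it at birth, and every \textsf{Read-Own-Write} in the simulated run would fail --- contradicting your own remark that read-own-writes ``carry over directly.''

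The correct alignment is the mirror image of what you wrote: in the constructed Dual TSO run the own-message of a write $w$ lives \emph{after} its commit position $u_w$, surviving until the replay of all reads of $\proc$ that in the TSO run executed while $w$ was still pending (these are exactly the reads $r$ with execution time $\tau_r<u_w$; they are replayed after $u_w$ because they follow $w$ in program order), and it is deleted just before the replay of $\proc$'s first read on $\xvar$ that in TSO executed after $u_w$. The shadowing periods thus agree in terms of $\proc$'s program position, not in global time. One must also check that this deletion discipline is compatible with the FIFO constraint, i.e.\ that no propagated message needed by an intermediate \textsf{Read from Buffer} sits \emph{behind} an own-message that must still be preserved; this holds because any read $r''$ lying in program order between $w$ and a \textsf{Read-Own-Write} of $w$ satisfies $\tau_{r''}<u_w$, so its propagated message enters the load buffer before $w$'s own-message does. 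With this correction your construction, together with the coupling invariant you propose, goes through.
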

\begin{proof}
The proof of this theorem can be found   in Appendix~\ref{proofs:DTSO:TSO:equivalence:theorem}.
\end{proof}

%

\begin{exa}
In the Example~\ref{tso:example} and Example~\ref{dtso:example},
we have shown that
the global state
$\statemapping_{target}$ (defined by 
$\statemapping_{target}(\proc_1)=\state_2$
and $\statemapping_{target}(\proc_2)=\state'_3$)
is both reachable in 
$\tsys_{\sf TSO}$
and 
$\tsys_{\sf DTSO}$
for the concurrent system given in Figure~\ref{example:fig}.
\myend
\end{exa}



\section{The DTSO Reachability Problem}
\label{dual-reachability:section}

In this section, we show the {\it decidability} of the DTSO reachability problem by  making use of the framework of \textit{Well-Structured Transition Systems} (\Wsts) \cite{abdulla-general-96,FinkelS01}. First, we briefly recall the framework of {\Wsts}. Then, we instantiate it   to show the decidability  of the  DTSO reachability problem.
Following Theorem~\ref{DTSO:TSO:equivalence:theorem}, we also obtain the decidability of the TSO reachability problem.

\subsection{Well-structured Transition Systems}
\label{general:well-structured-transition-systems}
 
Let $\tsys=\tstuple$ be a transition system. Let $\genordering$ be a   {\em well-quasi-ordering}   on $\allconfs$. 
 Recall that a well-quasi-ordering on $\allconfs$   is a binary relation over $\allconfs$ that is reflexive and transitive; and for every
infinite sequence $(\conf_i)_{i\ge0}$ of elements in $\allconfs$, 
there exist $i,j\in\nat$ such that $i<j$ and $\conf_i\genordering \conf_j$. 

A set ${\tt U} \subseteq \allconfs$ is called
 {\em upward closed} if for every $\conf \in {\tt U}$ and $\conf' \in \allconfs$ with $ \conf \genordering \conf'$,
 we have $\conf' \in {\tt U}$.  It is known that every upward closed set ${\tt U}$
 can be characterised by a finite {\em minor set} ${\tt M} \subseteq {\tt U}$ such that: (i) for every $\conf \in {\tt U}$, there is $\conf'\in {\tt M}$ such that $\conf' \genordering \conf$; and (ii)  if $\conf, \conf'\in {\tt M}$ and $\conf \genordering \conf'$, then $\conf = \conf'$.	
We use $\minn({\tt U})$ to denote 
for a given  upward closed set ${\tt U}$ 
its 
minor set.

Let ${\tt D} \subseteq \allconfs$. The upward closure of  ${\tt D}$ is defined
 as $\upclosure{{\tt D}} := \setcomp{\conf' \in \allconfs }{\exists \conf \in {\tt D} \text{ with } 
 \conf \genordering \conf'}$. We also define the set of {\it predecessors} of ${\tt D}$ as $\preof{\tsys}{{\tt D}}  := \setcomp{\conf}{\exists\conf_1\in {\tt D},a\in\actions, \conf\by{a} \conf_1}$.  For a finite set of configurations ${\tt M} \subseteq \allconfs$, we use $\minpreof{{\tt M}}$ to denote
$\minnof{\preof{\tsys}{ \upclosure{{\tt M}} } \cup \upclosure{{\tt M}}}$.

The transition relation  $\by{}$ is said to be \textit{ monotonic} wrt.\ the ordering $\genordering$ if, given  $\conf_1, \conf_2, \conf_3 \in \allconfs$ where $\conf_1 \by{} \conf_2$ and $\conf_1 \genordering \conf_3$, we can compute a configuration $\conf_4 \in \allconfs$ and a run $\run$ such that  $\conf_3 \by{\run} \conf_4$ and $\conf_2 \genordering \conf_4$.
The pair   $(\tsys,\genordering)$ is called a {\em monotonic transition system}  if    $\by{}$ is \textit{ monotonic} wrt.\ $\genordering$.

 Given  a {\em finite} set of configurations ${\tt M} \subseteq \allconfs$, the {\em coverability problem} of ${\tt M}$ in  
 the monotonic transition system $(\tsys,\genordering)$
asks whether the set $\upclosure{{\tt M}} $ is  {\it reachable} in $\tsys$; i.e.\ there exist two configurations $\conf_1$ and $\conf_2$ such that $\conf_1\in{\tt M}$, $\conf_1\genordering\conf_2$, and $\conf_2$ is reachable in $\tsys$.

For the {\it decidability} of this problem, the following {\bf three conditions} are sufficient:
\begin{enumerate}
\item
For every two configurations $\conf_1$ and $\conf_2$, it is decidable
whether $\conf_1 \genordering \conf_2$.
\item 
For  every $\conf\in \allconfs$, we can check whether 
$\upclosure{\set{\conf}} \cap \initconfs \neq \emptyset$.
\item
 For  every $\conf\in \allconfs$, the set \textup{$\minpreof{\set{\conf}}$} is finite and computable.
\end{enumerate}
The  solution for the coverability problem as suggested in 
\cite{abdulla-general-96,FinkelS01}
is based on a backward analysis approach. It is shown that starting 
from a finite set ${\tt M}_0 \subseteq \allconfs$, the sequence $({\tt M}_i)_{i\ge0}$
with ${\tt M}_{i+1} := \minpreof{{\tt M}_i}$, for $i\ge0$, reaches a 
fixpoint and it is computable.

\subsection{DTSO-transition System is a \bfseries{\scshape{Wsts}} }

In this section, we instantiate the framework  of {\Wsts} to show the following result:

\begin{thm}[Decidability of DTSO reachability problem]
\label{decidability-dtso}
The DTSO reachability problem is decidable.
\end{thm}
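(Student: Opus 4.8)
The plan is to instantiate the \Wsts{} framework of Section~\ref{general:well-structured-transition-systems} on $\tsys_{\sf DTSO}$, after recasting the reachability question as a coverability one. Let ${\tt M}$ be the finite set of DTSO-configurations of the form $\tuple{\statemapping_{target},\buffermapping,\mem}$ with $\buffermapping(\proc)=\emptyword$ for all $\proc\in\procs$, ranging over the finitely many memory valuations $\mem$. The crucial observation is that every $\delete_\proc$ transition is enabled on a non-empty buffer and alters neither $\statemapping$ nor $\mem$; hence any reached member of $\upclosure{\tt M}$ can be drained into a member of ${\tt M}$ while keeping the global state $\statemapping_{target}$. Together with ${\tt M}\subseteq\upclosure{\tt M}$ this shows that $\statemapping_{target}$ is reachable in $\tsys_{\sf DTSO}$ iff $\upclosure{\tt M}$ is reachable, i.e.\ iff the coverability problem of ${\tt M}$ has a positive answer.

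For the ordering I would set $\conf\genordering\conf'$, where $\conf=\tuple{\statemapping,\buffermapping,\mem}$ and $\conf'=\tuple{\statemapping',\buffermapping',\mem'}$, to hold iff $\statemapping=\statemapping'$, $\mem=\mem'$, $\buffermapping(\proc)\wordering\buffermapping'(\proc)$ for every $\proc\in\procs$ (subword ordering over the finite message alphabet), and, in addition, for every $\proc$ and every variable $\xvar$ the buffers $\buffermapping(\proc)$ and $\buffermapping'(\proc)$ carry the same \emph{most recent} own-message on $\xvar$ (the same own-value, or none in both). The last clause records exactly the information that decides whether a read on $\xvar$ is served as a Read-Own-Write or from the head of the buffer. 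As the global state, the memory, and the per-variable ``most recent own'' map each range over a finite set, $\genordering$ partitions $\dtsoconfs$ into finitely many classes on which it coincides with the subword ordering; by Higman's lemma each class is well-quasi-ordered, and elements of different classes are incomparable, so $\genordering$ is a well-quasi-ordering.

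Next I would prove that $\dtsomovesto{}$ is monotonic wrt.\ $\genordering$: given $\conf_1\dtsomovesto{}\conf_2$ and $\conf_1\genordering\conf_3$, I simulate the step from $\conf_3$. For \textsf{Nop}, \textsf{Write}, \textsf{Propagate} and \textsf{Read-Own-Write} the same rule is enabled in $\conf_3$, since the pinned memory and ``most recent own'' map agree and appending the same message (or touching only the local state) preserves both the subword relation and the map. The delicate rule is \textsf{Read from Buffer}: $\conf_1$ has no own-message on the read variable $\xvar$ and exposes $\tuple{\xvar,\data}$ at the head, and because $\genordering$ pins the map, $\conf_3$ likewise has no own-message on $\xvar$. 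The extra messages $\conf_3$ holds in front of the matched head are then all shadowed, so a bounded run of $\delete_\proc$ transitions removes them without perturbing the map and re-exposes $\tuple{\xvar,\data}$, after which the read fires, yielding a $\conf_4$ with $\conf_2\genordering\conf_4$. \textsf{Delete} is handled similarly, deleting down to and including the last own-message on a variable when that is what erases it from $\conf_1$; and \textsf{Fence} and \textsf{ARW}, which demand an empty buffer, are simulated by first draining $\conf_3$ (whose map is then necessarily empty of owns). Keeping the ``most recent own'' map synchronized across the read and delete rules is where I expect the main difficulty to lie.

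Finally I would discharge the three conditions. Condition~(1) is clear, as $\genordering$ reduces to finitely many equalities and a subword test, both decidable. For condition~(2), the unique initial configuration $\dinitconf$ has empty buffers, initial local states and zero memory, so $\upclosure{\set\conf}\cap\set{\dinitconf}\neq\emptyset$ iff $\conf\genordering\dinitconf$, which forces $\conf=\dinitconf$ and is decidable. For condition~(3), I would compute $\minpreof{\set\conf}=\minnof{\preof{\tsys_{\sf DTSO}}{\upclosure{\set\conf}}\cup\upclosure{\set\conf}}$ by inverting each DTSO rule: every rule is undone by a local, effective edit of $\statemapping$, $\mem$ and a bounded portion of the buffers, and the well-quasi-ordering guarantees that the resulting minor set is finite and hence computable. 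With monotonicity and the three conditions in place, the backward iteration ${\tt M}_{i+1}:=\minpreof{{\tt M}_i}$ terminates and decides the coverability problem of ${\tt M}$, which by the reduction above decides the DTSO reachability problem.
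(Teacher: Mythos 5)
Your overall architecture coincides with the paper's: recast reachability of $\statemapping_{target}$ as coverability of the finite set of empty-buffer configurations carrying that global state, equip $\dtsoconfs$ with a well-quasi-ordering under which $\dtsomovesto{}$ is monotonic, and run the backward $\minpre$ iteration. The reduction to coverability, the initial-configuration check, and the computability of $\minpre$ are all in line with the paper. The gap is in your ordering. You keep only the plain subword relation $\wordering$ on each buffer together with agreement of the per-variable ``value of the most recent own-message, or none in both'' map. This is strictly weaker than the paper's ordering, which decomposes each buffer via $\ownof{\word}$ into fragments delimited by the most-recent own-messages and demands that these distinguished own-messages occur in the same order with the same values \emph{and} that the embedding send the $i$-th fragment into the $i$-th fragment. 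Your relation is still a well-quasi-ordering, but it is not monotonic, and the failure sits exactly in the \textsf{Delete} case that you dismiss as ``handled similarly''.

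Concretely, take variables $x,y$, let $\conf_1$ and $\conf_3$ agree on the global state and on a memory with $\mem(y)=7$, and let the buffer of $\proc$ be $(y,5)\app(x,1,{own})$ in $\conf_1$ and $(x,1,{own})\app(y,5)\app(x,1,{own})$ in $\conf_3$ (leftmost is the tail/newest, rightmost is the head). Both buffers have most-recent own-value $1$ on $x$ and no own-message on $y$, and the first is a subword of the second, so $\conf_1\genordering\conf_3$ under your definition. Now $\conf_1\dtsomovesto{\delete_\proc}\conf_2$ removes the head and leaves the buffer $(y,5)$, which contains no own-message on $x$. Any $\conf_4$ with $\conf_2\genordering\conf_4$ must therefore have no own-message on $x$ in $\proc$'s buffer while still containing $(y,5)$. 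But from $\conf_3$ that buffer only loses elements at its right end and gains elements at its left end, so the leftmost $(x,1,{own})$ can only vanish by emptying the entire buffer, which destroys $(y,5)$; and $(y,5)$ cannot be re-introduced by $\propagate_\proc^\yvar$ since the memory holds $y=7$, nor by any detour that rewrites $y$, since the global state and memory of $\conf_2$ must be restored (take acyclic automata to block this). Hence no suitable $\conf_4$ exists, monotonicity fails, and the \Wsts{} machinery does not apply to your ordering: $\preof{}{\cdot}$ of an upward-closed set need no longer be upward-closed, so the backward iteration does not decide coverability. The paper's fragment-wise ordering excludes this very pair, because the portion of $\conf_3$'s buffer newer than its most-recent own-message on $x$ is empty and cannot embed $(y,5)$.
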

\begin{proof}
The rest of this section is devoted to the proof of the above theorem.
 Let $\cprogram=\cstuple$ be a concurrent system (as defined in Section \ref{definitions:section}). Moreover, let  $\tsys_{\sf DTSO}=\tuple{\dtsoconfs,\{{\dinitconf}\},\transitions\cup\transitions'',\dtsomovesto{}}$ be  the transition system induced by $\cprogram$ under the Dual TSO semantics (as defined in Section \ref{dual-tso}).   
 
In the following, 
we will show that the DTSO-transition system $\tsys_{\sf DTSO}$
is monotonic wrt.\ an ordering $\genordering$.
Then, we will  show the three sufficient conditions for the decidability of the coverability problem for  $(\tsys_{\sf DTSO},\genordering)$
(as stated in Section~\ref{general:well-structured-transition-systems}). 
\begin{enumerate}
\item
We   first define the ordering $\genordering$ on the set of DTSO-configurations $\dtsoconfs$ (see Section~\ref{ordering:dtso:transition-system}).
\item
Then, we show that the transition system $\tsys_{\sf DTSO}$ induced under the Dual TSO semantics is  monotonic wrt.\ the ordering $\genordering$ (see Lemma \ref{monotonicity-dtso}). 
\item
For the first sufficient condition, we show that  $\genordering$
is a well-quasi-ordering; and that
for every two configurations $\conf_1$ and $\conf_2$, it is decidable whether $\conf_1 \genordering \conf_2$ (see Lemma \ref{well-quasi-order}). 
\item
The second  sufficient condition  (i.e., checking whether the upward closed set $\upclosure{\set{\conf}}$, with $\conf$ is a DTSO-configuration,  contains the initial configuration $\dinitconf$)  is trivial. This check  boils down to  verifying  whether $\conf$ is the  initial configuration $\dinitconf$.  
\item
For the third sufficient condition, we show that we can calculate the set of minimal DTSO-configurations for the set of predecessors of any upward closed set  (see Lemma~\ref{compute-pre-DTSO}).
\item
Finally, we will  also show that  the DTSO reachability problem for $\cprogram$ can be reduced to the coverability problem in the monotonic transition system $(\tsys_{\sf DTSO},\genordering)$ (see Lemma \ref{eq-reach-upw}). Observe that this reduction is needed since we are requiring that the load buffers are empty when defining the DTSO reachability
problem.
\end{enumerate}

This concludes the proof of Theorem~\ref{decidability-dtso}.
\end{proof}

\subsubsection{\bf Ordering $\bm{\genordering}$}
\label{ordering:dtso:transition-system}
In the following, we  define an ordering $\genordering$ on  the set of DTSO-configurations $\dtsoconfs$. Let us first introduce some notations and definitions.

Consider a word
$\word\in\left((\vars\times\dataset) 
\cup (\vars\times\dataset\times \{ own\})\right)^*$
representing the content of a  load buffer. 
 We  define an operation that divides
$\word$ into a number of fragments
according to the most-recent own-message concerning  each variable.  
 We define 
$$\ownof\word:=
\left( \word_1, (\xvar_1,\data_1,{ own}), \word_2,\ldots,\word_{m}, (\xvar_m,\data_m,{ own}),\word_{m+1}\right)$$
where  the following conditions are satisfied: 
\begin{enumerate}
\item
$\xvar_i\neq\xvar_j$ for all $i, j: i \neq j$ and $1\leq i, j \leq m$. 
\item
If 
$(\xvar,\data,{ own})\in\word_{i}$ for some $i: 1 < i \leq m+1$, then
$\xvar=\xvar_j$ for some $j: 1\leq j < i $, i.e., the most recent own-message on variable $\xvar_j$ occurs
at the $(2j)^{\it th}$ fragment of $\ownof\word$.
\item
$\word=
\word_1\app (\xvar_1,\data_1,{ own})\app
\word_2 \cdots
\word_{m}\app (\xvar_m,\data_m,{ own})\app \word_{m+1}$, i.e., 
the  divided fragments correspond to the given word $\word$.
\end{enumerate}

Let $\word,\word'\in\left((\vars\times\dataset) 
\cup (\vars\times\dataset\times \{ own\})\right)^*$ be two words. Let us assume that: 
$$\ownof\word=
( \word_1, (\xvar_1,\data_1,{ own}),\word_2, \ldots, 
\word_{r}, (\xvar_r,\data_r,{ own}),\word_{r+1})$$
$$
\ownof{\word'}=
( \word'_1, (\xvar'_1,\data'_1,{ own}),
\word'_2, \ldots,
\word'_{m}, (\xvar'_m,\data'_m,{ own}),\word'_{m+1}).$$
We write $\word\genordering\word'$ to denote that
the following conditions are satisfied:
\begin{enumerate}
\item
$r=m$,
\item
$x'_i=x_i$ and $\data'_i=\data_i$ for all $i:1\leq i\leq m$, and
\item
$\word_i\wordering\word'_i$ for 
all $i:1\leq i\leq m+1$.
\end{enumerate}

Consider two DTSO-configurations  $\conf=\tuple{\statemapping,\buffermapping,\mem}$ and $\conf'=\tuple{\statemapping',\buffermapping',\mem'}$, we extend the ordering  $\genordering$ to configurations as follows: 
We write
$\conf \genordering \conf'$ if and only if the following conditions are satisfied: 
\begin{enumerate}
\item
$\statemapping=\statemapping'$, 
\item
$\buffermapping(\proc)\genordering\buffermapping'(\proc)$ for all process $\proc \in \procs$, and 
\item
$\mem'=\mem$. 
\end{enumerate}

\subsubsection{\bf Monotonicity.}
Let $\conf_1=\tuple{\statemapping_1,\buffermapping_1,\mem_1}, \conf_2=\tuple{\statemapping_2,\buffermapping_2,\mem_2}, \conf_3=\tuple{\statemapping_3,\buffermapping_3,\mem_3} \in \dtsoconfs$ such that $\conf_1 \by{\transition}_{\sf DTSO} \conf_2$ for some  $\transition  \in\transitions_\proc \cup \setcomp{\propagate_{\proc}^{\xvar} , \delete_{\proc}}{\xvar\in\vars}$ with   $\proc \in \procs$, and $\conf_1 \genordering \conf_3$. We will show that it is possible to {compute} a configuration $\conf_4 \in \dtsoconfs$ and a run $\run$ such that $\conf_3 \by{\run}_{\sf DTSO} \conf_4$ and $\conf_2 \genordering \conf_4$. 

To that aim, we  first show that it is possible from $\conf_3$ to reach  a configuration $\conf'_3$, by performing a  certain number of  $\delete_\proc$ transitions,  such that the process $\proc$ will have  the same last message in its load buffer  in the configurations $\conf_1$ and $\conf'_3$ while  $\conf_1 \genordering \conf'_3$. Then, from the configuration $\conf'_3$, the process $\proc$ can   perform  the same transition $\transition$ as  $\conf_1$ did (to reach $\conf_2$) in order to reach  the configuration $\conf_4$ such that $\conf_2 \genordering \conf_4$.   Let us assume that $\ownof{\buffermapping_1(\proc)}$ is of the form 
$$\left( \word_1,  (\xvar_1,\data_1,{ own}),
\word_2,\ldots,\word_{m}, (\xvar_m,\data_m,{ own}),\word_{m+1}\right)$$
and
$\ownof{\buffermapping_3(\proc)}$ is of the form $$
\left( \word'_1, (\xvar'_1,\data'_1,{ own}),
\word'_2,\ldots,
\word'_{m}, (\xvar'_m,\data'_m,{ own}),\word'_{m+1}\right).$$ We define the word $\word \in\left((\vars\times\dataset) 
\cup (\vars\times\dataset\times \{ own\})\right)^*$ to be the longest word such that $\word'_{m+1}= \word' \cdot  \word$ with $\word_{m+1} \preceq \word'$. Observe that in this case we have either $\word_{m+1}=\word'=\epsilon$ or $\word'(|\word'|)=\word_{m+1}(|\word_{m+1}|)$. Then, after executing a certain number $|w|$ of  $\delete_\proc$ transitions from the configuration $\conf_3$, one can obtain a configuration $\conf'_3=\tuple{\statemapping_3,\buffermapping'_3,\mem_3} $ such that $\buffermapping_3= \buffermapping'_3 \update{\proc}{\buffermapping'_3(\proc) \cdot \word}$. As a consequence, we have $\conf_1 \genordering \conf'_3$. Furthermore, since $\conf_1$ and $\conf'_3$ have the same global state, the same memory valuation,  the same sequence of most-recent own-messages concerning each variable, and the same last message in the load buffers of $\proc$,  $\conf'_3$ can perform  the  transition $\transition$  and  reaches  a configuration $\conf_4$ such that $\conf_2 \genordering \conf_4$.  

The following lemma shows  that   $(\tsys_{\sf DTSO},\genordering)$ is a monotonic transition system.

\begin{lem}[DTSO monotonic transition system]
\label{monotonicity-dtso}
The transition relation $\dtsomovesto{}$ is  monotonic wrt. the ordering $\genordering$.
\end{lem}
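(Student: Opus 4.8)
The plan is to establish monotonicity by a case analysis on the transition $\transition \in \transitions_\proc \cup \setcomp{\propagate_\proc^\xvar,\delete_\proc}{\xvar\in\vars}$ that takes $\conf_1$ to $\conf_2$, following exactly the informal argument sketched just above the lemma statement. The core idea is that the ordering $\genordering$ compares load buffers via the $\ownof{\cdot}$ decomposition: the sequences of most-recent own-messages must coincide exactly (same variables, same values, same order), and the intervening fragments must be related by the subword ordering $\wordering$. So the strategy is: given $\conf_1 \by{\transition}_{\sf DTSO} \conf_2$ and $\conf_1 \genordering \conf_3$, first massage $\conf_3$ by firing $\delete_\proc$ transitions to align the relevant \emph{tail} of $\proc$'s buffer, then replay $\transition$ from the aligned configuration.

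First I would formalize the alignment step, which is the heart of the construction. Since $\conf_1 \genordering \conf_3$ implies $\buffermapping_1(\proc) \genordering \buffermapping_3(\proc)$, the own-message skeletons agree, and the last fragments satisfy $\word_{m+1} \wordering \word'_{m+1}$. Using the word $\word$ defined as the longest suffix of $\word'_{m+1}$ past the point where $\word_{m+1}$ embeds, I would fire exactly $|\word|$ delete transitions on $\proc$'s buffer in $\conf_3$ to reach $\conf'_3 = \tuple{\statemapping_3,\buffermapping'_3,\mem_3}$ with $\buffermapping_3 = \buffermapping'_3\update{\proc}{\buffermapping'_3(\proc)\cdot\word}$. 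I must check two things here: that these deletes are legal (they only remove head messages, which is always permitted by the \textsf{Delete} rule), and that $\conf_1 \genordering \conf'_3$ still holds (the own-skeleton is untouched since deletes act on the head and $\word$ lies strictly in the last fragment $\word'_{m+1}$ after the final own-message, so no own-message is lost). The defining property of $\word$ guarantees that $\conf_1$ and $\conf'_3$ now share the same last buffer message for $\proc$.

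Next I would verify, case by case, that $\conf'_3$ can replay $\transition$ to reach some $\conf_4$ with $\conf_2 \genordering \conf_4$. The decisive observation is that every enabling condition for a DTSO transition of $\proc$ depends only on data that $\conf_1$ and $\conf'_3$ share: the global state ($\statemapping_1=\statemapping_3$), the memory ($\mem_1=\mem_3$), the presence and value of the most-recent own-message on the read variable (determined by the own-skeleton, which is identical), whether the buffer is empty (for \textsf{Fence}, \textsf{ARW}, which I must treat carefully since emptiness of $\buffermapping'_3(\proc)$ must follow from emptiness of $\buffermapping_1(\proc)$), and the head message for \textsf{Read from Buffer} (which is the shared last message). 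For reads, writes, propagates, deletes and the local rules, replaying $\transition$ and checking that the effect preserves $\genordering$ is a routine verification, since $\genordering$ is preserved under appending identical own-messages or identical propagated messages to related buffers, and under identical memory updates.

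The main obstacle I expect is the emptiness-sensitive transitions \textsf{Fence} and \textsf{ARW}, together with \textsf{Read from Buffer}. For \textsf{Fence}/\textsf{ARW}, $\transition$ requires $\buffermapping_1(\proc)=\epsilon$; I need $\buffermapping'_3(\proc)=\epsilon$ as well, which I would argue from $\conf_1 \genordering \conf'_3$: when $\buffermapping_1(\proc)=\epsilon$ the own-skeleton is empty and $\word_{m+1}=\epsilon$, forcing $\word'=\epsilon$, hence $\buffermapping'_3(\proc)=\epsilon$ after the alignment deletes. For \textsf{Read from Buffer}, the subtlety is that the read consumes the \emph{head}, and I must ensure the aligned head in $\conf'_3$ carries the correct value $\tuple{\xvar,\data}$; this is where the ``same last message'' property from the alignment, combined with the absence of an own-message on $\xvar$, does the work. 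Once these boundary cases are dispatched, the remaining cases are mechanical, and the lemma follows.
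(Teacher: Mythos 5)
Your proposal follows essentially the same route as the paper: the same alignment step that fires $|\word|$ many $\delete_\proc$ transitions to make the head fragments coincide while preserving $\conf_1\genordering\conf'_3$, the same observation that enabledness of each rule depends only on data shared by $\conf_1$ and $\conf'_3$ (global state, memory, own-message skeleton, emptiness, head message), and the same handling of the \textsf{Fence}/\textsf{ARW} and \textsf{Read from Buffer} corner cases. One small correction: under this semantics \textsf{Read from Buffer} does not consume the head message (the buffer is left unchanged; only \textsf{Delete} removes messages), so that case is even easier than you anticipate.
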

\begin{proof}
The proof of the lemma is given in Appendix~\ref{proofs:monotonicity-dtso}.
\end{proof}


\subsubsection{\bf Conditions of Decidability.}
We show the first and the third conditions of the three conditions for the decidability of the coverability problem for  $(\tsys_{\sf DTSO},\genordering)$ (as stated in Section~\ref{general:well-structured-transition-systems}). The second condition has been shown to be trivial in the main proof of Theorem~\ref{decidability-dtso}.

The following lemma shows that the ordering $\genordering$ is indeed a well-quasi-ordering.

\begin{lem}[Well-quasi-ordering $\genordering$]
\label{well-quasi-order}
The ordering $\genordering$ is  a well-quasi-ordering over  $\dtsoconfs$. Furthermore, for every two DTSO-configurations $\conf_1$ and $\conf_2$, it is decidable
whether $\conf_1 \genordering \conf_2$.
\end{lem}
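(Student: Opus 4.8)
The plan is to derive the well-quasi-ordering property from Higman's Lemma together with the standard closure of well-quasi-orderings under finite products and finite disjoint unions, and to obtain decidability by exhibiting the own-decomposition and the subword tests as effective procedures.

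First I would fix the finite alphabet $\Sigma := (\vars\times\dataset)\cup(\vars\times\dataset\times\set{own})$, which is finite because $\vars$ and $\dataset$ are. By Higman's Lemma, the (scattered) subword ordering $\wordering$ is a well-quasi-ordering on $\Sigma^*$. I would also recall two classical closure facts: (a) if $(A_1,\leq_1),\ldots,(A_k,\leq_k)$ are well-quasi-orderings, then the product equipped with the componentwise ordering is a well-quasi-ordering (a Ramsey-type argument), and a subset of a well-quasi-ordering is again a well-quasi-ordering; and (b) a \emph{finite} disjoint union of well-quasi-orderings, where elements from different components are declared incomparable, is again a well-quasi-ordering (a pigeonhole argument: some component contains an infinite subsequence, in which an increasing pair already exists).

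The key structural observation is that the own-decomposition $\ownof\word$ uses pairwise distinct separator variables by condition (1) of its definition, so the number $m$ of own-separators satisfies $m\leq\sizeof{\vars}$; hence a decomposed word has at most $\sizeof{\vars}+1$ fragments, and its sequence of separators, the \emph{signature} $\left((\xvar_1,\data_1),\ldots,(\xvar_m,\data_m)\right)$, ranges over a finite set $S$. By the definition of $\word\genordering\word'$, two words are comparable only when they share the same signature (conditions (1)--(2)), in which case $\genordering$ reduces to the componentwise subword ordering on their $m+1$ fragments (condition (3)). I would therefore partition the buffer words by signature into finitely many classes; within the class of a signature with $m+1$ fragments, the restriction of $\genordering$ is a sub-ordering of the product ordering on $(\Sigma^*)^{m+1}$, which is a well-quasi-ordering by Higman plus closure under finite products and subsets. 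Since words with distinct signatures are $\genordering$-incomparable, $\genordering$ on buffer words is the finite disjoint union of these well-quasi-orderings, hence itself a well-quasi-ordering.

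To lift this to configurations, I would write $\conf_i=\tuple{\statemapping_i,\buffermapping_i,\mem_i}$ and note that $\conf_1\genordering\conf_2$ requires $\statemapping_1=\statemapping_2$ and $\mem_1=\mem_2$ with equality, while the $n=\sizeof{\procs}$ load buffers are compared by the word ordering above. Since global states and memory valuations each range over a finite set, $\genordering$ on $\dtsoconfs$ is a finite disjoint union, indexed by the pair $(\statemapping,\mem)$, of $n$-fold products of the word well-quasi-ordering, hence a well-quasi-ordering. For decidability of $\conf_1\genordering\conf_2$, equality of the finite components $\statemapping$ and $\mem$ is trivially decidable; for each process I would compute $\ownof{\buffermapping_1(\proc)}$ and $\ownof{\buffermapping_2(\proc)}$ by one left-to-right scan recording the most-recent own-message per variable, compare the resulting signatures for equality, and test each fragment pair under $\wordering$ by a greedy left-to-right matching. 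I expect the main obstacle to be the bookkeeping around the own-decomposition, namely verifying that comparability forces identical signatures and that the residual ordering on fragments is exactly a product order; once that is pinned down, both the well-quasi-ordering and the decidability conclusions follow from the standard closure lemmas.
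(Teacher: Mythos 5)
Your proof is correct and follows essentially the same route as the paper's: Higman's Lemma for the subword order on the finite message alphabet, the observation that the own-decomposition has at most $\sizeof{\vars}$ separators so that buffer words split into finitely many pairwise-incomparable signature classes on each of which $\genordering$ is the product of subword orders, closure of well-quasi-orderings under finite products, subsets and finite unions, and the lift to configurations via the finiteness of global states and memory valuations. (Your phrase ``sub-ordering of the product ordering'' is harmless here, since within a fixed signature condition (3) makes $\genordering$ exactly the product order restricted to the set of valid decompositions, i.e.\ an induced order on a subset rather than a proper sub-relation, so the closure-under-subsets argument you invoke is the one that applies.)
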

\begin{proof}
The proof of the lemma is given in Appendix~\ref{proofs:well-quasi-order}.
\end{proof}


The following lemma shows that we can calculate the set of minimal
DTSO-configurations for the set of predecessors of any upward closed set.
\begin{lem}[Computable  minimal predecessor set]
\label{compute-pre-DTSO}
For any  DTSO-configuration $\conf$,
we can compute $\minpre(\set{\conf})$.
\end{lem}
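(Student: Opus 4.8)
The plan is to show that for any DTSO-configuration $\conf$, the set $\minpreof{\set{\conf}}$ is finite and computable. Recall that $\minpreof{\set{\conf}} = \minnof{\preof{\tsys_{\sf DTSO}}{\upclosure{\set{\conf}}} \cup \upclosure{\set{\conf}}}$. Since the minor set of an upward closed set is already finite (by the well-quasi-ordering property established in Lemma~\ref{well-quasi-order}), the essential task is to compute, for each transition type $\transition \in \transitions \cup \transitions''$, a finite set of minimal configurations $\conf'$ from which a single $\transition$-step reaches some configuration in $\upclosure{\set{\conf}}$, i.e.\ reaches a $\conf''$ with $\conf \genordering \conf''$. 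I would define $\minpreof{\set{\conf}}$ as the minor set of the union over all transitions $\transition$ of these per-transition predecessor sets, together with $\set{\conf}$ itself (to account for the $\upclosure{\set{\conf}}$ summand).

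The core of the argument is a case analysis on the transition $\transition$ that was taken. For each of the seven rules in Figure~\ref{fig:oper:sem-dtso}, I would describe explicitly how to reverse it while staying minimal. First I would fix $\conf = \tuple{\statemapping,\buffermapping,\mem}$ and seek minimal $\conf' = \tuple{\statemapping',\buffermapping',\mem'}$ with $\conf' \dtsomovesto{\transition} \conf''$ and $\conf \genordering \conf''$. The memory component and global state are rigidly constrained by each rule: for instance, reversing a {\sf Write} $\tuple{\state,\wop(\xvar,\data),\state'}$ requires $\statemapping'(\proc) = \state$, $\mem'(\xvar)$ arbitrary (since {\sf Write} overwrites it to $\data$, we need $\mem(\xvar) = \data$), and the own-message $(\xvar,\data,{own})$ must be removed from the tail of the buffer of $\proc$. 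The delicate rules are those that append to the buffer ({\sf Write}, {\sf Propagate}) or consume from the head ({\sf Delete}), and the {\sf Read} rules, which impose projection constraints on $\buffermapping(\proc)$. Because the buffers are ordered by the subword-based relation $\genordering$ (which respects the own-message fragmentation via $\ownof{\cdot}$), reversing a buffer-appending transition amounts to stripping the appended symbol and then taking the minor set of the resulting buffer-constraint; reversing {\sf Delete} or a {\sf Read from Buffer} requires prepending a symbol at the head, for which there are only finitely many minimal witnesses since the data domain $\valset$ and variable set $\vars$ are finite.

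The key technical point I would stress is finiteness and computability. For each transition $\transition$, the set of candidate predecessors is parameterized only by finite data: a choice of process $\proc \in \procs$, a variable $\xvar \in \vars$, a value $\data \in \valset$, and a choice of where (head versus tail) and whether to insert or remove a single message. Since each transition modifies the buffer by at most one message, the minimal predecessors of $\upclosure{\set{\conf}}$ under a single $\transition$-step differ from $\conf$ by a bounded amount, so there are finitely many of them and each is explicitly constructible. I would verify for each rule that the constructed configurations indeed lie in $\preof{\tsys_{\sf DTSO}}{\upclosure{\set{\conf}}}$ and that every minimal predecessor is captured, using monotonicity (Lemma~\ref{monotonicity-dtso}) to argue that it suffices to consider predecessors of $\conf$ itself rather than of arbitrary elements of $\upclosure{\set{\conf}}$. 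Finally, applying $\minn$ to the finite union over all $\transition$ (computable by the decidability of $\genordering$ from Lemma~\ref{well-quasi-order}) yields the minor set.

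The main obstacle, I expect, will be the {\sf Read from Buffer} and {\sf Read-Own-Write} rules, whose side conditions involve the projection $\buffermapping(\proc)|_{\{\xvar\}\times\dataset\times\{own\}}$ and the requirement that the head of the buffer carry $\tuple{\xvar,\data}$. Reversing these correctly—ensuring that the reconstructed minimal buffer both satisfies the read's guard \emph{and} is genuinely $\genordering$-minimal with respect to the own-message fragment structure of $\ownof{\cdot}$—requires care, because prepending a message at the head can alter which own-messages are ``most recent'' and thus shift the fragmentation used by the ordering. Handling this interaction between the head-insertion for {\sf Delete}/{\sf Read} and the fragment-wise definition of $\genordering$ is where the bulk of the verification effort will lie.
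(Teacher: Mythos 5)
Your overall strategy is the same as the paper's: a rule-by-rule backward analysis of Figure~\ref{fig:oper:sem-dtso} that, for each transition, constructs the finitely many minimal configurations that can step into $\upclosure{\set{\conf}}$, followed by taking the minor set of the union together with $\set{\conf}$; your identification of the interaction between head-insertions and the own-message fragmentation of $\ownof{\cdot}$ as the delicate point is also on target.

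There is, however, one claim that is wrong and would break the construction if executed as stated: that monotonicity lets you ``consider predecessors of $\conf$ itself rather than of arbitrary elements of $\upclosure{\set{\conf}}$.'' Take the {\sf Read from Buffer} rule in a situation where the head of $\buffermapping(\proc)$ is not $\tuple{\xvar,\data}$: the minimal predecessor $\conf'$ has buffer $\buffermapping(\proc)\app\tuple{\xvar,\data}$ and, since the read does not modify the buffer, it steps to a configuration $\conf''$ with that same enlarged buffer, so $\conf''$ is \emph{strictly} above $\conf$ and $\conf'$ is not a predecessor of $\conf$ at all. The same happens for {\sf Propagate}. Restricting to $\setcomp{\conf'}{\conf'\dtsomovesto{}\conf}$ would therefore miss minimal elements of $\preof{\tsys_{\sf DTSO}}{\upclosure{\set{\conf}}}$ and make the backward fixpoint computation incomplete. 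Monotonicity is not the relevant tool here in any case --- it is what makes the overall backward algorithm sound, not what makes $\minpre$ effective. The correct formulation is the one you also state in your case analysis, namely to characterize, per rule, the minimal $\conf'$ admitting some $\conf''$ with $\conf'\dtsomovesto{\transition}\conf''$ and $\conf\genordering\conf''$, and to solve that constraint directly; with that formulation used consistently, and with the per-rule verification actually carried out (including the cases where a rule yields \emph{no} predecessor because its guard is incompatible with the fragment structure --- e.g.\ {\sf Read-Own-Write} when the most recent own-message on $\xvar$ in $\buffermapping(\proc)$ carries a value other than $\data$, or {\sf ARW}/{\sf Fence} when $\buffermapping(\proc)\neq\emptyword$), your outline coincides with the paper's proof.
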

\begin{proof}
The proof of lemma is given in Appendix~\ref{proofs:compute-pre-DTSO}.
\end{proof}

\subsubsection{\bf From DTSO Reachability to Coverability.} 
Let  $\statemapping_{ target}$ be a   global state of a concurrent program $\cprogram$ and let ${\tt M}_{ target}$ be the set of DTSO-configurations of the form ${\tuple{\statemapping_{ target},\buffermapping,\mem}}$ with ${\buffermapping(\proc)=\epsilon  \; \; \text{for all} \; \proc \in \procs}$ where 
$ \procs$ be the set of process IDs in $\cprogram$.
  We recall that $\statemapping_{ target}$ in $\tsys_{\sf DTSO}$  if and only if ${\tt M}_{ target}$ is reachable in $\tsys_{\sf DTSO}$ (see Section~\ref{sec:prels} for the definition of a reachable set of configurations).
Then by Lemma~\ref{eq-reach-upw}, 
we have that the reachability problem of $\statemapping_{ target}$ in $\tsys_{\sf DTSO}$
 can be reduced to  the coverability problem of ${\tt M}_{ target}$ in   
  $(\tsys_{\sf DTSO},\genordering)$. 
  
\begin{lem}[DTSO reachability to coverability]
\label{eq-reach-upw}
${\tt M}_{ target} \!\!\uparrow$ is reachable in $\tsys_{\sf DTSO}$ iff ${\tt M}_{ target}$ is reachable in $\tsys_{\sf DTSO}$.
\end{lem}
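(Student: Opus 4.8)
The plan is to prove both directions of the equivalence, where the nontrivial content lies entirely in one direction because the reverse is immediate. For the ``only if'' direction, suppose $\ucof{{\tt M}_{target}}$ is reachable in $\tsys_{\sf DTSO}$. By definition of reachability of an upward closed set, there is some configuration $\conf_2$ reachable in $\tsys_{\sf DTSO}$ with $\conf_1 \genordering \conf_2$ for some $\conf_1 \in {\tt M}_{target}$. I would then show that from such a $\conf_2$ one can always drive all load buffers to empty while preserving the global state, thereby reaching an actual member of ${\tt M}_{target}$.

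The key observation driving the forward direction is that the $\delete_\proc$ transition is \emph{always enabled} whenever the load buffer of $\proc$ is nonempty, and it neither changes the global state $\statemapping$ nor the memory $\mem$. Concretely, starting from $\conf_2=\tuple{\statemapping_2,\buffermapping_2,\mem_2}$, I would apply the run $\run$ consisting of $\sizeof{\buffermapping_2(\proc)}$ copies of $\delete_\proc$ for each process $\proc\in\procs$ in turn. This yields a reachable configuration $\conf_3=\tuple{\statemapping_2,\initbuffermapping,\mem_2}$ with all buffers empty and $\statemapping_2=\statemapping_1=\statemapping_{target}$, where the last equality holds because $\conf_1\genordering\conf_2$ forces $\statemapping_1=\statemapping_2$ by clause~(1) of the definition of $\genordering$ on configurations, and $\conf_1\in{\tt M}_{target}$ forces $\statemapping_1=\statemapping_{target}$. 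Hence $\conf_3\in{\tt M}_{target}$ and is reachable, so ${\tt M}_{target}$ is reachable in $\tsys_{\sf DTSO}$.

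The ``if'' direction is essentially a tautology given the definitions: if ${\tt M}_{target}$ is reachable in $\tsys_{\sf DTSO}$, then some $\conf\in{\tt M}_{target}$ is reachable, and since $\conf\genordering\conf$ by reflexivity of $\genordering$, the configuration $\conf$ witnesses that $\ucof{{\tt M}_{target}}$ is reachable (taking $\conf_1=\conf_2=\conf$ in the definition of reachability of an upward closed set). No transitions need to be performed here.

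I expect no serious obstacle: the only point requiring a small argument is that emptying the buffers by $\delete_\proc$ transitions is always possible and leaves the global state untouched, which is immediate from the {\sf Delete} rule in Figure~\ref{fig:oper:sem-dtso}. The subtlety worth flagging explicitly is \emph{why} the definition of $\genordering$ on configurations guarantees the target global state is preserved when passing from $\conf_1$ to the larger $\conf_2$; this is precisely clause~(1) ($\statemapping=\statemapping'$) of that definition, which is what makes the minor-set ${\tt M}_{target}$ (constrained only by its global state $\statemapping_{target}$ and empty buffers) the correct object to cover. In other words, the reduction hinges on the fact that $\genordering$ fixes the global state exactly, so covering ${\tt M}_{target}$ cannot inadvertently reach a configuration with a different global state.
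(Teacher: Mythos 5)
Your proof is correct and follows essentially the same route as the paper: the forward direction empties every load buffer via $\delete_\proc$ transitions (always enabled on a nonempty buffer and leaving $\statemapping$ and $\mem$ untouched), using that clause~(1) of the definition of $\genordering$ forces the covering configuration to have global state exactly $\statemapping_{target}$; the reverse direction is immediate from ${\tt M}_{target}\subseteq\ucof{{\tt M}_{target}}$. Your explicit remark on why $\genordering$ fixing the global state is what makes the reduction sound is a nice touch but not a departure from the paper's argument.
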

\begin{proof}
 Let us assume  that ${\tt M}_{ target}\!\! \uparrow$ is reachable in $\tsys_{\sf DTSO}$. This means that there is a configuration $\conf \in {\tt M}_{ target}\!\! \uparrow$ which is reachable in $\tsys_{\sf DTSO}$. Let us assume that $\conf$ is of the form  $\tuple{\statemapping_{ target},\buffermapping,\mem}$. Then,    from  the configuration $\conf$, it is possible to reach   the configuration $\conf'=\tuple{\statemapping_{ target},\buffermapping',\mem}$, with  ${\buffermapping'(\proc)=\epsilon  \; \; \text{for all} \; \proc \in \procs}$, by performing a sequence of  delete transitions to empty the load buffer of each process. It is then easy to see that $\conf' \in {\tt M}_{ target}$ and so  ${\tt M}_{ target}$ is reachable in $\tsys_{\sf DTSO}$. The other direction of the  lemma is trivial since  ${\tt M}_{ target} \subseteq {{\tt M}_{ target} \!\!\uparrow}$.
\end{proof}



\section{Parameterized Concurrent Systems}
\label{parameterized:section}
In this section, we give the definitions for
{\it parameterized concurrent systems}, a model for representing unbounded number of communicating concurrent processes under the Dual TSO semantics, and its induced transition system.
Then, we define the DTSO reachability problem for the case of parameterized concurrent systems.

\subsection{Definitions for  Parameterized Concurrent Systems}
\label{parameterized:concurrent:system:definitions}
 Let $\valset$ be a finite data domain and   $\vars$ be a finite set of variables ranging over   $\valset$.  A {\it parameterized concurrent system} (or simply 
a {\it parameterized system}) consists  of an unbounded number of {\it identical}   processes running under the Dual TSO semantics.
Communication between processes is performed 
through a shared memory that consists of a finite number of the shared variables $\vars$ over the finite domain $\valset$.
Formally, a parameterized system $\parsys$ is  defined by  an extended finite-state automaton $\automaton=\tuple{\states,\initstate,\transitions}$  uniformly describing the behavior of each  process. 
 
An {\it instance} of $\parsys$ is a concurrent system
$\cprogram=\cstuple$, for some $n \in \mathbb{N}$, where for each $\proc: 1 \leq \proc \leq n$,  we have $\automaton_\proc=\automaton$. In other words, it consists of a finite set
of processes each running the same code defined
by $\automaton$. We use $\instance(\parsys)$ to denote all possible instances of $\parsys$. We use   
$\tsys_\cprogram:=\tuple{\allconfs_\cprogram,\initconfs_\cprogram,\actions_\cprogram,\by{}_\cprogram}$  to denote  the transition system induced by an instance  $\cprogram$ of $\parsys$ under the Dual TSO semantics.

A {\it parameterized configuration} $\alpha$  is a pair $\tuple{\procs,\conf}$ where $\procs=\{1,\ldots,n\}$, with $n \in \mathbb{N}$, is the set of process IDs and  $\conf$ is a DTSO-configuration of an  instance $\cprogram=\cstuple$ of $\parsys$.  
The parameterized configuration $\alpha=\tuple{\procs,\conf}$ is said to be {\it initial} if $\conf$ is an initial configuration of  $\cprogram$ (i.e., $\conf \in \initconfs_\cprogram$). We use     $\allconfs$ (resp. $\initconfs$) to denote the set of all  the  parameterized  configurations  (resp. all the initial configurations) of $\parsys$. 

Let $\actions$ denote   the set of  {\it actions} of all possible instances of $\parsys$ (i.e., $\actions=\cup_{\cprogram \in \instance(\parsys)} \; \actions_\cprogram$).
We define a {\it transition relation} $\by{}$ on the set $\allconfs$ of all parameterized configurations such that  
given two  configurations 
$\tuple{\procs,\conf}$ and $\tuple{\procs',\conf'}$,
we have 
$\tuple{\procs,\conf} \by{\transition} \tuple{\procs',\conf'}$ for some action $\transition \in \actions$ iff 
 $\procs'=\procs$ and there is an instance $\cprogram$ of $\parsys$ such that $\transition \in \actions_{\cprogram}$ and 
$\conf\by{\transition}_\cprogram \conf'$.
The {\it transition system}  induced by  $\parsys$ is given by $\tsys:=\tuple{\allconfs,\initconfs,\actions,\by{}}$.

\paragraph{\bf The Parameterized DTSO Reachability Problem}
A global state  $\statemapping_{ target}: \procs' \mapsto \states$ is said to be {\it reachable} in $\tsys$ if and only if there exists  a parameterized configuration $\alpha=(\procs,\tuple{\statemapping,\buffermapping,\mem})$, with ${\buffermapping(\proc)=\epsilon  \; \; \text{for all} \; \proc \in \procs}$, such that $\alpha$ is reachable in $\tsys$ and $\statemapping_{ target}(1) \cdots \statemapping_{ target}(|\procs'|) \preceq \statemapping(1) \cdots \statemapping(|\procs|)$. 

The parameterized DTSO reachability problem consists in checking whether $\statemapping_{ target}$ is reachable in  $\tsys$.  In other words, the DTSO reachability  problem for parameterized systems asks whether there is an instance of the parameterized system  that reaches   a configuration with a number of  processes in  certain given local states.

\subsection{Decidability of the Parameterized Reachability Problem}

We prove hereafter the following theorem:


\begin{thm}[Decidability of parameterized DTSO   reachability  problem]
\label{cover-decidability}
The parameterized DTSO  reachability  problem   is decidable.
\end{thm}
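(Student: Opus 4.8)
The plan is to reuse the Well-Structured Transition System machinery already developed for the single-instance DTSO reachability problem in Section~\ref{dual-reachability:section}, now lifted to the transition system $\tsys=\tuple{\allconfs,\initconfs,\actions,\by{}}$ of the parameterized system $\parsys$. First I would extend the ordering $\genordering$ from DTSO-configurations to parameterized configurations. Given two parameterized configurations $\pconf=\tuple{\procs,\conf}$ and $\pconf'=\tuple{\procs',\conf'}$, I would set $\pconf\genordering\pconf'$ iff there is a strictly monotone injection $h$ from $\procs$ into $\procs'$ such that, for the induced identification of process IDs, the local states, load buffers (compared via the word ordering $\genordering$ of Section~\ref{ordering:dtso:transition-system}), and the shared memory of $\conf$ are dominated by those of $\conf'$. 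Intuitively one embeds the smaller instance into a larger one process-by-process, requiring equality of the memory valuation and equality of the per-variable most-recent own-message structure as before, while allowing the load-buffer contents to be subwords. The target reachability condition is already phrased using the subword ordering $\preceq$ on the sequence of local states, so it is itself an upward-closed property, which is exactly what makes coverability the right formulation.

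Next I would establish the three ingredients that the framework of Section~\ref{general:well-structured-transition-systems} demands, in each case bootstrapping from the corresponding single-instance result. For the well-quasi-ordering property I would combine Lemma~\ref{well-quasi-order} with a Higman/Dickson-style argument: a parameterized configuration is essentially a word over the (infinite, but wqo-structured) alphabet of per-process components, and the fact that $\genordering$ is a wqo on single-process components together with Higman's lemma gives that $\genordering$ is a wqo on the sequences of processes. Decidability of $\pconf\genordering\pconf'$ reduces to searching over the finitely many monotone injections $h$ and invoking the decidability part of Lemma~\ref{well-quasi-order} on each matched process. Checking the second condition (intersection with $\initconfs$) stays essentially trivial, since an upward closure meets $\initconfs$ exactly when the minimal configuration is, up to adding further idle initial processes, an initial parameterized configuration. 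For computing $\minpre$ I would lift Lemma~\ref{compute-pre-DTSO}: a transition of $\tsys$ acts on a single process while leaving the process-count fixed, so the minimal predecessors of an upward-closed parameterized set are obtained by applying the single-process $\minpre$ computation to each process in turn and re-minimizing; the number of processes that must be considered explicitly is bounded by the minor set, keeping the computation finite.

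I would then invoke monotonicity: because each parameterized transition is just a single-instance DTSO transition applied to one process with the others and the index set unchanged, Lemma~\ref{monotonicity-dtso} applied componentwise yields that $\by{}$ is monotonic wrt.\ the lifted $\genordering$, where the extra processes present in the larger configuration simply remain idle. Finally I would close the argument exactly as in Lemma~\ref{eq-reach-upw}: the parameterized reachability problem requires empty load buffers, so reaching the upward closure of the target set and then emptying all buffers by $\delete$ transitions reaches the target set itself, reducing parameterized reachability to a coverability query, which the backward-reachability fixpoint of Section~\ref{general:well-structured-transition-systems} decides.

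The main obstacle I anticipate is the well-quasi-ordering step, and specifically getting the embedding $\genordering$ right so that it is simultaneously a wqo \emph{and} monotonic. The subtlety is that the memory valuation and the most-recent-own-message skeleton must be matched by equality (as in the single-instance ordering), while only the remaining buffer fragments are compared by subword inclusion; one must check that embedding a smaller instance by an injection that skips some processes does not break monotonicity, i.e.\ that a transition fireable in the small configuration can be mimicked in the large one after possibly performing some $\delete$ transitions on the matched process, precisely as in the proof sketch preceding Lemma~\ref{monotonicity-dtso}. Verifying that Higman's lemma applies to the infinite process-alphabet (it does, because the per-process component ordering is itself a wqo by Lemma~\ref{well-quasi-order}, and the memory component ranges over a finite set that can be factored out) is where the care is needed.
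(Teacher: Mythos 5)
Your proposal matches the paper's own proof essentially step for step: the paper defines the ordering $\trianglelefteq$ on parameterized configurations exactly as you describe (a strictly monotone injection on process IDs, equality of memory and of matched local states, and the buffer ordering $\genordering$ on matched load buffers), proves monotonicity by restricting to the image of the injection and applying Lemma~\ref{monotonicity-dtso} componentwise while leaving unmatched processes untouched, obtains the well-quasi-ordering by the same Higman-style lifting of Lemma~\ref{well-quasi-order}, computes $\minpre$ analogously to Lemma~\ref{compute-pre-DTSO}, and reduces parameterized reachability to coverability by emptying the load buffers with $\delete$ transitions as in Lemma~\ref{eq-cover-upw}. No substantive difference in approach.
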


\begin{proof}
 Let $\parsys=\tuple{\states,\initstate,\transitions}$ be a parameterized system and $\tsys=\tuple{\allconfs,\initconfs,\actions,\by{}}$ be its induced transition system.
  The proof of the theorem 
  is done by
instantiating the framework of {\Wsts}. 
%
In more detail,
we will show that 
the parameterized transition system $\tsys$
is monotonic wrt.\ an ordering $\trianglelefteq$.
Then, we will  show the three sufficient conditions for the decidability of the coverability problem for  $(\tsys,\trianglelefteq)$
(as stated in Section~\ref{general:well-structured-transition-systems}). 
\begin{enumerate}
\item
We   first define the ordering $\trianglelefteq$ on the set $\allconfs$ of all parameterized  configurations (see Section~\ref{ordering:parameterized:case}).
\item
Then, we show that the transition system $\tsys$ is  monotonic wrt.\ the ordering $\trianglelefteq$ (see Lemma \ref{monotonicity-para}). 
\item
For the first sufficient condition, we show that  the ordering $\trianglelefteq$
is a well-quasi-ordering; and that
for every two parameterized configurations $\alpha$ and $\alpha'$, it is decidable whether $\alpha \genordering \alpha'$ (see Lemma \ref{par-wqo}). 
\item
The second  sufficient condition (i.e., checking  whether the upward closed set $\upclosure{\set{\alpha}}$, with $\alpha$ is a parameterized  configuration,  contains an initial configuration) for the decidability of the coverability problem is trivial. This check boils down to  verifying whether the configuration $\alpha$ is initial.   
\item
For the third sufficient condition,
we show that we can calculate the set of minimal parameterized configurations for the set of predecessors of any upward closed set (see Lemma~\ref{compute-pre-parsys}).
\item
Finally, we will show that the parameterized DTSO  reachability  problem for the parameterized system $\parsys$  can be reduced to  the  coverability   problem    in  
 the monotonic transition system $(\tsys,\trianglelefteq)$ (see Lemma \ref{eq-cover-upw}).
\end{enumerate}

This concludes the proof of Theorem~\ref{cover-decidability}.
\end{proof}


\subsubsection{\bf Ordering $\bm{\trianglelefteq}$.}  
\label{ordering:parameterized:case}
Let   $\alpha=\tuple{\procs,\tuple{\statemapping,\buffermapping,\mem}}$ and  $\alpha'=\tuple{\procs',\tuple{\statemapping',\buffermapping',\mem'}}$ be two parameterized configurations.  We define  the  ordering $\trianglelefteq$ 
on the set $\allconfs$ of parameterized  configurations  as follows:  We write $\alpha\trianglelefteq \alpha'$ if and only if the following conditions are satisfied: 
\begin{enumerate}
\item
$\mem=\mem'$.
\item
 There is an injection $h : \{1,\ldots,|\procs|\} \mapsto \{1,\ldots, |\procs'|\}$ such that 
 \begin{enumerate}[label=(\roman*)]
 \item
 for all $\proc,\proc'\in\procs$,
 $\proc<\proc'$ implies $h(\proc) < h(\proc')$; and
 \item
 for every $\proc \in \{1,\ldots, |\procs|\}$, $\statemapping(\proc)=\statemapping'(h(\proc))$ and $\buffermapping(\proc)\genordering \buffermapping'(h(\proc))$. 
 \end{enumerate}
  \end{enumerate}

 \subsubsection{\bf Monotonicity.}
We assume that three  configurations
 $\alpha_1=\tuple{\procs,\tuple{\statemapping_1,\buffermapping_1,\mem_1}}$, $\alpha_2=\tuple{\procs,\tuple{\statemapping_2,\buffermapping_2,\mem_2}}$ and $\alpha_3=\tuple{\procs',\tuple{\statemapping_3,\buffermapping_3,\mem_3}}$ are given.   Furthermore, we  assume that $\alpha_1 \trianglelefteq \alpha_3$ and $\alpha_1 \by{\transition} \alpha_2$ for some transition $\transition$. 
 We will show that it is possible to compute a parameterized configuration $\alpha_4$ and a run $\run$ such that $\alpha_3\by{\run}\alpha_4$ and $\alpha_2\trianglelefteq\alpha_4$.

 Since $\alpha_1 \trianglelefteq \alpha_3$, there is an injection function $h : \{1,\ldots,|\procs|\} \mapsto \{1,\ldots, |\procs'|\}$ such that: 
   \begin{enumerate}
   \item 
  For all $\proc,\proc'\in\procs$,
 $\proc<\proc'$ implies $h(\proc) < h(\proc')$. 
 \item
 For every $\proc \in \{1,\ldots, |\procs|\}$, $\statemapping_1(\proc)=\statemapping_3(h(\proc))$ and $\buffermapping_1(\proc)\genordering \buffermapping_3(h(\proc))$. 
 \end{enumerate}
 
 We define the parameterized configuration $\alpha'$  from $\alpha_3$ by only keeping  the local states and load buffers of processes in $h(\procs)$. 
 Formally,   $\alpha'=(\procs, \tuple{\statemapping',\buffermapping',\mem'})$  is defined  as follows: 
  \begin{enumerate}
 \item
 $\mem'=\mem_3$. 
 \item 
 For every $\proc \in \{1,\ldots, |\procs|\}$, $\statemapping'(\proc)=\statemapping_3(h(\proc))$ and $\buffermapping'(\proc)= \buffermapping_3(h(\proc))$. 
 \end{enumerate}
  
  We observe  that $\tuple{\statemapping_1,\buffermapping_1,\mem_1} \genordering \tuple{\statemapping',\buffermapping',\mem'}$. 
 Since   the transition relation $\by{}_{\sf DTSO}$ is 
monotonic wrt.\ the ordering $\genordering$ (see Lemma \ref{monotonicity-dtso}),  there is a DTSO-configuration $\tuple{\statemapping'',\buffermapping'',\mem''}$ such that $\tuple{\statemapping',\buffermapping',\mem'} \by{}_{\sf DTSO}^* \tuple{\statemapping'',\buffermapping'',\mem''}$ and $\tuple{\statemapping_2,\buffermapping_2,\mem_2} \genordering \tuple{\statemapping'',\buffermapping'',\mem''}$. 

Consider now the parameterized configuration $\alpha_4=\tuple{\procs',\tuple{\statemapping_4,\buffermapping_4,\mem_4}}$ such that: 
  \begin{enumerate}
  \item
$\mem''=\mem_4$. 
\item
For every $\proc \in \{1,\ldots, |\procs|\}$, $\statemapping''(\proc)=\statemapping_4(h(\proc))$ and $\buffermapping''(\proc)= \buffermapping_4(h(\proc))$.
\item 
For every $\proc \in (\{1,\ldots,|\procs'|\} \setminus \{h(1),\ldots,h(|\procs|)\})$, we have $\statemapping_4(\proc)=\statemapping_3(\proc)$ and $\buffermapping_4(\proc)= \buffermapping_3(\proc)$. 
\end{enumerate}
It is easy then to see that $\alpha_2\trianglelefteq\alpha_4$ and $\alpha_3 \by{}^* \alpha_4$. 

The following lemma shows that $\tuple{\tsys,\trianglelefteq}$ is a monotonic transition system.

\begin{lem}[Parameterized monotonic   transition system]
\label{monotonicity-para}
The transition relation $\by{}$ is  monotonic wrt.\  the ordering $\trianglelefteq$.
\end{lem}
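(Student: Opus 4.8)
The plan is to reduce parameterized monotonicity to the single-instance DTSO monotonicity already established in Lemma~\ref{monotonicity-dtso}, exploiting the fact that a single transition of a parameterized configuration is a DTSO transition of a fixed instance that touches only one process together with the shared memory. I would unfold the definition of monotonicity: assume $\alpha_1 \by{\transition} \alpha_2$ and $\alpha_1 \trianglelefteq \alpha_3$, and aim to compute a configuration $\alpha_4$ and a run $\run$ with $\alpha_3 \by{\run} \alpha_4$ and $\alpha_2 \trianglelefteq \alpha_4$. In fact the construction is exactly the one carried out in the discussion preceding the statement, so the proof amounts to verifying that this construction meets the three required properties.

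First I would extract from $\alpha_1 \trianglelefteq \alpha_3$ the order-preserving injection $h : \{1,\ldots,|\procs|\} \mapsto \{1,\ldots,|\procs'|\}$ witnessing the ordering, and use it to restrict $\alpha_3$ to the sub-instance $\alpha'$ that retains only the local states and load buffers of the processes in $h(\procs)$, while keeping the same memory. Because $\trianglelefteq$ requires $\statemapping_1(\proc) = \statemapping_3(h(\proc))$ and $\buffermapping_1(\proc) \genordering \buffermapping_3(h(\proc))$ for each $\proc$, together with $\mem_1 = \mem_3$, the DTSO-parts satisfy $\tuple{\statemapping_1,\buffermapping_1,\mem_1} \genordering \tuple{\statemapping',\buffermapping',\mem'}$ in the sense of the single-instance ordering of Section~\ref{ordering:dtso:transition-system}. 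Since $\alpha'$ is a genuine instance of $\parsys$ with the same number of processes as $\alpha_1$, I can then invoke Lemma~\ref{monotonicity-dtso} to obtain a DTSO-configuration $\tuple{\statemapping'',\buffermapping'',\mem''}$ with $\tuple{\statemapping',\buffermapping',\mem'} \by{}_{\sf DTSO}^* \tuple{\statemapping'',\buffermapping'',\mem''}$ and $\tuple{\statemapping_2,\buffermapping_2,\mem_2} \genordering \tuple{\statemapping'',\buffermapping'',\mem''}$.

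The remaining step is to lift this sub-instance run back to the full instance $\alpha_3$. I would define $\alpha_4$ over the process set $\procs'$ by letting the processes in $h(\procs)$ follow the evolution given by $\statemapping''$ and $\buffermapping''$, keeping the remaining processes of $\procs' \setminus h(\procs)$ frozen at their $\alpha_3$ values, and setting $\mem_4 = \mem''$. The same injection $h$ then witnesses $\alpha_2 \trianglelefteq \alpha_4$, since componentwise the local states match and the buffers are $\genordering$-dominated on the image of $h$, while the memories coincide; and the lifted sequence of transitions yields $\alpha_3 \by{}^* \alpha_4$.

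I expect the only genuine subtlety to be the soundness of this lift, namely arguing that adding back the idle processes of $\alpha_3$ does not interfere with the transitions produced by Lemma~\ref{monotonicity-dtso}. This is precisely where the \emph{locality} of DTSO transitions matters: each step (write, propagate, delete, read, fence, ARW) involves only a single process and the shared memory, so the frozen processes neither constrain the firability of the simulated transitions nor are modified by them. Hence the very same transition sequence is executable from $\alpha_3$. Everything else is routine bookkeeping with the injection $h$ and the componentwise definition of $\trianglelefteq$.
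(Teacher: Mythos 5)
Your proposal is correct and follows essentially the same route as the paper: restrict $\alpha_3$ along the injection $h$ to a sub-instance $\alpha'$ dominating $\alpha_1$ in the single-instance ordering $\genordering$, invoke Lemma~\ref{monotonicity-dtso} to get the simulating DTSO run, and then lift it back to $\procs'$ while freezing the processes outside $h(\procs)$. Your explicit remark that the locality of DTSO transitions (each step touching only one process and the shared memory) is what makes the lift sound is exactly the point the paper leaves implicit in its ``it is easy then to see'' step.
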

\begin{proof}
The proof of the lemma is given in Appendix~\ref{proofs:monotonicity-para}.
\end{proof}

\subsubsection{\bf Conditions for Decidability} 
We show the first and the third conditions of the three conditions for the decidability of the coverability problem for $\tuple{\tsys, \trianglelefteq}$
 (as stated in Section~\ref{general:well-structured-transition-systems}). The second condition has been shown to be trivial in the main proof of Theorem~\ref{cover-decidability}.
  
The following lemma states that the ordering $\trianglelefteq$ is indeed a well-quasi-ordering: 
\begin{lem}[Parameterized well-quasi-ordering $\trianglelefteq$]
\label{par-wqo}
The ordering $\trianglelefteq$ is  a well-quasi-ordering over  $\allconfs$. Furthermore, for every two parameterized configurations $\alpha$ and $\alpha'$, it is decidable
whether $\alpha \trianglelefteq \alpha'$.
\end{lem}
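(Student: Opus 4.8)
The plan is to recognize the ordering $\trianglelefteq$ as, modulo the memory-equality constraint, a Higman subword embedding over the alphabet of \emph{process descriptors}, where a descriptor is a pair $(\state,\buffer)$ consisting of a local state and a load-buffer content. Concretely, I would encode a parameterized configuration $\alpha=\tuple{\procs,\tuple{\statemapping,\buffermapping,\mem}}$ with $\procs=\{1,\ldots,n\}$ as the pair consisting of the memory $\mem$ together with the word of descriptors $(\statemapping(1),\buffermapping(1))\cdots(\statemapping(n),\buffermapping(n))$ over the (infinite) descriptor alphabet $\states\times B$, where $B=\left((\vars\times\valset)\cup(\vars\times\valset\times\{own\})\right)^*$ is the set of load-buffer contents. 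Under this encoding, $\alpha\trianglelefteq\alpha'$ holds exactly when the memories coincide and the descriptor word of $\alpha$ embeds into that of $\alpha'$ via an order-preserving injection matching states and $\genordering$-dominating buffers. The well-quasi-ordering proof then assembles three ingredients: a wqo on descriptors, Higman's lemma on descriptor words, and the finiteness of the memory domain.

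First I would establish that the buffer ordering $\genordering$ on $B$ is a well-quasi-ordering and decidable. Both facts follow from Lemma~\ref{well-quasi-order} by restricting to single-process configurations: fix a global state $\statemapping$ and a memory valuation $\mem$, and compare the configurations $\tuple{\statemapping,\buffermapping,\mem}$ and $\tuple{\statemapping,\buffermapping',\mem}$ whose single buffers are $\buffer$ and $\buffer'$. By the definition of $\genordering$ on configurations these satisfy $\conf\genordering\conf'$ iff $\buffer\genordering\buffer'$, so any infinite $\genordering$-bad sequence of buffers would yield an infinite $\genordering$-bad sequence of configurations, contradicting Lemma~\ref{well-quasi-order}; decidability is inherited in the same way. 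The descriptor ordering $(\state,\buffer)\sqsubseteq_{\mathrm{desc}}(\state',\buffer')$, defined by $\state=\state'$ and $\buffer\genordering\buffer'$, is therefore the product of the equality order on the finite set $\states$ (trivially a wqo) with the wqo $\genordering$ on $B$; since a finite product of well-quasi-orderings is again a well-quasi-ordering, $\sqsubseteq_{\mathrm{desc}}$ is a wqo, and it is decidable because both components are.

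Next I would invoke Higman's lemma in its general form: for \emph{any} well-quasi-ordering $(S,\leq)$, the subword embedding order on $S^*$ is again a well-quasi-ordering. Applying this with $S=\states\times B$ and the ordering $\sqsubseteq_{\mathrm{desc}}$ shows that the descriptor-word embedding, i.e.\ the second component of $\trianglelefteq$, is a well-quasi-ordering. It remains to incorporate the memory. Since $\mem:\vars\mapsto\valset$ ranges over the finite set $\valset^{\vars}$, I would partition $\allconfs$ into finitely many classes according to the value of $\mem$; within each class the constraint $\mem=\mem'$ is automatic and $\trianglelefteq$ coincides with the descriptor-word embedding, which is a wqo by the previous step, while elements of distinct classes are $\trianglelefteq$-incomparable. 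A finite union of wqo-classes under a common quasi-order is a wqo, since any infinite sequence has, by pigeonhole, an infinite subsequence inside a single class on which the class wqo produces a good pair; hence $\trianglelefteq$ is a well-quasi-ordering on all of $\allconfs$.

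For decidability of $\alpha\trianglelefteq\alpha'$ I would first test $\mem=\mem'$ (a finite check), and then search for the order-preserving injection $h$ by the standard greedy left-to-right matching on the descriptor words: scan $\alpha'$ and map each successive descriptor of $\alpha$ to the earliest not-yet-used descriptor of $\alpha'$ that $\sqsubseteq_{\mathrm{desc}}$-dominates it, succeeding iff all descriptors of $\alpha$ get matched. Each individual domination test is decidable by the second paragraph, and greedy matching is correct and terminating for subword embedding, which yields decidability. The main obstacle, and the step deserving the most care, is the appeal to Higman's lemma over the \emph{infinite} descriptor alphabet $\states\times B$: one must use the general wqo-alphabet version rather than the finite-alphabet form, which in turn hinges on first having genuinely established that $\genordering$ on $B$ is a well-quasi-ordering; the remaining closure steps (finite products, finite unions) and the greedy matching are routine.
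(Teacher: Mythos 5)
Your proof is correct and follows essentially the route the paper intends: its own proof of Lemma~\ref{par-wqo} simply defers to "a similar argument as in the proof of Lemma~\ref{well-quasi-order}", and your instantiation --- wqo on buffers from Lemma~\ref{well-quasi-order}, product with the finite state set, Higman's lemma over the resulting (infinite but well-quasi-ordered) descriptor alphabet, finite partition on memories, and greedy matching for decidability --- is exactly that argument spelled out. You correctly identify the one point where the parameterized case genuinely differs from the bounded case, namely that one needs Higman's lemma in its general wqo-alphabet form rather than a finite product over a fixed set of processes.
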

\begin{proof}
The lemma follows a similar argument as in the proof of Lemma~\ref{well-quasi-order}.
\end{proof}

The following lemma shows that we can calculate the set of minimal parameterized configurations for the set of predecessors of any upward closed set.

\begin{lem}[Computable  minimal parameterized predecessor set]
\label{compute-pre-parsys}
For any  parameterized configuration $\alpha$,
we can compute $\minpre(\{\alpha\})$.
\end{lem}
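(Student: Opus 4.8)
The plan is to reduce the computation of $\minpre(\set{\alpha})$ to finitely many applications of Lemma~\ref{compute-pre-DTSO} on instances of $\parsys$ of bounded size. Write $\alpha=\tuple{\procs,\conf}$ with $\procs=\set{1,\ldots,n}$ and $\conf=\tuple{\statemapping,\buffermapping,\mem}$, and recall that $\minpre(\set{\alpha})=\minnof{\preof{\tsys}{\upclosure{\set{\alpha}}}\cup\upclosure{\set{\alpha}}}$. Since $\set{\alpha}$ is already the minor set of $\upclosure{\set{\alpha}}$, the real task is to compute the $\trianglelefteq$-minimal elements of $P:=\preof{\tsys}{\upclosure{\set{\alpha}}}$; I would then add $\alpha$ and take the minor set of the result, which is effective because $\trianglelefteq$ is decidable by Lemma~\ref{par-wqo}.

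The crux of the argument, and the step I expect to be the main obstacle, is a cut-off bound on the number of processes occurring in a $\trianglelefteq$-minimal element of $P$. The two facts I would exploit are that a parameterized transition preserves the number of processes and is performed by a single process. So take $\beta=\tuple{\procs',\conf'}\in P$ with $\beta\by{\transition}\gamma$ and $\alpha\trianglelefteq\gamma$: then $\gamma$ has exactly $\sizeof{\procs'}$ processes, and the witnessing injection forces $\sizeof{\procs'}\ge n$. Let $j$ be the unique process firing $\transition$. Any process of $\beta$ distinct from $j$ and not among the $n$ processes matched in $\gamma$ is untouched by $\transition$ and unmatched in $\gamma$; deleting all such processes from both sides yields $\beta^\star\trianglelefteq\beta$ that still fires a transition into some $\gamma^\star$ with $\alpha\trianglelefteq\gamma^\star$, hence $\beta^\star\in P$. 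Consequently a $\trianglelefteq$-minimal $\beta$ keeps at most the $n$ matched processes together with possibly $j$, i.e.\ $\sizeof{\procs'}\in\set{n,n+1}$.

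Given this bound, I would enumerate $m\in\set{n,n+1}$ and work inside the unique $m$-process instance $\cprogram_m\in\instance(\parsys)$. For fixed $m$, the set of $m$-process configurations that dominate $\alpha$ through $\trianglelefteq$ is a finite union of $\genordering$-upward-closed sets with a computable minor set: enumerate every monotone injection $h:\set{1,\ldots,n}\mapsto\set{1,\ldots,m}$ and every assignment of a local state to each of the $m-n$ unmatched positions, and form the DTSO-configuration whose memory is $\mem$, whose matched processes carry the states and buffers of $\alpha$, and whose unmatched processes carry the chosen state and the empty buffer. Applying Lemma~\ref{compute-pre-DTSO} to each configuration of this finite minor set, and using that $\pre$ distributes over unions, yields a finite set of $\genordering$-minimal DTSO-predecessors inside $\cprogram_m$.

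Finally, I would reinterpret each $m$-process configuration obtained above as a parameterized configuration $\tuple{\set{1,\ldots,m},\cdot}$, collect these over $m\in\set{n,n+1}$ together with $\alpha$, and compute the minor set with respect to $\trianglelefteq$. Correctness rests on two observations: $\conf_1\genordering\conf_2$ on $m$-process configurations implies $\tuple{\set{1,\ldots,m},\conf_1}\trianglelefteq\tuple{\set{1,\ldots,m},\conf_2}$ via the identity injection, so no spurious predecessor is introduced; and, by the cut-off bound, every $\trianglelefteq$-minimal element of $P$ sits in some $\cprogram_m$ with $m\in\set{n,n+1}$ and is a $\genordering$-predecessor of the target set, hence $\genordering$-dominated --- and therefore $\trianglelefteq$-dominated --- by one of the computed configurations. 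The final minor-set computation discards the remaining redundancies and returns $\minpre(\set{\alpha})$.
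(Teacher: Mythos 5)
Your proof is correct and follows essentially the same route as the paper's: the crux in both cases is the cut-off showing that a $\trianglelefteq$-minimal predecessor need only retain the $n$ processes matched by the witnessing injection plus (possibly) the single acting process, so that $\minpreof{\set{\alpha}}$ can be assembled from the finitely many instances with $n$ or $n{+}1$ processes using the single-instance computation of Lemma~\ref{compute-pre-DTSO}. The only point worth flagging is cosmetic: Lemma~\ref{compute-pre-DTSO} returns $\minnof{\preof{\tsys}{\upclosure{{\tt M}}}\cup\upclosure{{\tt M}}}$ rather than minimal predecessors alone, but the extra elements lie in $\upclosure{\set{\alpha}}$ and are absorbed by your final minor-set computation, as you implicitly note.
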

\begin{proof}
The proof of the lemma is given in Appendix~\ref{proofs:compute-pre-parsys}.
\end{proof}

\subsubsection{\bf From Parameterized DTSO Reachability  to Coverability.} Let $\statemapping_{ target}: \procs' \mapsto \states$ be a global state.  Let  ${\tt M}_{ target}$ be the set of parameterized configurations of the form $\alpha=(\procs',{\tuple{\statemapping_{ target},\buffermapping,\mem}})$ with ${\buffermapping(\proc)=\epsilon  \; \; \text{for all} \; \proc \in \procs'}$. 
Lemma~\ref{eq-cover-upw} shows that
 the  parameterized  reachability problem of $\statemapping_{ target}$ in the transition system $\tsys$
 can be reduced to  the coverability problem of ${\tt M}_{ target}$ in   
  $(\tsys,\trianglelefteq)$. 
 
\begin{lem}[Parameterized DTSO reachability  to coverability]
\label{eq-cover-upw}
$\statemapping_{ target}$ is reachable in $\tsys$ iff 
${\tt M}_{ target} \!\!\uparrow$ is reachable in $\tsys$. 
\end{lem}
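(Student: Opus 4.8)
The plan is to prove the two directions separately, exploiting that the single-instance argument of Lemma~\ref{eq-reach-upw} lifts to the parameterized setting and that the interprocess embedding $h$ appearing in the definition of $\trianglelefteq$ encodes exactly the subword condition on global states used in the definition of parameterized reachability.

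For the forward direction, suppose $\statemapping_{target}$ is reachable in $\tsys$. By definition there is a reachable parameterized configuration $\alpha=\tuple{\procs,\tuple{\statemapping,\buffermapping,\mem}}$ with $\buffermapping(\proc)=\epsilon$ for all $\proc\in\procs$ and $\statemapping_{target}(1)\cdots\statemapping_{target}(|\procs'|)\preceq\statemapping(1)\cdots\statemapping(|\procs|)$. I would exhibit a member of ${\tt M}_{target}$ below $\alpha$: take $\alpha_0=\tuple{\procs',\tuple{\statemapping_{target},\buffermapping_0,\mem}}$ where $\buffermapping_0(\proc)=\epsilon$ for every $\proc\in\procs'$. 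Then $\alpha_0\in{\tt M}_{target}$, and the monotone injection $h$ witnessing the subword relation between the two global-state words gives $\statemapping_{target}(\proc)=\statemapping(h(\proc))$; since both buffers involved are empty and $\epsilon\genordering\epsilon$, the same $h$ witnesses $\alpha_0\trianglelefteq\alpha$. Hence $\alpha\in{\tt M}_{target}\!\!\uparrow$, and as $\alpha$ is reachable so is ${\tt M}_{target}\!\!\uparrow$.

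For the backward direction, suppose ${\tt M}_{target}\!\!\uparrow$ is reachable, so there is a reachable $\beta=\tuple{\procs'',\tuple{\statemapping'',\buffermapping'',\mem''}}$ and some $\alpha\in{\tt M}_{target}$ with $\alpha\trianglelefteq\beta$. From $\beta$ I would fire a sequence of $\delete$ transitions, which are always enabled on a nonempty load buffer and affect neither local states nor the memory, until every load buffer is empty; this reaches a configuration $\beta'=\tuple{\procs'',\tuple{\statemapping'',\buffermapping'_{\mathit{empty}},\mem''}}$ that is still reachable and has the same global state $\statemapping''$ as $\beta$. The injection $h$ witnessing $\alpha\trianglelefteq\beta$ satisfies $\statemapping_{target}(\proc)=\statemapping''(h(\proc))$ with $h$ monotone, i.e.\ $\statemapping_{target}(1)\cdots\statemapping_{target}(|\procs'|)\preceq\statemapping''(1)\cdots\statemapping''(|\procs''|)$; together with the emptiness of the buffers of $\beta'$ this is exactly the definition of $\statemapping_{target}$ being reachable in $\tsys$.

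The argument is essentially a check of definitions and is the parameterized analogue of Lemma~\ref{eq-reach-upw}; the only point demanding care is the interplay, in the backward direction, between the two notions of ``covering''. I must confirm that the interprocess embedding $h$ in $\trianglelefteq$ (which matches local states exactly and compares buffers by $\genordering$) collapses, once the buffers have been emptied by deletes, to precisely the subword condition on the words $\statemapping''(1)\cdots\statemapping''(|\procs''|)$ appearing in the reachability definition. Since deleting messages only shrinks buffers and leaves local states untouched, this correspondence holds and the reduction goes through.
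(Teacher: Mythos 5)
Your proof is correct and follows essentially the same route as the paper: both directions reduce to unpacking the definitions, with the backward direction emptying the load buffers by a sequence of $\delete_\proc$ transitions and observing that the injection in $\trianglelefteq$ yields exactly the subword condition on global states. Your write-up is slightly more explicit about constructing the witness $\alpha_0\in{\tt M}_{target}$ below $\alpha$ in the forward direction, but this is just a spelled-out version of the paper's ``it is then easy to show that $\alpha \in {\tt M}_{target}\!\uparrow$''.
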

\begin{proof}
To prove the lemma, we first show  that 
${\tt M}_{ target} \!\uparrow$ is reachable in $\tsys$ if and only if there is a parameterized configuration $\alpha=(\procs,\tuple{\statemapping,\buffermapping,\mem})$, with ${\buffermapping(\proc)=\epsilon  \; \; \text{for all} \; \proc \in \procs}$, such that $\alpha$ is reachable in $\tsys$ and $\statemapping_{ target}(1) \cdots \statemapping_{ target}(|\procs'|) \preceq \statemapping(1) \cdots \statemapping(|\procs|)$.
Then  as a consequence,
 the lemma holds. 

 Let us assume that there is a parameterized configuration $\alpha=(\procs,\tuple{\statemapping,\buffermapping,\mem})$, with ${\buffermapping(\proc)=\epsilon  \; \; \text{for all} \; \proc \in \procs}$, such that $\alpha$ is reachable in $\tsys$ and $\statemapping_{ target}(1) \cdots \statemapping_{ target}(|\procs'|) \preceq \statemapping(1) \cdots \statemapping(|\procs|)$. It is then easy to show that $\alpha \in {\tt M}_{ target}\!\uparrow$.

 Now let us assume that there is  a parameterized configuration $\alpha'=(\procs'', \tuple{\statemapping',\buffermapping',\mem'}) \in  {\tt M}_{ target}\!\!\uparrow$ which  is reachable  in $\tsys$. From the configuration $\alpha'$, it is possible to reach  the  configuration $\alpha''=(\procs'',\tuple{\statemapping',\buffermapping'',\mem'})$, with  ${\buffermapping''(\proc)=\epsilon  \; \; \text{for all} \; \proc \in \procs''}$, by performing a sequence of  $\delete_\proc$ transitions to empty the load buffer of each process. Since $\alpha' \in  {\tt M}_{ target}\!\uparrow$, we have   $\statemapping_{ target}(1) \cdots \statemapping_{ target}(|\procs'|) \preceq \statemapping'(1) \cdots \statemapping'(|\procs''|)$. Hence,  $\alpha''$ is a witness of the  parameterized  reachability problem of $\statemapping_{ target}$ in the transition system $\tsys$. 
\end{proof}



\section{Experimental Results}
\label{experiments:section}

We have  implemented our techniques described in  Section~\ref{dual-reachability:section}  and Section \ref{parameterized:section} 
in an open-source tool called {\sf Dual-TSO}\footnote{\href{https://www.it.uu.se/katalog/tuang296/dual-tso}{Tool webpage: https://www.it.uu.se/katalog/tuang296/dual-tso}}.
The tool checks the state  reachability problems 
(c.f.\ Section~\ref{dual-tso} and Section~\ref{parameterized:concurrent:system:definitions})
for (parameterized) concurrent systems under the Dual TSO semantics. 
We emphasize
that 
besides checking the reachability for
a global state,
{\sf Dual-TSO}
can check  the reachability
for a set of global states.
Moreover,
{\sf Dual-TSO} accepts a more general 
input class of parameterized concurrent systems.
Instead of requiring that
the behavior of each process is described by a {\it unique} extended finite-state automaton as defined in Section~\ref{parameterized:section},
{\sf Dual-TSO} allows
that the  behavior of a process 
can be presented by
an extended finite-state automaton from a {\it fixed} set of predefined automata.
If the tool finds a witness for a given reachability problem,
we say that 
the  concurrent system is unsafe (wrt.\ the reachability problem). After finding the first witness for a given reachability problem, the tool terminates its execution. In the case that  no witness is encountered,  {\sf Dual-TSO} declares that the given concurrent program is safe (wrt.\ the reachability problem) after it reaches a fixpoint in calculation. 
 {\sf Dual-TSO}  always ends its execution   by reporting the running time (in seconds) and the total number of generated  configurations. Observe that the number of generated  configurations gives a rough estimation of the memory consumption of our tool.
%


We compare our tool with {\sf Memorax}~\cite{DBLP:conf/tacas/AbdullaACLR12,DBLP:conf/tacas/AbdullaACLR13} which  is the \emph{only precise and sound tool} for  deciding the state reachability problem of concurrent systems running under TSO. Observe  that {\sf Memorax} cannot handle the class of parameterized concurrent systems.
%
We use 
{\sf Dual-TSO}$(\genordering)$
and {\sf Dual-TSO}$(\trianglelefteq)$
to denote {\sf Dual-TSO} when applied to concurrent systems and parameterized concurrent systems, respectively.

In the following, we present two sets of results. The first set concerns the comparison of   {\sf Dual-TSO}$(\genordering)$ with {\sf Memorax} (see Table~\ref{dual-sb-comparison}). The second set shows the benefit of the parameterized verification compared to the  use of the  state reachability  when increasing the number of processes (see Table~\ref{pdual-table} and Figure~\ref{increase-processes-figs}). Our example programs are  from ~\cite{DBLP:conf/tacas/AbdullaACLR12,DBLP:journals/toplas/AlglaveMT14,DBLP:conf/esop/BouajjaniDM13,DBLP:conf/esop/AbdullaAP15,DBLP:conf/pldi/LiuNPVY12}.
%
In all experiments, we set up the time out to 600 seconds (10 minutes).
We perform all experiments  on an Intel x86-32 Core2 2.4 Ghz machine  and 4GB of RAM. 

\begin{table}[tb]
\centering
\small
\begin{tabular}{| l  | c | r r | r r | r r | }
\hline
\multirow{2}{*}{{\bf Program}} & \multirow{2}{*}{$\mathbf{\#P}$} &\multicolumn{2}{c|}{\bf Safe under} 	& \multicolumn{2}{c|}{{\sf Dual-TSO}($\genordering$)}	&\multicolumn{2}{c|}{\sf Memorax} 	 	\\ 
	& & {\bf SC} & {\bf TSO} & $\mathbf{\#T(s)}$ & $\mathbf{\#C}$ 	&  $\mathbf{\#T(s)}$ & $\mathbf{\#C}$  	 \\  \hline
 SB				& 	5	& yes & no & 	0.3 & 10 641			& 	559.7 &\;\; 10 515 914 								\\ 
 LB				& 	3	& yes &yes  &	0.0 & 2 048			&  	71.4 & 1 499 475								\\
 WRC			&	4 	& yes & yes &	0.0 & 1 507			& 	63.3 & 1 398 393								\\ 
 ISA2			& 	3	& yes &  yes &	0.0 & 509				&  	21.1 & 226 519									\\ 
RWC			& 	5	& yes & no &	0.1 &  4 277			& 	61.5 & 1 196 988									 \\ 
 W+RWC			& 	4	& yes &  no & 	0.0 & 1 713			&	 83.6 & 1 389 009								\\ 
 IRIW			& 	4	& yes & yes &	0.0 & 520				& 	34.4 & 358 057									\\ 
  MP				&	4	& yes & yes &	0.0 & 883				& 	$t/o$	& $\bullet$									\\ 
  Simple Dekker	&	2	& yes & no&	0.0 & 98				& 	0.0 & 595 					 					\\
 Dekker			&	2	& yes & no &	0.1 & 5 053			& 	1.1 & 19 788									\\
 Peterson			&	2	& yes &  no&	0.1 & 5 442			& 	5.2 & 90 301									\\ 
 Repeated Peterson	&	2 	& yes &no	&	0.2 & 7 632			& 	5.6 & 100 082									\\
  Bakery			&	2	& yes &  no& 	2.6 & 82 050			& 	$t/o$	& $\bullet$								\\
 Dijkstra			& 	2	& yes &no  & 	0.2 & 8 324			& 	$t/o$	& $\bullet$								\\
 
 Szymanski		&	2	& yes & no &	0.6 & 29 018			& 	1.0 & 26 003									\\
 Ticket Spin Lock	&	3	& yes & yes &	0.9 & 18 963			& 	$t/o$	& $\bullet$								\\
Lamport's Fast Mutex&	3	& yes &  no&	17.7 & 292 543			& 	$t/o$	& $\bullet$								\\
 Burns			&	4	& yes &no  &	124.3 &\;\; 2 762 578		& 	$t/o$	& $\bullet$								\\
 NBW-W-WR		& 	2	& yes & yes &	0.0 & 222				& 	10.7 & 200 844									\\ 
Sense Reversing Barrier&	2	& yes & yes &	0.1 & 1 704			& 	0.8 & 20 577									\\
 \hline
\end{tabular}
\caption{Comparison  between  {\sf Dual-TSO}$(\genordering)$ and {\sf Memorax}:
The columns {\it Safe under SC} and {\it Safe under TSO} indicate that whether the benchmark is safe under SC and TSO wrt.\ its reachability problem respectively.
The columns $\#P$,  $\#T$ and $\#C$ give the number of processes, the  running time in seconds and the number of generated configurations, respectively. If a tool runs out of time, we put {$t/o$} in the $\#T$ column and $\bullet$ in the $\#C$ column. }
\label{dual-sb-comparison}
\end{table}

\paragraph{\bf Verification of Concurrent Systems.}
Table~\ref{dual-sb-comparison} presents  a comparison   between
 {\sf Dual-TSO}$(\genordering)$ and {\sf Memorax} on 20 benchmarks. In all these benchmarks, {\sf Dual-TSO}$(\genordering)$ and {\sf Memorax} return  the same results for the state  reachability problems (except 6 examples where {\sf Memorax}  runs out of time).  In  the benchmarks where the two tools return, {\sf Dual-TSO}$(\genordering)$  out-performs {\sf Memorax}  and generates fewer  configurations (and so uses less memory). Indeed,  {\sf Dual-TSO}$(\genordering)$ is 600 times faster than {\sf Memorax}  and generates 277 times fewer minimal configurations  on average. 
 The experimental results  confirm  the correlation between the running time and the memory consumption (i.e.,  the tool who generates less configurations is  often the fastest).

\begin{figure}[tb]
\begin{subfigure}[b]{0.42\linewidth}
    \centering
    \includegraphics[width=0.99\textwidth]{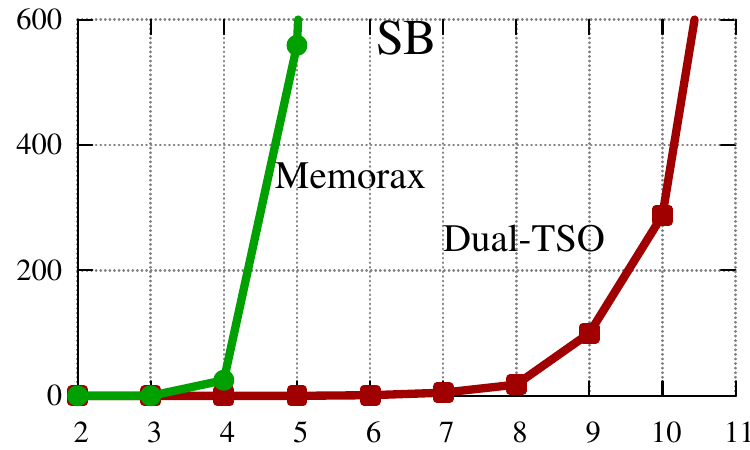}
  \end{subfigure}
  \begin{subfigure}[b]{0.42\linewidth}
    \centering
    \includegraphics[width=0.99\textwidth]{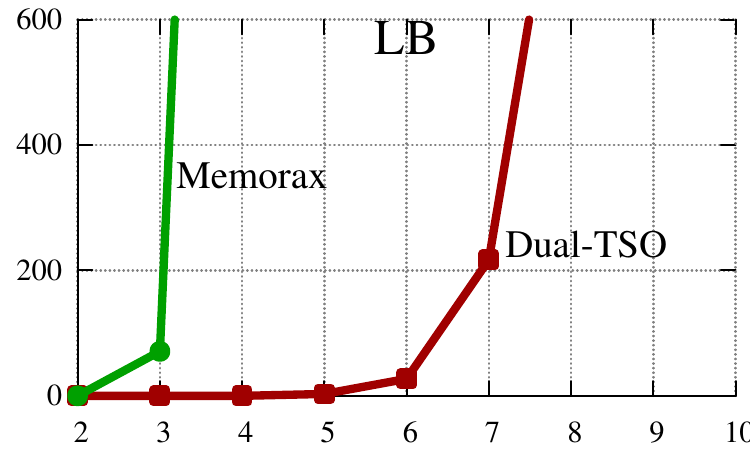}
  \end{subfigure}
  \\
  \begin{subfigure}[b]{0.42\linewidth}
    \centering
    \includegraphics[width=0.99\textwidth]{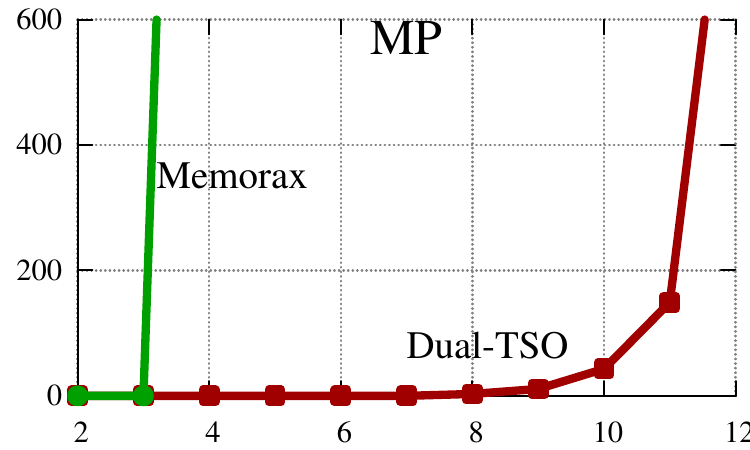}
  \end{subfigure}
  \begin{subfigure}[b]{0.42\linewidth}
    \centering
    \includegraphics[width=0.99\textwidth]{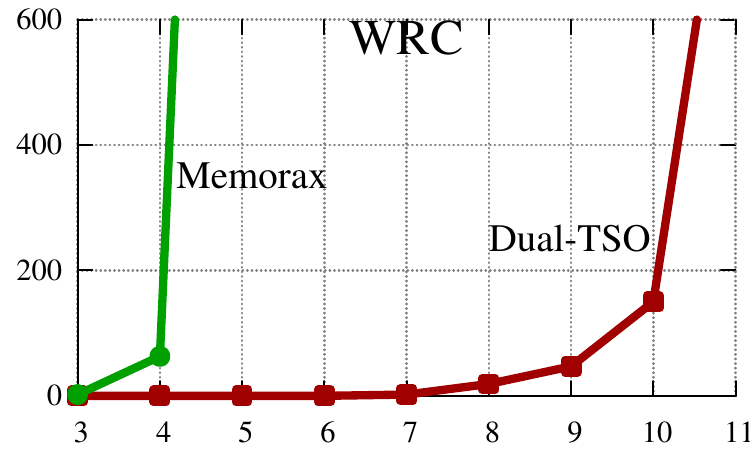} 
  \end{subfigure} \\
  \begin{subfigure}[b]{0.42\linewidth}
    \centering
    \includegraphics[width=0.99\textwidth]{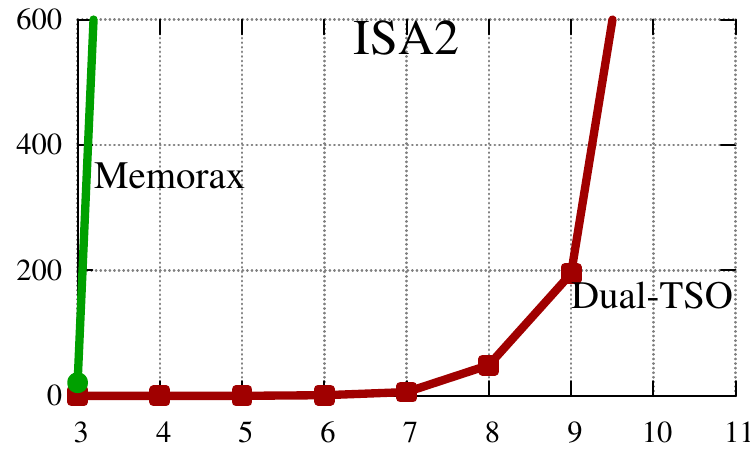}
  \end{subfigure}
  \begin{subfigure}[b]{0.42\linewidth}
    \centering
    \includegraphics[width=0.99\textwidth]{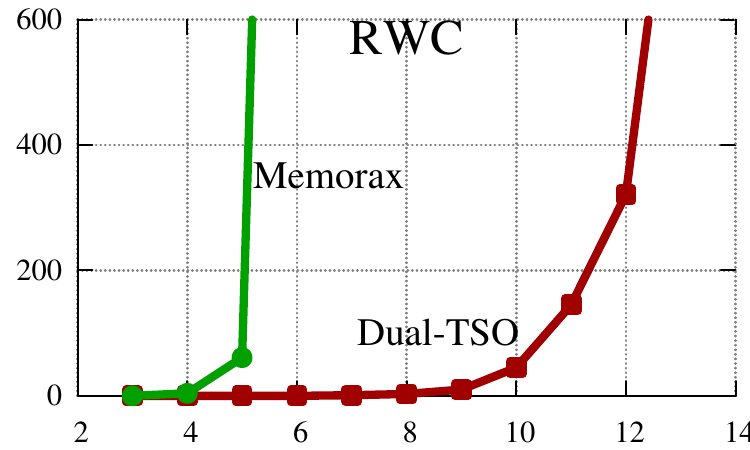}
  \end{subfigure}\\
  \begin{subfigure}[b]{0.42\linewidth}
    \centering
    \includegraphics[width=0.99\textwidth]{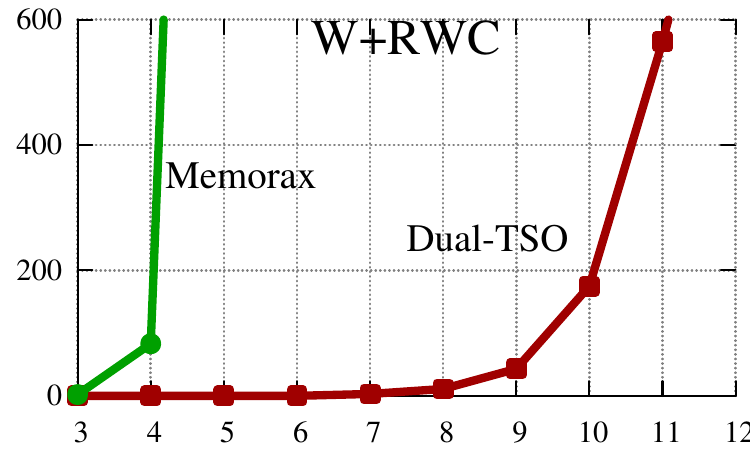}
  \end{subfigure}
  \begin{subfigure}[b]{0.42\linewidth}
    \centering
    \includegraphics[width=0.99\textwidth]{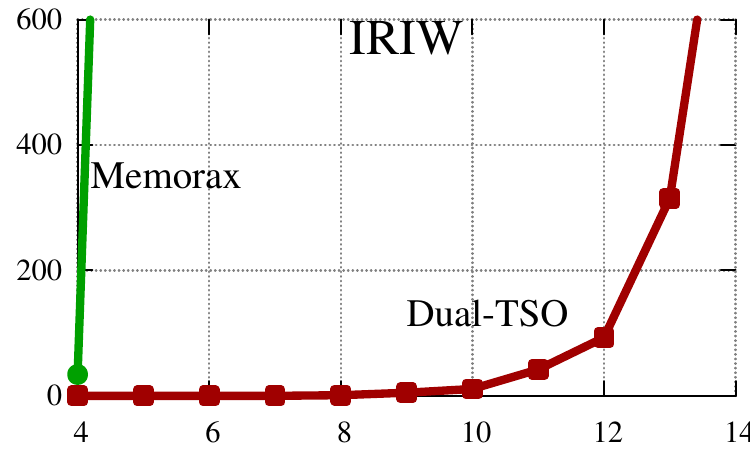} 
  \end{subfigure}
\caption{Running time of  {\sf Memorax} and  {\sf Dual-TSO}$(\genordering)$ by increasing number of processes. The x axis is number of processes and the y axis is running time in seconds. }
\label{increase-processes-figs}
\end{figure}

 \begin{table}[tb]
\begin{tabular}{| l | c | r  r |}
\hline
\multirow{2}{*}{{\bf Program}}	& \multirow{2}{*}{\bf Safe under TSO} & \multicolumn{2}{c|}{{\sf Dual-TSO}$(\trianglelefteq)$}	 	\\ 
	& & $\mathbf{\#T(s)}$ & $\mathbf{\#C}$  	  \\  \hline
SB			& no		& 0.0 & 147							\\ 
 LB			& yes	& 0.6 &\;\;\;\;\;1 028							\\ 
 MP			& yes	& 0.0 & 149							\\ 
 WRC		& yes	& 0.8 & 618							\\ 
ISA2			& yes 	& 4.3 & 1 539							\\ 
RWC		& no		& 0.2 & 293							\\ 
W+RWC\;\;\;\;\;		& no		& 1.5 & 828							\\ 
IRIW			& yes	& 4.6 & 648							\\ 
\hline
\end{tabular}
\caption{Parameterized verification with {\sf Dual-TSO}$(\trianglelefteq)$.}
\label{pdual-table}
\end{table}

\paragraph{\bf Verification of Parameterized Concurrent Systems.}
 The second set compares the scalability of {\sf Memorax} and {\sf Dual-TSO} while increasing the number of processes.  The results are given in 
 Figure~\ref{increase-processes-figs}.  
 We observe that although the algorithms implemented by 
 {\sf Dual-TSO}$(\genordering)$ and {\sf Memorax} 
 have the same (non-primitive recursive) lower bound (in theory), 
  {\sf Dual-TSO}$(\genordering)$
 scales better than {\sf Memorax} in all these benchmarks. 
 In fact, {\sf Memorax} can only handle  benchmarks  with {at most} 5 processes while
 {\sf Dual-TSO} can handle  benchmarks with more  processes.
We conjecture
that this is due to 
 the important advantages of the 
 Dual TSO semantics. In fact, the Dual TSO semantics   
 transforms the load buffers into lossy channels without adding the costly overhead 
 of memory snapshots
 that was necessary in the case of {\sf Memorax}.
The absence of this extra overhead means  that our tool generates less  configurations (due to the ordering)  and this results in a  better performance and scalability.

 Table \ref{pdual-table} presents the running time and the number of generated configurations when  checking the state reachability problem  for  the parameterized versions of the benchmarks in Figure~\ref{increase-processes-figs}
  with {\sf Dual-TSO}$(\trianglelefteq)$. 
It should be emphasized that
{\sf Dual-TSO}$(\trianglelefteq)$
and
 {\sf Dual-TSO}$(\genordering)$
 have the same results
 for the reachability problems in these benchmarks.
 We observe that the verification of these parameterized systems is  much more efficient than verification of bounded-size instances (starting from a number of processes of 3 or 4), especially concerning memory consumption (which is given in terms  of number of generated configurations).
 The reason behind is that  the size of the  generated  minor sets  in the analysis of a  parameterized system  are usually  smaller than the size of the  generated minor sets  during the analysis of  an instance of the system with a large number of processes. In fact, during the analysis of a parameterised concurrent system, the number of considered processes in the generated minimal configurations  is usually very small. Observe that, in the case of  concurrent systems, the  number of considered processes in the generated minimal configurations is equal to the number of processes in the given system. %

\section{Conclusion}
In this paper, we have presented an alternative (yet equivalent) semantics to the classical  one  for the TSO memory model that is more amenable for efficient algorithmic verification and for extension to parametric verification. This new semantics allows us to understand the TSO memory model in a totally different way compared to the classical semantics. Furthermore, the proposed  semantics offers several important advantages
from the point of view of formal reasoning and program verification.
First,  the Dual TSO semantics allows transforming the load buffers
to {\it lossy} channels (in the sense that 
the processes  can lose any  message situated at the head of any load buffer in non-deterministic manner) 
 without adding the 
costly overhead that was necessary in the case of  store buffers. 
This means that we can apply the theory of 
{\it well-structured systems} \cite{DBLP:journals/bsl/Abdulla10,abdulla-general-96,FinkelS01} in a straightforward manner leading 
to a much simpler proof of decidability of safety properties.
Second, the absence of extra overhead means  that we obtain 
more efficient algorithms and better scalability 
(as shown by our experimental results).
Finally, the Dual TSO semantics allows extending the framework to
perform {\it parameterized verification} which is an
important paradigm in concurrent program verification.

In the future, we plan to apply our techniques to other memory models
and to combine with predicate abstraction for handling programs with unbounded data domain.


  
\bibliographystyle{alpha} 
\bibliography{journal-abdulla}

\newcommand{\etalchar}[1]{$^{#1}$}
\begin{thebibliography}{AAC{\etalchar{+}}12b}

\bibitem[AAA{\etalchar{+}}15]{tacas15:tso}
P.~Abdulla, S.~Aronis, M.F. { Atig}, B.~Jonsson, C.~Leonardsson, and
  K.~Sagonas.
\newblock Stateless model checking for {TSO} and {PSO}.
\newblock In {\em TACAS}, volume 9035 of {\em LNCS}, pages 353--367. Springer,
  2015.

\bibitem[AABN16]{DBLP:conf/concur/AbdullaABN16}
Parosh~Aziz Abdulla, Mohamed~Faouzi Atig, Ahmed Bouajjani, and Tuan~Phong Ngo.
\newblock The benefits of duality in verifying concurrent programs under {TSO}.
\newblock In {\em {CONCUR}}, volume~59 of {\em LIPIcs}, pages 5:1--5:15.
  Schloss Dagstuhl - Leibniz-Zentrum fuer Informatik, 2016.

\bibitem[AABN17]{abdullaABN17}
Parosh~Aziz Abdulla, Mohamed~Faouzi Atig, Ahmed Bouajjani, and Tuan~Phong Ngo.
\newblock Context-bounded analysis for {POWER}.
\newblock In {\em {TACAS} 2017}, pages 56--74, 2017.

\bibitem[AAC{\etalchar{+}}12a]{DBLP:conf/tacas/AbdullaACLR12}
P.A. Abdulla, M.F. Atig, Y.F. Chen, C.~Leonardsson, and A.~Rezine.
\newblock Counter-example guided fence insertion under {TSO}.
\newblock In {\em {TACAS} 2012}, pages 204--219, 2012.

\bibitem[AAC{\etalchar{+}}12b]{DBLP:conf/sas/AbdullaACLR12}
Parosh~Aziz Abdulla, Mohamed~Faouzi Atig, Yu{-}Fang Chen, Carl Leonardsson, and
  Ahmed Rezine.
\newblock Automatic fence insertion in integer programs via predicate
  abstraction.
\newblock In {\em {SAS} 2012}, pages 164--180, 2012.

\bibitem[AAC{\etalchar{+}}13]{DBLP:conf/tacas/AbdullaACLR13}
P.A. Abdulla, M.F. Atig, Y.F. Chen, C.~Leonardsson, and A.~Rezine.
\newblock Memorax, a precise and sound tool for automatic fence insertion under
  {TSO}.
\newblock In {\em {TACAS}}, pages 530--536, 2013.

\bibitem[AAJL16]{DBLP:conf/cav/AbdullaAJL16}
Parosh~Aziz Abdulla, Mohamed~Faouzi Atig, Bengt Jonsson, and Carl Leonardsson.
\newblock Stateless model checking for {POWER}.
\newblock In {\em {CAV}}, volume 9780 of {\em LNCS}, pages 134--156. Springer,
  2016.

\bibitem[AALN15]{DBLP:conf/netys/AbdullaALN15}
Parosh~Aziz Abdulla, Mohamed~Faouzi Atig, Magnus L{\aa}ng, and Tuan~Phong Ngo.
\newblock Precise and sound automatic fence insertion procedure under {PSO}.
\newblock In {\em {NETYS} 2015}, pages 32--47, 2015.

\bibitem[AAP15]{DBLP:conf/esop/AbdullaAP15}
P.A. Abdulla, M.F. Atig, and N.T. Phong.
\newblock The best of both worlds: Trading efficiency and optimality in fence
  insertion for {TSO}.
\newblock In {\em {ESOP} 2015}, pages 308--332, 2015.

\bibitem[ABBM10]{ABBM10}
M.~F. Atig, A.~Bouajjani, S.~Burckhardt, and M.~Musuvathi.
\newblock On the verification problem for weak memory models.
\newblock In {\em POPL}, 2010.

\bibitem[ABBM12]{ABBM12}
M.F. Atig, A.~Bouajjani, S.~Burckhardt, and M.~Musuvathi.
\newblock What's decidable about weak memory models?
\newblock In {\em ESOP}, volume 7211 of {\em LNCS}, pages 26--46. Springer,
  2012.

\bibitem[Abd10]{DBLP:journals/bsl/Abdulla10}
Parosh~Aziz Abdulla.
\newblock Well (and better) quasi-ordered transition systems.
\newblock {\em Bulletin of Symbolic Logic}, 16(4):457--515, 2010.

\bibitem[ABP11]{AtigBP11}
M.F. Atig, A.~Bouajjani, and G.~Parlato.
\newblock Getting rid of store-buffers in {TSO} analysis.
\newblock In {\em CAV}, volume 6806 of {\em LNCS}, pages 99--115. Springer,
  2011.

\bibitem[ACJT96]{abdulla-general-96}
P.A. Abdulla, K.~Cerans, B.~Jonsson, and Y.K. Tsay.
\newblock General decidability theorems for infinite-state systems.
\newblock In {\em LICS'96}, pages 313--321. IEEE Computer Society, 1996.

\bibitem[AG96]{adve-gharachorloo-96}
S.~Adve and K.~Gharachorloo.
\newblock Shared memory consistency models: a tutorial.
\newblock {\em Computer}, 29(12), 1996.

\bibitem[AH90]{DBLP:conf/isca/AdveH90}
S.~Adve and M.~D. Hill.
\newblock Weak ordering - a new definition.
\newblock In {\em ISCA}, 1990.

\bibitem[AKNT13]{DBLP:conf/esop/AlglaveKNT13}
J.~Alglave, D.~Kroening, V.~Nimal, and M.~Tautschnig.
\newblock Software verification for weak memory via program transformation.
\newblock In {\em ESOP}, volume 7792 of {\em LNCS}, pages 512--532. Springer,
  2013.

\bibitem[AKT13]{AlglaveKT13}
J.~Alglave, D.~Kroening, and M.~Tautschnig.
\newblock Partial orders for efficient bounded model checking of concurrent
  software.
\newblock In {\em CAV}, volume 8044 of {\em LNCS}, pages 141--157, 2013.

\bibitem[AMT14]{DBLP:journals/toplas/AlglaveMT14}
Jade Alglave, Luc Maranget, and Michael Tautschnig.
\newblock Herding cats: Modelling, simulation, testing, and data mining for
  weak memory.
\newblock {\em {ACM} TOPLAS}, 36(2):7:1--7:74, 2014.

\bibitem[BAM07]{BAM07}
S.~Burckhardt, R.~Alur, and M.~M.~K. Martin.
\newblock {CheckFence}: checking consistency of concurrent data types on
  relaxed memory models.
\newblock In {\em PLDI}, pages 12--21. ACM, 2007.

\bibitem[BDM13]{DBLP:conf/esop/BouajjaniDM13}
Ahmed Bouajjani, Egor Derevenetc, and Roland Meyer.
\newblock Checking and enforcing robustness against {TSO}.
\newblock In {\em ESOP}, volume 7792 of {\em LNCS}, pages 533--553. Springer,
  2013.

\bibitem[BM08]{BM2008}
Sebastian Burckhardt and Madanlal Musuvathi.
\newblock Effective program verification for relaxed memory models.
\newblock In {\em CAV}, volume 5123 of {\em LNCS}, pages 107--120. Springer,
  2008.

\bibitem[BSS11]{BSS2011}
Jacob Burnim, Koushik Sen, and Christos Stergiou.
\newblock Testing concurrent programs on relaxed memory models.
\newblock In {\em ISSTA}, pages 122--132. ACM, 2011.

\bibitem[DL15]{DBLP:conf/oopsla/DemskyL15}
Brian Demsky and Patrick Lam.
\newblock Satcheck: Sat-directed stateless model checking for {SC} and {TSO}.
\newblock In {\em {OOPSLA} 2015}, pages 20--36. {ACM}, 2015.

\bibitem[DM14]{DM14}
Egor Derevenetc and Roland Meyer.
\newblock Robustness against {Power} is {PSpace}-complete.
\newblock In {\em ICALP (2)}, volume 8573 of {\em LNCS}, pages 158--170.
  Springer, 2014.

\bibitem[DMVY13]{DBLP:conf/sas/DanMVY13}
A.~Marian Dan, Y.~Meshman, M.~T. Vechev, and E.~Yahav.
\newblock Predicate abstraction for relaxed memory models.
\newblock In {\em SAS}, volume 7935 of {\em LNCS}, pages 84--104. Springer,
  2013.

\bibitem[DMVY17]{Dan201762}
Andrei Dan, Yuri Meshman, Martin Vechev, and Eran Yahav.
\newblock Effective abstractions for verification under relaxed memory models.
\newblock {\em Computer Languages, Systems and Structures}, 47, Part 1:62--76,
  2017.

\bibitem[DSB86]{DBLP:conf/isca/DuboisSB86}
M.~Dubois, C.~Scheurich, and F.~A. Briggs.
\newblock Memory access buffering in multiprocessors.
\newblock In {\em ISCA}, 1986.

\bibitem[FS01]{FinkelS01}
A.~Finkel and Ph. Schnoebelen.
\newblock Well-structured transition systems everywhere!
\newblock {\em Theor. Comput. Sci.}, 256(1-2):63--92, 2001.

\bibitem[HH16]{oopsla16}
Shiyou Huang and Jeff Huang.
\newblock Maximal causality reduction for {TSO} and {PSO}.
\newblock In {\em {OOPSLA} 2016}, pages 447--461, 2016.

\bibitem[Hig52]{higman:divisibility}
G.~Higman.
\newblock Ordering by divisibility in abstract algebras.
\newblock {\em Proc.\ London Math.\ Soc. (3)}, 2(7):326--336, 1952.

\bibitem[HVQF16]{DBLP:conf/pdp/HeVQF16}
Mengda He, Viktor Vafeiadis, Shengchao Qin, and Jo{\~{a}}o~F. Ferreira.
\newblock Reasoning about fences and relaxed atomics.
\newblock In {\em 24th Euromicro International Conference on Parallel,
  Distributed, and Network-Based Processing, {PDP} 2016, Heraklion, Crete,
  Greece, February 17-19, 2016}, pages 520--527, 2016.

\bibitem[KVY10]{KVY2010}
Michael Kuperstein, Martin~T. Vechev, and Eran Yahav.
\newblock Automatic inference of memory fences.
\newblock In {\em FMCAD}, pages 111--119. IEEE, 2010.

\bibitem[KVY11]{KVY2011}
Michael Kuperstein, Martin~T. Vechev, and Eran Yahav.
\newblock Partial-coherence abstractions for relaxed memory models.
\newblock In {\em PLDI}, pages 187--198. ACM, 2011.

\bibitem[Lam79]{lamport-79}
L.~Lamport.
\newblock How to make a multiprocessor computer that correctly executes
  multiprocess programs.
\newblock {\em IEEE Trans. Comp.}, C-28(9), 1979.

\bibitem[LNP{\etalchar{+}}12]{DBLP:conf/pldi/LiuNPVY12}
Feng Liu, Nayden Nedev, Nedyalko Prisadnikov, Martin~T. Vechev, and Eran Yahav.
\newblock Dynamic synthesis for relaxed memory models.
\newblock In {\em {PLDI} '12}, pages 429--440, 2012.

\bibitem[LV15]{DBLP:conf/icalp/LahavV15}
Ori Lahav and Viktor Vafeiadis.
\newblock Owicki-gries reasoning for weak memory models.
\newblock In {\em Automata, Languages, and Programming - 42nd International
  Colloquium, {ICALP} 2015, Kyoto, Japan, July 6-10, 2015, Proceedings, Part
  {II}}, pages 311--323, 2015.

\bibitem[LV16]{DBLP:conf/fm/LahavV16}
Ori Lahav and Viktor Vafeiadis.
\newblock Explaining relaxed memory models with program transformations.
\newblock In {\em {FM} 2016}, pages 479--495, 2016.

\bibitem[OSS09]{OSS2009}
S.~Owens, S.~Sarkar, and P.~Sewell.
\newblock A better x86 memory model: x86-tso.
\newblock In {\em TPHOL}, 2009.

\bibitem[SSO{\etalchar{+}}10]{SSONM2010}
P.~Sewell, S.~Sarkar, S.~Owens, F.~Z. Nardelli, and M.~O. Myreen.
\newblock x86-tso: A rigorous and usable programmer's model for x86
  multiprocessors.
\newblock {\em CACM}, 53, 2010.

\bibitem[TLF{\etalchar{+}}16]{eps402285}
Ermenegildo Tomasco, Truc~Nguyen Lam, Bernd Fischer, Salvatore~La Torre, and
  Gennaro Parlato.
\newblock Embedding weak memory models within eager sequentialization.
\newblock October 2016.

\bibitem[TLI{\etalchar{+}}16]{fmcad16}
Ermenegildo Tomasco, Truc~Nguyen Lam, Omar Inverso, Bernd Fischer, Salvatore~La
  Torre, and Gennaro Parlato.
\newblock Lazy sequentialization for tso and pso via shared memory
  abstractions.
\newblock In {\em {FMCAD}’16}, pages 193--200, 2016.

\bibitem[TW16]{DBLP:conf/ictac/TravkinW16}
Oleg Travkin and Heike Wehrheim.
\newblock Verification of concurrent programs on weak memory models.
\newblock In {\em {ICTAC} 2016}, pages 3--24, 2016.

\bibitem[Vaf15]{DBLP:conf/popl/Vafeiadis15}
Viktor Vafeiadis.
\newblock Separation logic for weak memory models.
\newblock In {\em Proceedings of the Programming Languages Mentoring Workshop,
  PLMW@POPL 2015, Mumbai, India, January 14, 2015}, page 11:1, 2015.

\bibitem[WG94]{sparc-v9-manual}
D.~Weaver and T.~Germond, editors.
\newblock {\em The SPARC Architecture Manual Version 9}.
\newblock PTR Prentice Hall, 1994.

\bibitem[YGLS04]{yang-gopalakrishnan-PDPS04}
Y.~Yang, G.~Gopalakrishnan, G.~Lindstrom, and K.~Slind.
\newblock Nemos: A framework for axiomatic and executable specifications of
  memory consistency models.
\newblock In {\em IPDPS}. IEEE, 2004.

\bibitem[ZKW15]{Zhang:pldi15}
N.~Zhang, M.~Kusano, and C.~Wang.
\newblock Dynamic partial order reduction for relaxed memory models.
\newblock In {\em PLDI}, pages 250--259. ACM, 2015.

\end{thebibliography}
%


\appendix

\section{Proof of Theorem~\ref{DTSO:TSO:equivalence:theorem}}
\label{proofs:DTSO:TSO:equivalence:theorem}
We prove the  theorem by showing its {\it if direction} and then {\it only if direction}. 
%
In the following, for a TSO (DTSO)-configuration 
$\conf=\tuple{\statemapping,\buffermapping,\mem}$, we
use $\onestatemappingof{\conf}$, $\onebuffermappingof{\conf}$, and
$\onememoryof{\conf}$ to denote
$\statemapping$, $\buffermapping$, and $\mem$ respectively.

\subsection
{\bf From Dual TSO to TSO}
We show the {\em if direction} of Theorem \ref{DTSO:TSO:equivalence:theorem}. Consider a DTSO-computation 
$$\comp_{\sf DTSO} = \conf_0\dtsomovesto{\transition_1} \conf_1 \dtsomovesto{\transition_2} \conf_2  \cdots \dtsomovesto{\transition_{n-1}} \conf_{n-1} \dtsomovesto{\transition_n} \conf_n$$
 where $c_0=\dinitconf$ and $\conf_i$ is of the form $\tuple{\statemapping_{i},\buffermapping_i,\mem_i}$ for all $i : 1 \leq i \leq n$ with $\statemapping_n=\statemapping_{target}$ and $\buffermapping_n(\proc)=\epsilon$ for all $\proc \in \procs$. 
 We will derive a TSO-computation $\comp_{\sf TSO}$ such that $\targetof{\comp_{\sf TSO}}$  is a configuration 
 of the form $\tuple{\onestatemappingof{\conf_n}, \buffermapping,\onememoryof{\conf_n}}$ where $\buffermapping(\proc)=\epsilon$ for all $\proc \in \procs$.
 
First, we  define some functions that we will use in the construction 
of the computation $\comp_{\sf TSO}$.
Then, we define a sequence of TSO-configurations that appear in $\comp_{\sf TSO}$.
Finally, we show that the  TSO-computation $\comp_{\sf TSO}$  exists. In particular, 
the target configuration $\targetof{\comp_{\sf TSO}}$ has the same local states as the target $\conf_n$ of the DTSO-computation $\comp_{\sf DTSO}$.

Let $1 \leq i_1 <  i_2 < \cdots <i_k \leq n$  be the sequence of indices such that $t_{i_1}t_{i_2} \ldots t_{i_k}$ is the sequence of write or atomic read-write operations occurring in the computation $\comp_{\sf DTSO}$. In the following, we assume  that $i_0=0$.

For each   $j: 0 \leq j \leq n$, we associate a mapping function $\mathsf{index}_j:  \procs \ra \{0,\ldots,k\}^* $ that associates for each process $\proc\in\procs$ and each message  
at the  position $\ell : 1  \leq \ell \leq \sizeof{\twobuffermappingof{\conf_j}\proc}$
in the load buffer $\twobuffermappingof{\conf_j}\proc$
  the index $\mathsf{index}_j(\proc)(\ell)$, i.e., the index of the last write or atomic write-read operations at the moment   this message has been added to the load buffer. 
Formally, we define $\mathsf{index}_j$ 
as follows: 
\begin{enumerate}
\item
$\mathsf{index}_0(\proc):=\epsilon$ for all $\proc \in \procs$.
\item
Consider j such that $0\leq j < n$.
Recall that
 $\conf_{j} \dtsomovesto{t_{j+1}} \conf_{j+1}$ with $\transition_{j+1} \in \transitions_\proc \cup \transitions'_{\proc}$. We define $\mathsf{index}_{j+1}$  based on $\mathsf{index}_{j}$:

\begin{itemize}
\item {\sf Nop, read, fence, arw:} If 
$\transition_{j+1}$ is  of  the following forms   $\tuple{\state,\nop,\state'}$, $\tuple{\state,\rop(\xvar,\data),\state'}$, $\tuple{\state,\fenceop,\state'}$, or $\tuple{\state,\arw(\xvar,\data,\data'),\state'}$, then $\mathsf{index}_{j+1}:=\mathsf{index_{j}}$.

\item {\sf Write:} If
$\transition_{j+1}$  is of the form $\tuple{\state,\wop(\xvar,\data),\state'}$, then  $\mathsf{index}_{j+1}:=\mathsf{index}_j\update{\proc}{r \app \mathsf{index}_j(\proc)}$ with $i_r=j+1$.

\item {\sf Propagate:}
If $\transition_{j+1}$ is of the form $\propagate_\proc^\xvar$, then $\mathsf{index}_{j+1}:=\mathsf{index}_j\update{\proc}{r \app \mathsf{index}_j(\proc)}$ where $r: 0 \leq r \leq k$  is the maximal index such that  $i_r \leq j+1$.

\item {\sf Delete:}
If $\transition_{j+1}$ is  of the form $\delete_\proc$, then  $\mathsf{index}_{j}:=\mathsf{index}_{j+1}\update{\proc}{ \mathsf{index}_{j+1}(\proc) \app  r}$ with $r= \mathsf{index}_{j}(\proc)(\sizeof{\mathsf{index}_j(\proc)})$.

\end{itemize}
\end{enumerate}

Next, we associate for each  process $\proc \in \procs$ and  $j : 0 \leq j \leq n$, the  memory view  ${\sf view}_\proc(\conf_j)$ of the process $\proc$ in the configuration $\conf_{j}$  as follows:

\begin{enumerate}

\item If $\twobuffermappingof{\conf_j}\proc=\epsilon$, then    ${\sf view}_\proc(\conf_j):=r$ where  $r: 0 \leq r \leq k$ is the maximal index such that $i_r \leq j$.

\item If $\twobuffermappingof{\conf_j}\proc \neq \epsilon$, then ${\sf view}_\proc(\conf_j):={\mathsf{index}_j(\proc)}(\sizeof{\mathsf{index_j}(\proc)})$.

\end{enumerate}

\begin{exa}
\label{dtso-tso-contruction:example1}
We give an example of how to calculate the functions {\tt index} and {\tt view} for a DTSO-computation. 
Let consider the following DTSO-computation 
$$
\comp_{\sf DTSO} = \conf_0\dtsomovesto{\transition_1} \conf_1 \dtsomovesto{\transition_2} \conf_2  
\dtsomovesto{\transition_3} \conf_3
\dtsomovesto{\transition_4} \conf_4
\dtsomovesto{\transition_5} \conf_5  
$$
containing only transitions of a process $\proc$
with two variables $\xvar$ and $\yvar$
where
$\conf_i=\tuple{\statemapping_i,\buffermapping_i,\mem_i}$ for all $i: 0\leq i \leq n=5$
such that:
\begin{align*} 
\statemapping_0(\proc) &= \state_0, & \buffermapping_0(\proc) &= \emptyword, &  \mem_0(\xvar) &= 0,  \mem_0(\yvar) = 0, &\transition_1&=\tuple{\state_0,\wop(\xvar,1),\state_1},  \\
\statemapping_1(\proc) &= \state_1, & \buffermapping_1(\proc) &= \tuple{\xvar,1,{\textsc{own}}}, &  \mem_1(\xvar) &= 1, \mem_1(\yvar) = 0, & \transition_2&=\propagate_\proc^\yvar,  \\
\statemapping_2(\proc) &= \state_1, & \buffermapping_2(\proc) &= (\yvar,0)\app\tuple{\xvar,1,{\textsc{own}}}, &  \mem_2(\xvar) &= 1,   \mem_2(\yvar) = 0, & \transition_3&=\delete_\proc,  \\
 \statemapping_3(\proc) &= \state_1, & \buffermapping_3(\proc) &= (\yvar,0), &  \mem_3(\xvar) &= 1,   \mem_3(\yvar) = 0, & \transition_4&=(\state_1,\rop(\yvar,0),\state_2),  \\
\statemapping_4(\proc) &= \state_2, & \buffermapping_4(\proc) &= (\yvar,0), &  \mem_4(\xvar) &= 1,   \mem_4(\yvar) = 0, & \transition_5&=  \delete_\proc,\\
 \statemapping_5(\proc) &= \state_2, & \buffermapping_5(\proc) &= \emptyword, &  \mem_5(\xvar) &= 1,   \mem_5(\yvar) = 0.
\end{align*}

We note that
$n=5$ and $\comp_{\sf DTSO}$ contains only transitions of the process $\proc$.
We also note that
$k=1$ and 
$i_1=1$ is the index of the only write transition
$\transition_1$ occurring in the computation
$\comp_{\sf DTSO}$.
Following the above definitions of  {\tt index} and {\tt view}, we define the functions {\tt index} and {\tt view}  as follows:
\begin{enumerate}
\item
For each $j:0\leq j \leq n=5$, we define the mapping function $\mathsf{index}_j(\proc)$: 
\begin{align*} 
\mathsf{index}_0(\proc)&=\emptyword, & \mathsf{index}_1(\proc)&=1, & \mathsf{index}_2(\proc) &=1.1,\\
\mathsf{index}_3(\proc)&=1, & \mathsf{index}_4(\proc)&=1, &\mathsf{index}_5(\proc)&=\emptyword.
\end{align*} 
\item
For each $j:0\leq j \leq n=5$, we define the  memory view  ${\sf view}_\proc(\conf_j)$:
\begin{align*} 
{\sf view}_\proc(\conf_0)&=0, & {\sf view}_\proc(\conf_1)&=1, & {\sf view}_\proc(\conf_2)&=1,\\
{\sf view}_\proc(\conf_3)&=1, & {\sf view}_\proc(\conf_4)&=1, & {\sf view}_\proc(\conf_5)&=1. \tag*{$\triangle$}
\end{align*}
\end{enumerate}
\end{exa}

Now, let $\prec$ be an arbitrary total order on the set of processes and let $\proc_{\it min}$ and $\proc_{\it max}$  be the smallest and largest elements of $\prec$ respectively.  For $\proc \neq \proc_{\it max}$, we define $\mathit{succ}(\proc)$ to be the successor of $\proc$ wrt.\ $\prec$, i.e., $\proc \prec \mathit{succ}(\proc)$ and there is no $\proc'$ with $\proc \prec \proc' \prec \mathit{succ}(\proc)$. We define $\mathit{prev}(\proc)$ for $\proc \neq \proc_{\it min}$  analogously.

The computation $\comp_{\sf TSO}$ will consist of $k+1$ phases (henceforth referred to as  the phases $0,1,2, \ldots, k$). In fact, $\comp_{\sf TSO}$ will have the {\it same} sequence of memory updates as $\comp_{\sf DTSO}$. At the phase $r$, the computation $\comp_{\sf TSO}$ {\it simulates} the movements of the processes where their memory view index  is $r$. The order in which the processes are simulated during phase $r$ is defined by the ordering $\prec$. First, process $\proc_{\it min}$ will perform a sequence of transitions. This sequence is derived from the sequence of transitions it performs in $\comp_{\sf DTSO}$ where its memory view index is $r$,
including 
``no'',
write,
read,
fence 
transitions.
Then, the next process performs its transitions. This continues until $\proc_{\it max}$ has made all its transitions. When all processes have performed their transitions in phase $r$, 
we execute exactly
one update transition (possibly with a write transition)
or one atomic read-write transition
in order to move to  phase 
$r+1$.
We start the
phase $r+1$  by letting $\proc_{\it min}$ execute its transitions, and so on. 

Formally, we define a {\em scheduling function}
$\alpha(r,\proc,\ell)$ 
that gives
for each
$r: 0 \leq r \leq k$, $\proc \in \procs$, and $\ell \geq 1$
 a natural number $j: 0 \leq j \leq n$
 such that 
 process $\proc$ executes the transition
 $\transition_j$ as its $\ell^{th}$ transition during phase $r$. 
The scheduling function $\alpha$ is defined as  follows where $r: 0 \leq r \leq k$, $\proc \in \procs$, and $\ell \geq 0$:

\begin{enumerate}

\item $\alpha(r,\proc,\ell+1)$ is defined to be the smallest $j$ such that $\alpha(r,\proc,\ell) < j$, $\transition_j \in \transitions_\proc$ and ${\sf view}_\proc(\conf_j)=r$. Intuitively, the $(\ell+1)^{th}$ transition of process $\proc$ during phase $r$ is defined by the next transition from $\transition_{\alpha(r,\proc,\ell)}$ that belongs to $\transitions_\proc$. Notice that ${\alpha(r,\proc,\ell+1)}$ is defined only for finitely many $\ell$.

\item 
If $\{{j}\,|\,{{\sf view}_\proc(\conf_j)=r}\}\neq \emptyset$, we define
$\alpha(r,\proc, 0):= \mathit{min}\{{j}\,|\,{{\sf view}_\proc(\conf_j)=r}\}$.
Otherwise,  we define $\alpha(r,\proc, 0):=\alpha(r-1,\proc,\sharp(r-1,\proc))$  where
\begin{align*}
\sharp(r,\proc):= \mathit{max}\{\ell\,|\, 
\ell \geq 0,
\alpha(r,\proc,\ell) \,\,\, \text{is defined}\}.
\end{align*}

Intuitively,
phase $r$ starts for process $\proc$ at the point where its memory view index  becomes equal to $r$. Notice that $\alpha(0,\proc,0)=0$ for all $\proc \in \procs$  since all processes are initially in phase $0$.


\end{enumerate}

\begin{exa}
\label{dtso-tso-contruction:example2}
In the following, we show how to calculate  the 
scheduling function $\alpha(r,\proc,\ell)$ and $\sharp(r,\proc)$
 where $r: 0 \leq r \leq k$, $\proc \in \procs$, and $\ell \geq 0$
for the DTSO-computation $\comp_{\sf DTSO}$
given in Example~\ref{dtso-tso-contruction:example1}.
We recall that $k=1$, $n=5$ and the definitions of the two functions {\tt index} and {\tt view} are given in Example~\ref{dtso-tso-contruction:example1}.
We also recall that $\comp_{\sf DTSO}$ contains only transitions of the process $\proc$.
The constructed TSO-computation $\tsocomp$ from $\comp_{\sf DTSO}$ will consist of $k+1=2$ phases, referred as the phase $0$ and the phase $1$.
In order to define the transitions of the process $\proc$ in different phases,
for each $r: 0 \leq r \leq k=1$ and $\ell \geq 0$,  the 
scheduling function $\alpha(r,\proc,\ell)$ and $\sharp(r,\proc)$ is defined as follows:
\begin{align*}
\alpha(0,\proc,0)&=0, & \alpha(1,\proc,0)&=1, & \alpha(1,\proc,1)&=4,\\
\sharp(0,\proc)&=0, &  \sharp(1,\proc)&=1.  \tag*{$\triangle$}
\end{align*}
\end{exa}

In order to define $\comp_{\sf TSO}$, we first define the set of configurations that will appear in $\comp_{\sf TSO}$. In more detail, for each $r: 0\leq r \leq k$, $\proc \in \procs$, and $\ell: 0 \leq \ell \leq \sharp(r,\proc)$, we define a TSO-configuration $\dconf_{r,\proc,\ell}$ based on the DTSO-configurations  that are appearing in $\comp_{\sf DTSO}$. We will define $\dconf_{r, \proc,\ell}$ by defining its local states, buffer contents, and memory state. 
\begin{enumerate}
\item
We define the {\it local states} of the processes as follows:
\begin{itemize}
\item
$\twostatemappingof{\dconf_{r,\proc,\ell}}\proc:=
\twostatemappingof{\conf_{\alpha(r,\proc,\ell)}}\proc$.
After process $\proc$ has performed its $\ell^{\it th}$ transition during phase $r$,
its local state is identical to its local state in the corresponding
DTSO-configuration $\conf_{\alpha(r,\proc,\ell)}$.
\item
If $\proc'\procordering\proc$ then
$\twostatemappingof{\dconf_{r,\proc,\ell}}{\proc'}:=
\twostatemappingof{\conf_{\alpha(r,\proc',\numberof{r}{\proc'})}}{\proc'}$,
i.e.\ the state of $\proc'$ will not change while
$\proc$ is making its moves.
This state is given by the local state of $\proc'$ after it made its last move
during phase $r$.
\item
If $\proc\procordering\proc'$ then
$\twostatemappingof{\dconf_{r,\proc,\ell}}{\proc'}:=
\twostatemappingof{\conf_{\alpha(r,\proc',0)}}{\proc'}$,
i.e.\ the local state of $\proc'$ will not change while
$\proc$ is making its moves.
This state is given by the local state of $\proc'$ when it entered phase $r$
(before it has made any moves during phase $r$).
\end{itemize}
\item
To define the {\it buffer contents},  we give more definitions.
For  a DTSO-message $a$  of the form $(\xvar,\data)$, we define
$\sbtotso{a}$ to be 
$\emptyword$. For a DTSO-message $a$  of the form $(\xvar,\data,{\textsc{own}})$, we define
$\sbtotso{a}$ to be 
$(\xvar,\data)$. 
From that, we define $\sbtotso{\epsilon}=\epsilon$ and 
$\sbtotso{a_1a_2\cdots a_n}:=
\sbtotso{a_1}\app\sbtotso{a_2}
\cdots\sbtotso{a_n}$, i.e., 
we concatenate the results of applying the operation
individually on each $a_i$ with $1 \leq i \leq n$.
We define $\dtsototso{w}$ for a word $w \in \left((\vars\times\dataset) \cup (\vars\times\dataset\times \{\textsc{own}\})\right)^*$ as follows: If $|w| = 0$ then $\dtsototso{w}:=\epsilon$, else $\dtsototso{w}:=\sbtotso{w(1) w(2)\cdots w(|w|-1)}$.
In the following, we give the definition of the buffer contents of $\dconf_{r,\proc,\ell}$:

\begin{itemize}
\item  $\twobuffermappingof{\dconf_{r,\proc,\ell}}\proc:=\dtsototso{\twobuffermappingof{\conf_{\alpha(r,\proc,\ell)}}\proc}$. After process $\proc$ has performed its $\ell^{th}$ transition during phase $r$, the content of its buffer is defined by considering
 the buffer of the corresponding DTSO-configuration $\conf_{\alpha(r,\proc,\ell)}$ and 
 only messages belong to $\proc$ (i.e., of the form $\tuple{\xvar,\data,{\textsc{own}}}$).

\item
If $\proc'\procordering\proc$ then
$\twobuffermappingof{\dconf_{r,\proc,\ell}}{\proc'}:=
\twobuffermappingof{\conf_{\alpha(r,\proc',\numberof{r}{\proc'})}}{\proc'}$.
In a similar manner to the case of states,
if $\proc'\procordering\proc$ then the buffer of $\proc'$ will not change while
$\proc$ is making its moves.
\item
If $\proc\procordering\proc'$ then
$\twobuffermappingof{\dconf_{r,\proc,\ell}}{\proc'}:=
\twobuffermappingof{\conf_{\alpha(r,\proc',0)}}{\proc'}$.
In a similar manner to the case of states,
if $\proc\procordering\proc'$ then the buffer of $\proc'$ will not be changed while
$\proc$ is making its moves.
\end{itemize}
\item
We define the {\it memory state} as follows:
\begin{itemize}
\item
$\onememoryof{\dconf_{r,\proc,\ell}}:=\onememoryof{\conf_{i_r}}$.
This definition is consistent with the fact that
all processes have identical views of the memory when they are in the same phase $r$.
This view is defined by the memory component of $\conf_{i_r}$.
\end{itemize}
\end{enumerate}

\begin{exa}
\label{dtso-tso-contruction:example3}
In the following, we give  the configurations $\dconf_{r,\proc,\ell}$
for all $r: 0\leq r \leq k$, $\proc \in \procs$, and $\ell: 0 \leq \ell \leq \sharp(r,\proc)$
that will appear in the constructed TSO-computation
$\tsocomp$ from $\comp_{\sf DTSO}$
given in Example~\ref{dtso-tso-contruction:example1}.
We call that $k=1$, $n=5$,
and $\comp_{\sf DTSO}$ contains only transitions of the process $\proc$.
We also recall that the
scheduling function $\alpha(r,\proc,\ell)$ and $\sharp(r,\proc)$ 
are given  in  Example~\ref{dtso-tso-contruction:example2}.

For each $r: 0\leq r \leq k=1$  and $\ell: 0 \leq \ell \leq \sharp(r,\proc)$, we define the TSO-configurations $\dconf_{r,\proc,\ell}=\tuple{\statemapping_{r,\proc,\ell}, \buffermapping_{r,\proc,\ell},\mem_{r,\proc,\ell}}$ based on the DTSO-configurations  that are appearing in $\comp_{\sf DTSO}$ as follows:
\begin{align*} 
\dconf_{0,\proc,0}&:  & \statemapping_{0,\proc,0}(\proc) &= \state_0, & \buffermapping_{0,\proc,0}(\proc) &= \emptyword, &  \mem_{0,\proc,0}(\xvar) &= 0,   \mem_{0,\proc,0}(\yvar) = 0, \\
\dconf_{1,\proc,0}&: &   \statemapping_{1,\proc,0}(\proc) &= \state_1, & \buffermapping_{1,\proc,0}(\proc) &= \emptyword, &  \mem_{1,\proc,0}(\xvar) &= 1, \mem_{1,\proc,0}(\yvar) = 0, \\
\dconf_{1,\proc,1}&: &   \statemapping_{1,\proc,0}(\proc) &= \state_2, & \buffermapping_{1,\proc,1}(\proc) &= \emptyword, &  \mem_{1,\proc,1}(\xvar) &= 1,   \mem_{1,\proc,1}(\yvar) = 0. 
\end{align*} 

Finally, we  construct the TSO-computation 
 $$
\tsocomp = \dconf_{0,\proc,0}\tsomovesto{\transition'_1} \dconf'_{0,\proc,0}
\tsomovesto{\transition'_2} \dconf_{1,\proc,0}
\tsomovesto{\transition'_3} \dconf_{1,\proc,1}
$$
where
\begin{align*}
 \dconf'_{0,\proc,0} &=\tuple{\statemapping'_{0,\proc,0}, \buffermapping'_{0,\proc,0},\mem'_{0,\proc,0}},\\
 \transition'_1&=\tuple{\state_0,\wop(\xvar,1),\state_1},\\
 \transition'_2&=\updateop_\proc,\\
  \transition'_3&=\tuple{\state_1,\rop(\yvar,0),\state_2},\\
\end{align*}
and:
\begin{align*} 
\dconf'_{0,\proc,0}&:  & \statemapping'_{0,\proc,0}(\proc) &= \state_1, & \buffermapping'_{0,\proc,0}(\proc) &= (\xvar,1), &  \mem'_{0,\proc,0}(\xvar) &= 0,   \mem'_{0,\proc,0}(\yvar) = 0. 
\end{align*} 
Since there is only one update transition in both two computations $\comp_{\sf DTSO}$ and $\tsocomp$,
it is easy to see that
$\tsocomp$ has the same sequence of memory updates as $\comp_{\sf DTSO}$.
It is also easy to see that
 $\dconf_{0,\proc,0}=\initconf$
 and
  $\dconf_{1,\proc,\numberof{1}{\proc}}= \dconf_{1,\proc,1}=
\tuple{\onestatemappingof{\conf_5},\buffermapping,\onememoryof{\conf_5}}$
where
$\buffermapping(\proc)=
\twobuffermappingof{\conf_5}\proc=\epsilon$.
Therefore, $\tsocomp$ is a witness of the construction.
\myend
\end{exa}

The following lemma  
shows the existence of a TSO-computation $\tsocomp$ that starts from the initial
TSO-configuration and whose target has the same local state definitions
as the target $\conf_n$ of the DTSO-computation $\sbcomp$.
This concludes the proof of the if direction of Theorem~\ref{DTSO:TSO:equivalence:theorem}.
\begin{lem}
\label{tso:comp:lemma}
$\dconf_{0,\minproc,0}\tsomovesto{\tsocomp}\dconf_{k,\maxproc,\numberof{k}{\maxproc}}$ for
some TSO-computation $\tsocomp$.
Furthermore,
$\dconf_{0,\minproc,0}$ is the initial TSO-configuration
and
\begin{align*}
\dconf_{k,\maxproc,\numberof{k}{\maxproc}}=
\tuple{\onestatemappingof{\conf_n},\buffermapping,\onememoryof{\conf_n}}
\end{align*}

where
$\onebuffermappingof\proc=
\twobuffermappingof{\conf_n}\proc=\epsilon$ for all $\proc \in \procs$.
\end{lem}

\begin{proof}
Lemmas~\ref{dtso:tso:singlestep:lemma}--\ref{dtso:tso:next:phase:lemma2}
show that the existence of the computation $\tsocomp$.
Lemma~\ref{final:tso:comp:lemma} and Lemma~\ref{init:tso:comp:lemma}
show the conditions on the initial and target configurations. 
\end{proof}

First, we start by establishing Lemma~\ref{prop-index}, Lemma~\ref{increasing}, and Lemma \ref{bufferdtso} that we will use later.

\begin{lem}
\label{prop-index}
For every $j: 0 \leq j \leq n$ and process $\proc \in \procs$, the following properties hold:

\begin{enumerate}

\item $\sizeof{\mathsf{index}_j(\proc)}=\sizeof{\twobuffermappingof{\conf_j}\proc}$.
\item  For every $\ell_1, \ell_2:1\leq \ell_1 \leq \ell_2 \leq \sizeof{\mathsf{index}_j(\proc)}$, $\mathsf{index}_j(\proc)(\ell_2) \leq \mathsf{index}_j(\proc)(\ell_1) \leq j$.

\item For every $\ell_1, \ell_2:1\leq \ell_1 < \ell_2 \leq \sizeof{\mathsf{index}_j(\proc)}$ such that $\threebuffermappingof{\conf_j}\proc{\ell_1}$ is of the form $\tuple{\xvar,\data,{\textsc{own}}}$, 
 $ \mathsf{index}_j(\proc)(\ell_2) < \mathsf{index}_j(\proc)(\ell_1)$. 
\item For every $\ell: 1 \leq \ell \leq \sizeof{\twobuffermappingof{\conf_j}\proc}$,  if  $r=\mathsf{index}_j(\proc)(\ell)$ and $\threebuffermappingof{\conf_j}\proc\ell$ is of the form $\tuple{\xvar,\data,{\textsc{own}}}$, then $t_{i_r} \in  \transitions_\proc$  and it is  of the form $\tuple{\state,\wop(\xvar,\data),\state'}$.

\item For every $\ell: 1 \leq \ell \leq \sizeof{\twobuffermappingof{\conf_j}\proc}$,  if  $r=\mathsf{index}_j(\proc)(\ell)$ and $\threebuffermappingof{\conf_j}\proc\ell$ is of the form $\tuple{\xvar,\data}$, then $\twomemoryof{\conf_{i_r}}\xvar=\data$.

\item For every $r_1,r_2: 0 \leq r_1 \leq r_2 \leq k$ such that   $r_1= {\textsc{min}}\{\mathsf{index_j(\proc)(\ell)} \,|\, \ell: 1 \leq \ell \leq \sizeof{\mathsf{index}_j(\proc)}\}$, $t_{r_2} \in \transitions_\proc$, and $t_{r_2}$ is of the form  $\tuple{\state,\wop(\xvar,\data),\state'}$, then there is an index $\ell: 1 \leq \ell \leq \sizeof{\mathsf{index}_j(\conf_j)(\proc)} $ such that $\mathsf{index}_j(\proc)(\ell)=r_2$ and  $\threebuffermappingof{\conf_j}\proc\ell=\tuple{\xvar,\data,{\textsc{own}}}$.
\end{enumerate}
\end{lem}

\begin{proof}
The lemma holds following an immediate consequence of the definition of $\mathsf{index}_j$.
\end{proof}

\begin{lem}
\label{increasing}
For every  process $\proc \in \procs$ and index $j : 0 \leq j < n$,  ${\sf view}_\proc(\conf_j) \leq {\sf view}_\proc(\conf_{j+1})$. Furthermore, $i_{{\sf view}_\proc(\conf_j)} \leq j $ and $i_{{\sf view}_\proc(\conf_{j+1})} \leq j+1$.

\end{lem}

\begin{proof}
The lemma holds following an  immediate consequence of the definitions of ${\sf view}_\proc$ and $\mathsf{index}_j$.
\end{proof}

\begin{lem}
\label{bufferdtso}
For every natural number $j$ such that $\alpha(r,\proc,\ell)\leq j <\alpha(r,\proc,\ell+1)-1$, 
$\dtsototso{\twobuffermappingof{\conf_{j}}\proc}=\dtsototso{\twobuffermappingof{\conf_{j+1}}\proc}$. \end{lem}

\begin{proof}
The proof is done by contradiction. Let us assume that there is  some $j: \alpha(r,\proc,\ell)\leq j <\alpha(r,\proc,\ell+1)-1$ such that
\begin{align*}
  \dtsototso{\twobuffermappingof{\conf_{j}}\proc} \neq \dtsototso{\twobuffermappingof{\conf_{j+1}}\proc}.
\end{align*}
Observe that the only three operations that can change the content of the load buffer of the process $\proc$  are write, delete and propagation operations. Since $\transition_{j} \notin \transitions_\proc$ (and so no write operation has been performed) and propagation will append messages of the form $(\xvar,\data)$,  this implies that $\transition_j$ is a delete transition of the process $\proc$ (i.e., $\transition_j=\delete_\proc$). Now, the only case  when  $\dtsototso{\twobuffermappingof{\conf_{j}}\proc} \neq \dtsototso{\twobuffermappingof{\conf_{j+1}}\proc}$ is where  $\twobuffermappingof{\conf_{j}}\proc$ is of the form $w \cdot (\yvar,\data',{\textsc{own}}) \cdot m$ with $m \in  \{(\xvar,\data), (\xvar,\data,{\textsc{own}})\,|\, \xvar \in \vars, \data \in \valset\}$. This implies that  $\twobuffermappingof{\conf_{j+1}}\proc=w \cdot (\yvar,\data',{\textsc{own}})$. Now we can use the third case  of Lemma \ref{prop-index} to prove that ${\sf view}_\proc(\conf_{j+1})>{\sf view}_\proc(\conf_{j})$. This contradicts the fact that ${\sf view}_\proc(\conf_{j+1}) \leq {\sf view}_\proc(\conf_{\alpha(r,\proc,\ell+1)}) $ (see Lemma \ref{increasing}) since  we have ${\sf view}_\proc(\conf_{\alpha(r,\proc,\ell)})={\sf view}_\proc(\conf_{\alpha(r,\proc,\ell+1)})=r$ (by definition),  ${\sf view}_\proc(\conf_{j}) \geq {\sf view}_\proc(\conf_{\alpha(r,\proc,\ell)}) $  (see Lemma \ref{increasing}) and ${\sf view}_\proc(\conf_{j+1})>{\sf view}_\proc(\conf_{j})$.
\end{proof}

Now we can start proving the existence of the computation $\tsocomp$ by showing that we can move from the configuration $\dconf_{r,\proc,\ell}$ to $\dconf_{r,\proc,\ell+1}$ using the transition $\transition_{\alpha(r,\proc,\ell+1)}$.

\begin{lem}
\label{dtso:tso:singlestep:lemma}
If $\ell<\numberof{r}{\proc}$ then
$\dconf_{r,\proc,\ell}\tsomovesto{\transition_{\alpha(r,\proc,\ell+1)}}\dconf_{r,\proc,\ell+1}$.
\end{lem}

\begin{proof}
We recall that $\transition_{\alpha(r,\proc,\ell+1)} \in \transitions_\proc$ by definition.
Therefore, $\transition_{\alpha(r,\proc,\ell+1)}$ is not
a propagation  transition  nor a delete transition. Furthermore,  suppose that $\transition_{\alpha(r,\proc,\ell+1)}$ is  an atomic read-write transition. 
It leads to the fact that  ${\sf view}_\proc(\conf_{\alpha(r,\proc,\ell+1)})  >  {\sf view}_\proc(\conf_{\alpha(r,\proc,\ell)})$, contradicting to the assumption that we are in phase $r$.
Hence, $\transition_{\alpha(r,\proc,\ell+1)}$ is not an atomic read-write transition.

Let $\transition_{\alpha(r,\proc,\ell+1)}\in \transitions_\proc$ be of the form 
$\tuple{\state,\op,\state'}$.
To prove the lemma, we will prove the following properties: 
\begin{enumerate}
\item
$\twostatemappingof{\dconf_{\alpha(r,\proc,\ell)}}\proc=\state$ and
$\onestatemappingof{\dconf_{r,\proc,\ell+1}}=
\onestatemappingof{\dconf_{\alpha(r,\proc,\ell)}}\update{\proc}{\state'}$,
\item
$\twostatemappingof{\dconf_{r,\proc,\ell+1}}{\proc'}=\twostatemappingof{\dconf_{r,\proc,\ell}}{\proc'}$ 
for $\proc'\neq\proc$,
\item
$\twobuffermappingof{\dconf_{r,\proc,\ell+1}}{\proc'}=
\twobuffermappingof{\dconf_{r,\proc,\ell}}{\proc'}$
for $\proc'\neq\proc$,
\item
$\onememoryof{\dconf_{r,\proc,\ell}}=\onememoryof{\dconf_{r,\proc,\ell+1}}=\onememoryof{\conf_{i_r}}$,
\item
The contents of $\twobuffermappingof{\dconf_{r,\proc,\ell}}\proc$
and $\twobuffermappingof{\dconf_{r,\proc,\ell+1}}\proc$ are compatible with the transition  $\transition_{\alpha(r,\proc,\ell+1)}$.
In means that with the properties (1)--(4), the property (5) allows that $\dconf_{r,\proc,\ell}\tsomovesto{\transition_{\alpha(r,\proc,\ell+1)}}\dconf_{r,\proc,\ell+1}$.
\end{enumerate}

We prove the property (1).
We see from definition of $\alpha$
 that $\transition_j\not\in\transitions_\proc$ for all
$j:\alpha(r,\proc,\ell)<j<\alpha(r,\proc,\ell+1)$.
It follows that
$\twostatemappingof{\conf_j}\proc=
\twostatemappingof{\conf_{\alpha(r,\proc,\ell)}}\proc$ for all
$j:\alpha(r,\proc,\ell)<j<\alpha(r,\proc,\ell+1)$.
In particular, we have
$\twostatemappingof{\conf_{\alpha(r,\proc,\ell+1)-1}}\proc=
\twostatemappingof{\conf_{\alpha(r,\proc,\ell)}}\proc$.
%
Then, from the fact that
$\conf_{\alpha(r,\proc,\ell+1)-1}
\dtsomovesto{\transition_\alpha(r,\proc,\ell+1)}
\conf_{\alpha(r,\proc,\ell+1)}$
and the definitions of $\dconf_{r,\proc,\ell}$ and $\dconf_{r,\proc,\ell+1}$, 
we know that $\twostatemappingof{\dconf_{r,\proc,\ell}}\proc=\twostatemappingof{\conf_{\alpha(r,\proc,\ell)}}\proc=
\twostatemappingof{\conf_{\alpha(r,\proc,\ell+1)-1}}\proc=\state$.
It  follows that 
\begin{align*}
  \twostatemappingof{\dconf_{r,\proc,\ell+1}}\proc=\twostatemappingof{\conf_{\alpha(r,\proc,\ell+1)}}\proc=\state'.
\end{align*}
This concludes the property (1).

We prove the property (2).
We see from the definitions of $\dconf_{\alpha(r,\proc,\ell)}$ and $\dconf_{\alpha(r,\proc,\ell+1)}$
that
if $\proc'\procordering\proc$ 
then  
$\twostatemappingof{\dconf_{r,\proc,\ell+1}}{\proc'}=
\twostatemappingof{\conf_{\alpha(r,\proc',\numberof{k}{\proc'})}}{\proc'}=
\twostatemappingof{\dconf_{r,\proc,\ell}}{\proc'}$.
Moreover, we have
if  $\proc\procordering\proc'$ 
then 
\begin{align*}
  \twostatemappingof{\dconf_{r,\proc,\ell+1}}{\proc'}=
  \twostatemappingof{\conf_{\alpha(r,\proc',0)}}{\proc'}=
  \twostatemappingof{\dconf_{r,\proc,\ell}}{\proc'}.
\end{align*}
This concludes the property (2).

We prove the properties (3) and (4). 
In a similar manner to the case of states, we can show 
the property (3). 
By the definitions of $\dconf_{\alpha(r,\proc,\ell)}$ and $\dconf_{\alpha(r,\proc,\ell+1)}$ and the fact that
$\ell < \ell+1\leq\numberof{r}{\proc}$, we have
$\onememoryof{\dconf_{r,\proc,\ell}}=\onememoryof{\conf_{i_r}}=\onememoryof{\dconf_{r,\proc,\ell+1}}$.
This concludes the property (4).

Now, it remains to prove the property (5).
We consider the cases where $\op$ is a write or a read operation.
The other cases can be treated in a similar way.
\begin{itemize}
\item
$\op=\wop(\xvar,\data)$:
We see from  Lemma \ref{bufferdtso} that  for all:
$j:\alpha(r,\proc,\ell)<j<\alpha(r,\proc,\ell+1)$
 $$\dtsototso{\twobuffermappingof{\conf_{\alpha(r,\proc,\ell)}}\proc} =\dtsototso{\twobuffermappingof{\conf_{j}}\proc}$$
 In particular, we have  $$\dtsototso{\twobuffermappingof{\conf_{\alpha(r,\proc,\ell)}}\proc} =\dtsototso{\twobuffermappingof{\conf_{\alpha(r,\proc,\ell+1)-1}}\proc}$$
Then,  
since
$\conf_{\alpha(r,\proc,\ell+1)-1}
\dtsomovesto{\transition_\alpha(r,\proc,\ell+1)}
\conf_{\alpha(r,\proc,\ell+1)}$,
we have
$\twobuffermappingof{\conf_{\alpha(r,\proc,\ell+1)}}\proc= \tuple{\xvar,\data,{\textsc{own}}}\app  \twobuffermappingof{\conf_{\alpha(r,\proc,\ell+1)-1}}\proc$.


We  will show that $\twobuffermappingof{\conf_{\alpha(r,\proc,\ell+1)-1}}\proc\neq \epsilon$ by contradiction. Let us suppose that $\twobuffermappingof{\conf_{\alpha(r,\proc,\ell+1)-1}}\proc= \epsilon$. By definition, we have  ${\sf view}_\proc(\conf_{\alpha(r,\proc,\ell+1)})=r'$ such that $i_{r'}={\alpha(r,\proc,\ell+1)}$. Furthermore, by applying Lemma \ref{increasing} to $\conf_{\alpha(r,\proc,\ell)}$, we know that $i_r \leq {\alpha(r,\proc,\ell)}$.
 Then, since $\alpha(r,\proc,\ell) <\alpha(r,\proc,\ell+1)$ by definition, we have $i_r < i_{r'}$. This contradicts to the fact that ${\sf view}_\proc(\conf_{\alpha(r,\proc,\ell+1)})=r$ by definition.
 Therefore,  we have $\twobuffermappingof{\conf_{\alpha(r,\proc,\ell+1)-1}}\proc\neq \epsilon$.

As a consequence of the fact that $\twobuffermappingof{\conf_{\alpha(r,\proc,\ell+1)-1}}\proc\neq \epsilon$, we know that 
$$\dtsototso{\twobuffermappingof{\conf_{\alpha(r,\proc,\ell+1)}}\proc}= (\xvar,\data) \cdot  \dtsototso{\twobuffermappingof{\conf_{\alpha(r,\proc,\ell)-1}}\proc}$$
Then, since
\begin{align*}
\dtsototso{\twobuffermappingof{\conf_{\alpha(r,\proc,\ell)}}\proc} =\dtsototso{\twobuffermappingof{\conf_{\alpha(r,\proc,\ell+1)-1}}\proc},
\end{align*}
it follows that
$\onebuffermappingof{\dconf_{r,\proc,\ell+1}}=
(\xvar,\data) \cdot \onebuffermappingof{\dconf_{r,\proc,\ell}}$.  Hence this implies that $\dconf_{r,\proc,\ell}\tsomovesto{\transition_{\alpha(r,\proc,\ell+1)}}\dconf_{r,\proc,\ell+1}$.

\item
$\op=\rop(\xvar,\data)$:
We see from
Lemma \ref{bufferdtso} that for all
$j:\alpha(r,\proc,\ell)<j<\alpha(r,\proc,\ell+1)$
  $$\dtsototso{\twobuffermappingof{\conf_{\alpha(r,\proc,\ell)}}\proc} =\dtsototso{\twobuffermappingof{\conf_{j}}\proc}$$  
  In particular,  we have 
  $$\dtsototso{\twobuffermappingof{\conf_{\alpha(r,\proc,\ell)}}\proc} =\dtsototso{\twobuffermappingof{\conf_{\alpha(r,\proc,\ell+1)-1}}\proc}$$
Then, 
since 
$\conf_{\alpha(r,\proc,\ell+1)-1}
\dtsomovesto{\transition_\alpha(r,\proc,\ell+1)}
\conf_{\alpha(r,\proc,\ell+1)}$,
we have
$\twobuffermappingof{\conf_{\alpha(r,\proc,\ell+1)}}\proc=   \twobuffermappingof{\conf_{\alpha(r,\proc,\ell+1)-1}}\proc$.  Therefore, $\twobuffermappingof{\dconf_{r,\proc,\ell+1}}\proc=\twobuffermappingof{\dconf_{r,\proc,\ell}}\proc$.
We consider two cases about the type of the operation $\op$:

\begin{itemize}
\item    {\sf Read own write}:
We see that there is an $i:1\leq i<\sizeof{\twobuffermappingof{\conf_{\alpha(r,\proc,\ell+1)-1}}\proc}$ such that
$\threebuffermappingof{\conf_{\alpha(r,\proc,\ell+1)-1}}\proc{i}=\tuple{\xvar,\data,{\textsc{own}}}$, and that
there are no $j: 1\leq j <i$ and $\data'\in\dataset$  such that $\threebuffermappingof{\conf_{\alpha(r,\proc,\ell+1)-1}}\proc{j}=\tuple{\xvar,\data',{\textsc{own}}}$.  As a consequence, this implies that there is an $i': 1 \leq i' \leq \sizeof{\dtsototso{\twobuffermappingof{\conf_{\alpha(r,\proc,\ell+1)-1}}\proc}}$ such that $\dtsototso{\threebuffermappingof{\conf_{\alpha(r,\proc,\ell+1)-1}}\proc{i'}}=\tuple{\xvar,\data}$ and there are no  $j':1\leq j' <i'$ and $\data'\in\dataset$  such that $\dtsototso{\threebuffermappingof{\conf_{\alpha(r,\proc,\ell+1)-1}}\proc{j}}=\tuple{\xvar,\data'}$. From the fact that
\begin{align*}
  \twobuffermappingof{\dconf_{r,\proc,\ell+1}}\proc=\twobuffermappingof{\dconf_{r,\proc,\ell}}\proc=\dtsototso{\onebuffermappingof{\conf_{\alpha(r,\proc,\ell+1)-1}}},
\end{align*}
we have $\dconf_{r,\proc,\ell}\tsomovesto{\transition_{\alpha(r,\proc,\ell+1)}}\dconf_{r,\proc,\ell+1}.$


\item {\sf Read memory}: 
We consider two cases:

\begin{enumerate}[label=$\triangleright$]

\item 
$\threebuffermappingof{\conf_{\alpha(r,\proc,\ell+1)-1}}\proc{i}=\tuple{\xvar,\data,{\textsc{own}}}$ where  $i=\sizeof{\twobuffermappingof{\conf_{\alpha(r,\proc,\ell+1)-1}}\proc}$ and
there are no $j: 1\leq j <i$ and $\data'\in\dataset$  such that $\threebuffermappingof{\conf_{\alpha(r,\proc,\ell+1)-1}}\proc{j}=\tuple{\xvar,\data',{\textsc{own}}}$:  Since ${\sf view}_\proc(\conf_{\alpha(r,\proc,\ell+1)-1})= r$, this implies from Lemma \ref{prop-index} that $t_{i_r}\in\transitions_\proc$  and  it is of the form $\tuple{\state,\wop(\xvar,\data),\state'}$. Hence, we see that $\twomemoryof{\conf_{i_r}}\xvar=\data$ and thus $\twomemoryof{\dconf_{r,\proc,\ell}}\xvar=\twomemoryof{\dconf_{r,\proc,\ell+1}}\xvar=\data$.
Therefore, we have $\dconf_{r,\proc,\ell}\tsomovesto{\transition_{\alpha(r,\proc,\ell+1)}}\dconf_{r,\proc,\ell+1}$.


\item
 $\tuple{\xvar,\data',{\textsc{own}}}\not\in\twobuffermappingof{\conf_{\alpha(r,\proc,\ell+1)-1}}\proc$
for all 
$\data'\in\dataset$: Thus $\onebuffermappingof{\conf_{\alpha(r,\proc,\ell+1)-1}}$ is of the form $w \cdot (\xvar,\data)$.
 Since ${\sf view}_\proc(\conf_{\alpha(r,\proc,\ell+1)-1})= r$, this implies from Lemma \ref{prop-index} that  $\twomemoryof{\conf_{i_r}}\xvar=\data$ and   thus $\twomemoryof{\dconf_{r,\proc,\ell}}\xvar=\twomemoryof{\dconf_{r,\proc,\ell+1}}\xvar=\data$.
 Therefore, we have
 $\dconf_{r,\proc,\ell}\tsomovesto{\transition_{\alpha(r,\proc,\ell+1)}}\dconf_{r,\proc,\ell+1}$.

\end{enumerate}
\end{itemize}
\end{itemize}

This concludes the proof of Lemma~\ref{dtso:tso:singlestep:lemma}.
\end{proof}

\begin{lem}
\label{dtso:tso:singlestep:succ:lemma}
If $\proc\procordering\maxproc$ then
$\dconf_{r,\proc,\numberof{r}{\proc}}=\dconf_{r,\procsuccof\proc,0}$.
\end{lem}

\begin{proof}

To prove the lemma, we will prove the following properties:
\begin{enumerate}
\item
$\twostatemappingof{\dconf_{r,\proc,\numberof{r}{\proc}}}{\proc'}=
\twostatemappingof{\dconf_{r,\procsuccof\proc,0}}{\proc'}$
for all $\proc'\in\procs$,
\item
$\twobuffermappingof{\dconf_{r,\proc,\numberof{r}{\proc}}}{\proc'}=
\twobuffermappingof{\dconf_{r,\procsuccof\proc,0}}{\proc'}$
for all $\proc'\in\procs$,
\item
$\onememoryof{\dconf_{r,\proc,\numberof{r}{\proc}}}=\onememoryof{\dconf_{r,\procsuccof\proc,0}}$.
\end{enumerate}

We prove the property (1) by considering four cases:
\begin{itemize}
\item
 $\proc'=\proc$:
From the definitions of $\dconf_{r,\procsuccof\proc,\ell}$ and $\dconf_{r,\proc,\numberof{r}{\proc}}$, we have 
$\twostatemappingof{\dconf_{r,\procsuccof\proc,\ell}}\proc=
 \twostatemappingof{\dconf_{r,\proc,\numberof{r}{\proc}}}\proc$
for all $\ell:0\leq\ell \leq \numberof{r}{\procsuccof\proc}$.
In particular, we see that
$\twostatemappingof{\dconf_{r,\procsuccof\proc,0}}\proc=
\twostatemappingof{\dconf_{r,\proc,\numberof{r}{\proc}}}\proc$.

\item
$\proc'=\procsuccof\proc$: 
From the definitions of $\dconf_{r,\proc,\ell}$ and $\dconf_{r,\procsuccof\proc,0}$,
$\twostatemappingof{\dconf_{r,\proc,\ell}}{\procsuccof\proc}
=
\twostatemappingof{\dconf_{r,\procsuccof\proc,0}}{\procsuccof\proc}
$
for all $\ell:0\leq\ell \leq \numberof{r}{\proc}$.
In particular, 
we see that
$\twostatemappingof{\dconf_{r,\proc,\numberof{r}{\proc}}}{\procsuccof\proc}
=
\twostatemappingof{\dconf_{r,\procsuccof\proc,0}}{\procsuccof\proc}
$.
It follows from $\proc'=\procsuccof\proc$
that 
$\twostatemappingof{\dconf_{r,\proc,\numberof{r}{\proc}}}{\proc'}
=
\twostatemappingof{\dconf_{r,\procsuccof\proc,0}}{\proc'}
$.

\item
$\proc'\procordering\proc\procordering\procsuccof\proc$:
From the definitions of $\dconf_{r,\procsuccof\proc,\ell}$ and $\dconf_{r,\proc',\numberof{r}{\proc'}}$, we know that 
$\twostatemappingof{\dconf_{r,\procsuccof\proc,\ell}}{\proc'}=
\twostatemappingof{\dconf_{r,\proc',\numberof{r}{\proc'}}}{\proc'}
$
for all
$\ell:0\leq\ell \leq \numberof{r}{\procsuccof\proc}$.
In particular, we see that
$\twostatemappingof{\dconf_{r,\procsuccof\proc,0}}{\proc'}=
\twostatemappingof{\dconf_{r,\proc',\numberof{r}{\proc'}}}{\proc'}
$.
Also, by a similar argument, we have
$\twostatemappingof{\dconf_{r,\proc,\ell}}{\proc'}=
\twostatemappingof{\dconf_{r,\proc',\numberof{r}{\proc'}}}{\proc'}
$
for all
$\ell:0\leq \ell \leq \numberof{r}{\proc}$.
In particular, we see that
$\twostatemappingof{\dconf_{r,\proc,\numberof{r}{\proc}}}{\proc'}=
\twostatemappingof{\dconf_{r,\proc',\numberof{r}{\proc'}}}{\proc'}$.
Hence, we have
$\twostatemappingof{\dconf_{r,\procsuccof\proc,0}}{\proc'}=
\twostatemappingof{\dconf_{r,\proc,\numberof{r}{\proc}}}{\proc'}$.

\item
$\proc\procordering\procsuccof\proc\procordering\proc'$: 
From the definitions of $\dconf_{r,\procsuccof\proc,\ell}$ and
$\dconf_{r,\proc',0}$,
we can see that
$\twostatemappingof{\dconf_{r,\procsuccof\proc,\ell}}{\proc'}=
\twostatemappingof{\dconf_{r,\proc',0}}{\proc'}
$
for all
$\ell:0\leq\ell \leq \numberof{r}{\proc}$.
In particular, we see that
$\twostatemappingof{\dconf_{r,\procsuccof\proc,0}}{\proc'}=
\twostatemappingof{\dconf_{r,\proc',0}}{\proc'}
$.
Also, by a similar argument, we have
$\twostatemappingof{\dconf_{r,\proc,\ell}}{\proc'}=
\twostatemappingof{\dconf_{r,\proc',0}}{\proc'}
$
for all
$\ell:0\leq\ell \leq \numberof{r}{\proc}$.
In particular, we see that
$\twostatemappingof{\dconf_{r,\proc,\numberof{r}{\proc}}}{\proc'}=
\twostatemappingof{\dconf_{r,\proc',0}}{\proc'}$.
Hence, we have
$\twostatemappingof{\dconf_{r,\procsuccof\proc,0}}{\proc'}=
\twostatemappingof{\dconf_{r,\proc,\numberof{r}{\proc}}}{\proc'}$.
\end{itemize}

We prove the properties (2) and (3).
By a similar manner to the case of states, we can show the property (2).
Finally, to show the property (3), by the definition of $\dconf_{r,\proc,\numberof{r}{\proc}}$, it follows that 
$\onememoryof{\dconf_{r,\proc,\numberof{r}{\proc}}}=\onememoryof{\conf_{i_r}}$. 
Also, by a similar argument, we have
$\onememoryof{\dconf_{r,\procsuccof\proc,0}}=\onememoryof{\conf_{i_r}}$.
Hence, we have
$\onememoryof{\dconf_{r,\proc,\numberof{r}{\proc}}}=\onememoryof{\dconf_{r,\procsuccof\proc,0}}$.

This concludes the proof of Lemma~\ref{dtso:tso:singlestep:succ:lemma}.
\end{proof}

\begin{lem}
\label{dtso:tso:next:phase:lemma}
If $r<k$ and  $\transition_{i_{r+1}} \in \transitions_{\proc^u}$ such that $\transition_{i_{r+1}}$ is of the form $\tuple{\state,\arw(\xvar,\data,\data'),\state'}$,
 then
$\dconf_{r,\maxproc,\numberof{r}{\maxproc}}\tsomovesto{\transition_{{i}_{r+1}}}
\dconf_{r+1,\minproc,0}$.
\end{lem}

\begin{proof}

To prove the lemma, we will prove the following properties:
\begin{enumerate}
\item
$\twostatemappingof{\dconf_{r,\maxproc,\numberof{r}{\maxproc}}}{\proc}=
\twostatemappingof{\dconf_{r+1,\minproc,0}}{\proc}$ for all $\proc \neq \proc^u$,
\item
$\twobuffermappingof{\dconf_{r,\maxproc,\numberof{r}{\maxproc}}}\proc=
\twobuffermappingof{\dconf_{r+1,\minproc,0}}\proc$
for all $\proc \neq \proc^{u}$,
\item
$\twostatemappingof{\dconf_{r,\maxproc,\numberof{r}{\maxproc}}}{\proc^u}=\state$, and $\twostatemappingof{\dconf_{r+1,\minproc,0}}{\proc^u}=\state'$,
\item
 $\twobuffermappingof{\dconf_{r,\maxproc,\numberof{r}{\maxproc}}}{\proc^u}=
\twobuffermappingof{\dconf_{r+1,\minproc,0}}{\proc^u}=\epsilon$,
\item
$\twomemoryof{\dconf_{r,\maxproc,\numberof{r}{\maxproc}}}\xvar=\data$ and
$\twomemoryof{\dconf_{r+1,\minproc,0}}\xvar=\data'$.
\end{enumerate}

We show the property (1).
Let 
$\proc\in\procs\setminus \set{\proc^u}$.
From the definition of $\dconf_{r,\maxproc,\numberof{r}{\maxproc}}$ and $\dconf_{r,\proc,\numberof{r}{\proc}}$,
$\twostatemappingof{\dconf_{r,\maxproc,\numberof{r}{\maxproc}}}{\proc}=
\twostatemappingof{\dconf_{r,\proc,\numberof{r}{\proc}}}{\proc}=
\twostatemappingof{\conf_{\alpha(r,\proc,\numberof{r}{\proc})}}{\proc}$
and 
$\twostatemappingof{\dconf_{r+1,\minproc,0}}{\proc}=
\twostatemappingof{\dconf_{r+1,\proc,0}}\proc=
\twostatemappingof{\conf_{\alpha(r+1,\proc,0)}}\proc$.
From the definition of $\alpha$, it follows that
$\transition_j\not\in\transitions_\proc$ for all
$j:\alpha(r,\proc,\numberof{r}{\proc})\leq j<\alpha(r+1,\proc,0)$. This implies that $\twostatemappingof{\conf_{\alpha(r+1,\proc,0)-1}}\proc= \twostatemappingof{\conf_{\alpha(r,\proc,\numberof{r}{\proc})}}\proc$. Now we have two cases: 

\begin{itemize}
\item  $\{{j}\,|\,{{\sf view}_\proc(\conf_j)=r+1}\}= \emptyset$: We see that $\alpha(r+1,\proc,0)=\alpha(r,\proc,\numberof{r}{\proc})$,  and hence that $\twostatemappingof{\dconf_{r,\maxproc,\numberof{r}{\maxproc}}}\proc=
\twostatemappingof{\dconf_{r+1,\minproc,0}}\proc$. 
\item  $\{{j}\,|\,{{\sf view}_\proc(\conf_j)=r+1}\} \neq \emptyset$:   Since ${\sf view}_\proc(\conf_{\alpha(r+1,\proc,0)-1})=r$, we can show that $\transition_{\alpha(r+1,\proc,0)} \notin \transitions_\proc$. This is done by contradiction as follows. In fact if $\transition_{\alpha(r+1,\proc,0)} \in \transitions_\proc$, then  it is either a write transition or an atomic read-write transition. This implies that in both cases that $\twobuffermappingof{\conf_{\alpha(r+1,\proc,0)-1}}\proc=\epsilon$ and that ${\sf view}_\proc(\conf_{\alpha(r+1,\proc,0)})=\alpha(r+1,\proc,0)$. Hence, we have $\alpha(r+1,\proc,0)=r+1$, and this leads to a contradiction since $\transition_{i_{r+1}} \in \transitions_{\proc^u}$. Thus, we have $\twostatemappingof{\dconf_{r,\maxproc,\numberof{r}{\maxproc}}}\proc=
\twostatemappingof{\dconf_{r+1,\minproc,0}}\proc$. 
\end{itemize}

In a similar manner to the case of states, we can show the property (2).
Now we show the properties (3) and (4).
Using a similar reasoning as for the process $\proc$,
we know that $\twostatemappingof{\conf_{\alpha(r+1,\proc^u,0)-1}}{\proc^u}= \twostatemappingof{\conf_{\alpha(r,\proc^u,\numberof{r}{\proc^u})}}{\proc^u}$. From the definition of $\comp_{\sf DTSO}$, it follows that
$\conf_{\alpha(r+1,\proc^u,0)-1}
\dtsomovesto{\transition_{\alpha(r+1,\proc^u,0)}} \conf_{\alpha(r+1,\proc^u,0)}$. 
Furthermore, since ${\sf view}_\proc(\conf_{\alpha(r+1,\proc^u,0)})=r+1$ and ${\sf view}_\proc(\conf_{\alpha(r+1,\proc^u,0)-1})<r+1$, we know that   $\transition_{\alpha(r+1,\proc^u,0)}= t_{i_{r+1}}$. This implies that $\twobuffermappingof{\conf_{\alpha(r+1,\proc^u,0)-1}}{\proc^u}=\twobuffermappingof{\conf_{\alpha(r+1,\proc^u,0)}}{\proc^u}=\epsilon$ and  $\twostatemappingof{\conf_{\alpha(r+1,\proc^u,0)-1}}{\proc^u}=\state$ and $\twostatemappingof{\conf_{\alpha(r+1,\proc^u,0)}}{\proc^u}=\state'$.
Now since 
$$\dtsototso{\twobuffermappingof{\conf_{\alpha(r,\proc^u,\numberof{r}{\proc^u})}}{\proc^u}}=\dtsototso{\twobuffermappingof{\conf_{\alpha(r+1,\proc^u,0)-1}}{\proc^u}},$$
we see that $$\twobuffermappingof{\dconf_{r,\maxproc,\numberof{r}{\maxproc}}}{\proc^u}=
\twobuffermappingof{\dconf_{r+1,\minproc,0}}{\proc^u}=\epsilon$$ and that $\twostatemappingof{\dconf_{r,\maxproc,\numberof{r}{\maxproc}}}{\proc^u}=\state$ and $\twostatemappingof{\dconf_{r+1,\minproc,0}}{\proc^u}=\state'$.
This concludes the properties (3) and (4).

We show the property (5). From the definition of $\comp_{\sf DTSO}$, it follows that
 $\onememoryof{\conf_{i_{r+1}}}=\onememoryof{\conf_{i_{r}}}\update{\xvar}{\data'}$ with $\twomemoryof{\conf_{i_r}}\xvar=\data$. 
Then from the definitions of $\dconf_{r,\maxproc,\numberof{r}{\maxproc}}$ and $\dconf_{r+1,\minproc,0}$,
we have the property (5).
 
 This concludes the proof of Lemma~\ref{dtso:tso:next:phase:lemma}.
 \end{proof}

\begin{lem}
\label{dtso:tso:next:phase:lemma2}
If $r<k$ and  $\transition_{i_{r+1}} \in \transitions_{\proc^u}$ such that $\transition_{i_{r+1}}$ is of the form $\tuple{\state,\wop(\xvar,\data),\state'}$,
 then
$\dconf_{r,\maxproc,\numberof{r}{\maxproc}}\tsomovesto{*}
\dconf_{r+1,\minproc,0}$.
\end{lem}
\begin{proof}

To prove the lemma, we will prove the following properties:
\begin{enumerate}
\item
$\twostatemappingof{\dconf_{r,\maxproc,\numberof{r}{\maxproc}}}\proc=
\twostatemappingof{\dconf_{r+1,\minproc,0}}\proc$ for all $\proc \neq \proc^u$,
\item
$\twobuffermappingof{\dconf_{r,\maxproc,\numberof{r}{\maxproc}}}\proc=
\twobuffermappingof{\dconf_{r+1,\minproc,0}}\proc$
for all $\proc \neq \proc^{u}$,
\item
The contents of buffers
$\twostatemappingof{\dconf_{r,\maxproc,\numberof{r}{\maxproc}}}{\proc^u}$ and $\twostatemappingof{\dconf_{r+1,\minproc,0}}{\proc^u}$
 are compatible, i.e.\ with the properties (1)--(2), the property (3) allows
$\dconf_{r,\maxproc,\numberof{r}{\maxproc}}\tsomovesto{*}
\dconf_{r+1,\minproc,0}$.
\end{enumerate}

We show the property (1).
Let  $\proc\in\procs\setminus\set{\proc^u}$.
From the definitions of $\dconf_{r,\maxproc,\numberof{r}{\maxproc}}$ and $\dconf_{r,\proc,\numberof{r}{\proc}}$,
$\twostatemappingof{\dconf_{r,\maxproc,\numberof{r}{\maxproc}}}\proc=
\twostatemappingof{\dconf_{r,\proc,\numberof{r}{\proc}}}\proc=
\twostatemappingof{\conf_{\alpha(r,\proc,\numberof{r}{\proc})}}\proc$
and that
$\twostatemappingof{\dconf_{r+1,\minproc,0}}\proc=
\twostatemappingof{\dconf_{r+1,\proc,0}}\proc=
\twostatemappingof{\conf_{\alpha(r+1,\proc,0)}}\proc$.
From the definition of $\alpha$, it follows that
$\transition_j\not\in\transitions_\proc$ for all
$j:\alpha(r,\proc,\numberof{r}{\proc})\leq j<\alpha(r+1,\proc,0)$. This implies that $\twostatemappingof{\conf_{\alpha(r+1,\proc,0)-1}}\proc= \twostatemappingof{\conf_{\alpha(r,\proc,\numberof{r}{\proc})}}\proc$. Now we have two cases: 

\begin{itemize}
\item   $\{{j}\,|\,{{\sf view}_\proc(\conf_j)=r+1}\}= \emptyset$: We see that $\alpha(r+1,\proc,0)=\alpha(r,\proc,\numberof{r}{\proc})$,  and hence that $\twostatemappingof{\dconf_{r,\maxproc,\numberof{r}{\maxproc}}}\proc=
\twostatemappingof{\dconf_{r+1,\minproc,0}}\proc$. 
\item  $\{{j}\,|\,{{\sf view}_\proc(\conf_j)=r+1}\} \neq \emptyset$: Since ${\sf view}_\proc(\conf_{\alpha(r+1,\proc,0)-1})=r$,  we can show that $\transition_{\alpha(r+1,\proc,0)} \notin \transitions_\proc$ . This is done by contradiction as follows. In fact if $\transition_{\alpha(r+1,\proc,0)} \in \transitions_\proc$, then  it is either a write transition or an atomic read-write transition. This implies that in both cases that $\twobuffermappingof{\conf_{\alpha(r+1,\proc,0)-1}}\proc=\epsilon$ and that ${\sf view}_\proc(\conf_{\alpha(r+1,\proc,0)})=\alpha(r+1,\proc,0)$. Hence, we have $\alpha(r+1,\proc,0)=r+1$, and this leads to a contradiction since $\transition_{i_{r+1}} \in \transitions_{\proc^u}$. Thus, we have $\twostatemappingof{\dconf_{r,\maxproc,\numberof{r}{\maxproc}}}\proc=
\twostatemappingof{\dconf_{r+1,\minproc,0}}\proc$. 
\end{itemize}

In a similar manner to the case of states, we can show the property (2).
Now we show the property (3).
Using a similar reasoning as for the process $\proc$, we know that $\twostatemappingof{\conf_{\alpha(r+1,\proc^u,0)-1}}{\proc^u}= \twostatemappingof{\conf_{\alpha(r,\proc^u,\numberof{r}{\proc^u})}}{\proc^u}$. From the definition of $\comp_{\sf DTSO}$, it follows that
$\conf_{\alpha(r+1,\proc^u,0)-1}
\dtsomovesto{\transition_{\alpha(r+1,\proc^u,0)}} \conf_{\alpha(r+1,\proc^u,0)}$. 
Furthermore, from the fact that ${\sf view}_\proc(\conf_{\alpha(r+1,\proc^u,0)})=r+1$ and ${\sf view}_\proc(\conf_{\alpha(r+1,\proc^u,0)-1})<r+1$, we  have two cases to consider:

\begin{itemize}
\item   $\twobuffermappingof{\conf_{\alpha(r+1,\proc^u,0)-1}}{\proc^u}=\epsilon$: It follows from the conditions 
for ${\sf view}_\proc(\conf_{\alpha(r+1,\proc^u,0)})$ and ${\sf view}_\proc(\conf_{\alpha(r+1,\proc^u,0)-1})$
 that  $\transition_{\alpha(r+1,\proc^u,0)}= t_{i_{r+1}}$, 
 $\twobuffermappingof{\conf_{\alpha(r+1,\proc^u,0)}}{\proc^u}=(\xvar,\data,{\textsc{own}})$, and that $\twostatemappingof{\conf_{\alpha(r+1,\proc^u,0)-1}}{\proc^u}=\state$ and $\twostatemappingof{\conf_{\alpha(r+1,\proc^u,0)}}{\proc^u}=\state'$. From
 \begin{align*}
 \dtsototso{\twobuffermappingof{\conf_{\alpha(r,\proc^u,\numberof{r}{\proc^u})}}{\proc^u}}=\dtsototso{\twobuffermappingof{\conf_{\alpha(r+1,\proc^u,0)-1}}{\proc^u}}
 \end{align*}
  we have $\twobuffermappingof{\dconf_{r,\maxproc,\numberof{r}{\maxproc}}}{\proc^u}=
\twobuffermappingof{\dconf_{r+1,\minproc,0}}{\proc^u}=\epsilon$.
Moreover, we have $\twostatemappingof{\dconf_{r,\maxproc,\numberof{r}{\maxproc}}}{\proc^u}=\state$, and $\twostatemappingof{\dconf_{r+1,\minproc,0}}{\proc^u}=\state'$.
Then, it is easy to see that $\onememoryof{\conf_{i_{r+1}}}=\onememoryof{\conf_{i_{r}}}\update{\xvar}{\data}$. Hence, we have $\dconf_{r,\maxproc,\numberof{r}{\maxproc}}\tsomovesto{\transition_{{i}_{r+1}}} \dconf' \tsomovesto{{\updateop_{\proc^{u}}}}
\dconf_{r+1,\minproc,0}$ for some configuration $\dconf'$.

\item $\twobuffermappingof{\conf_{\alpha(r+1,\proc^u,0)-1}}{\proc^u} \neq\epsilon$:  It follows from the conditions 
for ${\sf view}_\proc(\conf_{\alpha(r+1,\proc^u,0)})$ and ${\sf view}_\proc(\conf_{\alpha(r+1,\proc^u,0)-1})$
 that $\transition_{\alpha(r+1,\proc^u,0)}$ is a delete transition of the process $\proc^u$. As a consequence, $\twobuffermappingof{\conf_{\alpha(r+1,\proc^u,0)-1}}{\proc^u}= w \cdot (\xvar,\data,{\textsc{own}}) \cdot m$  and $\twobuffermappingof{\conf_{\alpha(r+1,\proc^u,0)}}{\proc^u}=w \cdot (\xvar,\data,{\textsc{own}})$. Hence, we see  that $$\dtsototso{\twobuffermappingof{\conf_{\alpha(r+1,\proc^u,0)}}{\proc^u}}= \dtsototso{\twobuffermappingof{\conf_{\alpha(r+1,\proc^u,0)-1}}{\proc^u}} \cdot (\xvar,\data)$$ and therefore $
\twobuffermappingof{\dconf_{r+1,\minproc,0}}{\proc^u}= \twobuffermappingof{\dconf_{r,\maxproc,\numberof{r}{\maxproc}}}{\proc^u} (\xvar,\data)$. Furthermore, we have $\twostatemappingof{\conf_{\alpha(r+1,\proc^u,0)}}{\proc^u}=\twostatemappingof{\conf_{\alpha(r+1,\proc^u,0)-1}}{\proc^u}$ and this implies that 
$\twostatemappingof{\dconf_{r,\maxproc,\numberof{r}{\maxproc}}}{\proc^u}=\twostatemappingof{\dconf_{r+1,\minproc,0}}{\proc^u}$.
Then, it is easy to see that $\onememoryof{\conf_{i_{r+1}}}=\onememoryof{\conf_{i_{r}}}\update{\xvar}{\data}$. Hence, we have $\dconf_{r,\maxproc,\numberof{r}{\maxproc}} \tsomovesto{{\updateop_{\proc^{u}}}}
\dconf_{r+1,\minproc,0}$.
\end{itemize}

This concludes the proof of Lemma~\ref{dtso:tso:next:phase:lemma2}.
\end{proof}

The following lemma shows that the  TSO-computation $\tsocomp$ starts from the initial TSO-configuration.
\begin{lem}
\label{init:tso:comp:lemma}
$\dconf_{0,\minproc,0}$ is the initial TSO-configuration.
\end{lem}
\begin{proof}
Let us take any $\proc\in\procs$.
By the definitions of $\dconf_{0,\minproc,0}$, $\dconf_{0,\proc,0}$, and $\alpha(0,\proc,0)$, it follows that $\twostatemappingof{\dconf_{0,\minproc,0}}\proc=
\twostatemappingof{\dconf_{0,\proc,0}}\proc=
\twostatemappingof{\conf_{\alpha(0,\proc,0)}}\proc=
\twostatemappingof{\conf_0}\proc=
\initstatemapping
$.
Also,
$\twobuffermappingof{\dconf_{0,\minproc,0}}\proc=
\twobuffermappingof{\dconf_{0,\proc,0}}\proc=
\dtsototso{\twobuffermappingof{\conf_{0}}\proc}=
\emptyword
$.
Finally, we have $\onememoryof{\dconf_{0,\minproc,0}}=\onememoryof{\conf_{i_0}}=\onememoryof{\conf_0}$.
The result follows immediately for the 
definition of the initial TSO-configuration.
This concludes the proof of Lemma~\ref{init:tso:comp:lemma}.
\end{proof}

The following lemma shows that the target of the TSO-computation $\tsocomp$ has the same local process states as
the target $\conf_n$ of the DTSO-computation $\sbcomp$.
\begin{lem}
\label{final:tso:comp:lemma}
$\onestatemappingof{\dconf_{k,\maxproc,\numberof{k}{\maxproc}}}=
\onestatemappingof{\conf_n}$.
\end{lem}
\begin{proof}
Let us take any $\proc\in\procs$.
By the definitions of $\dconf_{k,\maxproc,\numberof{k}{\maxproc}}$ and $\dconf_{k,\proc,\numberof{k}{\proc}}$, it follows that
$
\twostatemappingof{\dconf_{k,\maxproc,\numberof{k}{\maxproc}}}\proc=
\twostatemappingof{\dconf_{k,\proc,\numberof{k}{\proc}}}\proc=
\twostatemappingof{\conf_{\alpha(k,\proc,\numberof{k}{\proc})}}\proc
$.
By definition of $\alpha(k,\proc,\numberof{k}{\proc})$, we know that
$\transition_j\not\in\transitions_\proc$ for all
$j:\alpha(k,\proc,\numberof{k}{\proc})<j \leq n$.
Therefore, we have
$\twostatemappingof{\conf_j}\proc=
\twostatemappingof{\conf_n}\proc$ for all
$j:\alpha(k,\proc,\numberof{k}{\proc})\leq j<n$.
In particular, we have
$\twostatemappingof{\conf_{\alpha(k,\proc,\numberof{k}{\proc})}}\proc=
\twostatemappingof{\conf_n}\proc$.
Hence, we have
$\twostatemappingof{\dconf_{k,\maxproc,\numberof{r}{\maxproc}}}\proc=
\twostatemappingof{\conf_n}\proc$.
This concludes the proof of Lemma~\ref{final:tso:comp:lemma}.
\end{proof}


\subsection
{\bf From TSO to Dual TSO}
We show the {\em only if direction} of Theorem \ref{DTSO:TSO:equivalence:theorem}. Consider a TSO-computation 
$$\tsocomp=\conf_0\tsomovesto{\transition_1} \conf_1\tsomovesto{\transition_2} \conf_2\;\cdots\;
\tsomovesto{\transition_{n-1}}\conf_{n-1}\tsomovesto{\transition_{n}}\conf_n.$$
 where $c_0=\initconf$ and $\conf_i$ is of the form $\tuple{\statemapping_{i},\buffermapping_i,\mem_i}$ for all $i : 1 \leq i \leq n$ with $\statemapping_n=\statemapping_{target}$.
 In the following,  we will derive a  DTSO-computation $\comp_{\sf DTSO}$
such that $\onestatemappingof{\targetof{\comp_{\sf DTSO}}}=\onestatemappingof{\conf_n}$, i.e.\
the runs $\tsocomp$ and $\comp_{\sf DTSO}$ reach  the same set of local  states at the end of the runs.

Similar to the previous case, we  will first define some functions that we will use in the construction 
of the computation $\comp_{\sf DTSO}$.
Then, we define a sequence of DTSO-configurations that appear in $\comp_{\sf DTSO}$.
Finally, we show that the  DTSO-computation $\comp_{\sf DTSO}$  exists. In particular, 
the target configuration  $\targetof{\comp_{\sf DTSO}}$ has the same local states as the target $\conf_n$ of the TSO-computation $\comp_{\sf TSO}$.

For every $\proc \in \procs$,  let $\transitions_{\proc}^{{\sf w, arw}} \subseteq \transitions_{\proc}$ (resp. $\transitions_{\proc}^{{\sf u, arw}} \subseteq \transitions_{\proc} \cup \{\updateop_\proc\}$) be the set of write (resp. update) and atomic read-write transitions that can be performed by  process $\proc$. 
Let $\transitions^{\sf r}_{\proc}$ be the set of read transitions that can be performed by  $\proc$.

Let $I=i_1 \ldots i_m$ be the maximal sequence of indices  such that $1\leq i_1< i_2 <\cdots<i_m \leq n$ and for every $j : 1 \leq j \leq m$, we have $\transition_{i_j}$ is an {\it update} transition or an {atomic read-write transition} (i.e., $\transition_{i_j} \in {\bigcup_{\proc \in \procs } \, \transitions_\proc^{\sf u, arw}}$).
In the following, we assume that $i_0=0$.
Let $I_{\proc}$  be the maximal subsequence of $I$ such that
all transitions with indices in $I_{\proc}$ belong to process $\proc$.

Let $I'=i'_1 \ldots i'_m$ be the maximal sequence of indices  such that $1\leq i'_1< i'_2 <\cdots<i'_m \leq n$ and for every $j : 1 \leq j \leq m$, we have $\transition_{i'_j}$ is a {\it write} transition or an {atomic read-write transition} (i.e., $\transition_{i'_j} \in {\bigcup_{\proc \in \procs } \, \transitions_\proc^{\sf w, arw}}$).
Let $I'_{\proc}$ be the maximal subsequence of $I'$ such that
all transitions with indices in $I'_{\proc}$ belong to process $\proc$.
 Observe that $|I_{\proc}|=|I'_{\proc}|$.
 
 For every  $j: 1\leq j \leq m$, let $\processof{j}$ be  the process that has the 
  update  or atomic read-write transition $t_{i_j}$ where $i_j \in I$.
 We define $\matchingof{i_j}$ to be the index of the write (resp. atomic read-write) transition $t_{\matchingof{i_j}}$
 that corresponds to  the update (resp. atomic read-write) transition  $t_{i_j}$.
 Formally,
 $\matchingof{i_j}$:=$l$ where $\exists k: 1 \leq k \leq |I_{\proc}|, I_\proc(k)=i_j$, $I'_\proc(k) = l$ and $1 \leq l \leq n$.
Observe that  if $t_{i_j}$ is an atomic  read-write operation, then $\matchingof{i_j}=i_j$.

\begin{exa}
\label{tso:dualtso:construction:exampl1}
We give an example of how to calculate 
the function {\tt match}
for a TSO-computation.
Let us consider the following TSO-computation
$$
\tsocomp = \conf_0\tsomovesto{\transition_1} \conf_1 \tsomovesto{\transition_2} \conf_2  
\tsomovesto{\transition_3} \conf_3  
$$
containing only transitions of a process $\proc$
with two variables $\xvar$ and $\yvar$
where
$\conf_i=\tuple{\statemapping_i,\buffermapping_i,\mem_i}$ for all $i:0\leq i \leq n=3$
such that:
\begin{align*} 
\statemapping_0(\proc) &= \state_0, & \buffermapping_0(\proc) &= \emptyword, &  \mem_0(\xvar) &= 0,   \mem_0(\yvar) = 0, &\transition_1&=\tuple{\state_0,\wop(\xvar,1),\state_1},  \\
\statemapping_1(\proc) &= \state_1, & \buffermapping_1(\proc) &= (\xvar,1), &  \mem_1(\xvar) &= 0,   \mem_1(\yvar) = 0, &\transition_2&=\updateop_\proc, \\
\statemapping_2(\proc) &= \state_1, & \buffermapping_2(\proc) &= \emptyword, &  \mem_2(\xvar) &= 1,   \mem_2(\yvar) = 0, &\transition_3&=\tuple{\state_1,\rop(\yvar,0),\state_2},\\
\statemapping_3(\proc) &= \state_2, & \buffermapping_3(\proc) &= \emptyword, &  \mem_3(\xvar) &= 1,   \mem_3(\yvar) = 0.
\end{align*}

Following the above definitions of $I$ and $I'$, $I=i_1=2$ (hence, $m=1$) is the maximal sequence of indices of all update or atomic read-write transitions in $\tsocomp$.
In a similar way, $I'=i'_1=1$ is the maximal sequence of indices of all write or atomic read-write transitions in $\tsocomp$.
We note that
$t_{i_1}=t_2$ is an update transition,
and 
$t'_{i'_1}=t_1$ is a write transition.
Since the TSO-computation contains only transition of the process $\proc$,
it follows that
$I=I_\proc$ and $I'=I'_\proc$.
Following the above definition of {\tt match}, with $m=1$ and $n=3$, we have
$\matchingof{i_1}=\matchingof{2}=1$.
\myend
\end{exa}

For every $j : 1 \leq j \leq n$ such that $\transition_j  \in \transitions_\proc^{\sf r}$ is a read transition 
of  process $\proc$, we define  ${\sf fromMem}(\transition_j)$ as a predicate such that  ${\sf fromMem}(\transition_j)$ holds if and only if $(\xvar,\data') \notin\onebuffermappingof{\conf_{j-1}}$ for all $\data'\in\dataset$.

For every $j : 1 \leq j \leq n$ and  $\proc\in\procs$, we define the function ${\sf label}_\proc(j)$ as follows:
\begin{enumerate}
\item
${\sf label}_{\proc}(j):=(\xvar,\data)$ if $\transition_j \in \transitions_\proc^r$ is of the form $(\state, \rop(\xvar,\data),\state')$ and   ${\sf fromMem}(\transition_j)$ holds.  
\item
${\sf label}_{\proc}(j):=(\xvar,\data,{\textsc{own}})$ if $\transition_j =\updateop_\proc$  and   $\matchingof{j}=l$ with $\transition_l$  of the form $ (\state, \wop(\xvar,\data),\state')$.
\item
${\sf label}_{\proc}(j):=\epsilon$ otherwise. 
\end{enumerate}
Given a sequence  $\ell_1 \cdots \ell_k$ with $k\geq1$ and $1\leq \ell_i \leq n$ for all $i: 1\leq i \leq k$, we define ${\sf label}_\proc(\ell_1 \cdots \ell_k ):= {\sf label}_\proc(\ell_1)  \cdots  {\sf label}_\proc(\ell_{k-1}) \app {\sf label}_\proc(\ell_k)$. 
Let ${\sf label}^{\sf rev}_\proc(\ell_1 \cdots \ell_k )$ with $k\geq1$
and $1\leq \ell_i \leq n$ for all $i: 1\leq i \leq k$
 be the reversed string of 
${\sf label}_\proc(\ell_1 \cdots \ell_k )$, i.e.\
${\sf label}^{\sf rev}_\proc(\ell_1 \cdots \ell_k ):= {\sf label}_\proc(\ell_k) \app  {\sf label}_\proc(\ell_{k-1})  \cdots {\sf label}_\proc(\ell_1)$.

\begin{exa}
\label{tso:dualtso:construction:exampl2}
In the following, we give an example of how to calculate the functions {\tt fromMem} and {\tt label} for the TSO-computation $\tsocomp$ 
given in Example~\ref{tso:dualtso:construction:exampl1}.
We recall that $n=3$ 
and the function {\tt match} is given  in Example~\ref{tso:dualtso:construction:exampl1}.
We also note that $\transition_3$ is the only read transition in $\tsocomp$.
Following the above definition of {\tt fromMem}, we have that ${\tt fromMem}(\transition_3)$ holds.
Then following the  definition of {\tt match}, for every $j: 1\leq j \leq n=3$, we define the function ${\tt label}_\proc(j)$ as follows:
\begin{align*} 
{\tt label}_\proc(1)&= \emptyword, & 
{\tt label}_\proc(2)&= (\xvar,1,{\it own}), & 
{\tt label}_\proc(3)&= (\yvar,0).  \tag*{$\triangle$}
\end{align*}
\end{exa}

Below we show how to simulate all transitions of the TSO-computation $\comp_{\sf TSO}$ by a set of corresponding transitions in the DTSO-computation $\comp_{\sf DTSO}$.  
 The idea is  to divide the DTSO-computation to $m+1$ phases. 
 For 
$0\leq r< m$, each phase $r$  will end at the configuration $\dconf_{r+1}$ by the simulation of the transition $\transition_{\matchingof{i_{r+1}}}$ in $\comp_{\sf TSO}$.
Moreover, in phase $r: 0\leq r < m$,
we call the process $\processof{r+1}$ as the \emph{active } process, and other processes as the \emph{inactive } ones. We  execute only the DTSO-transitions of the active process $\proc=\processof{r+1}$ in its active phases. For other processes $\proc'\neq\proc$, we only change the content of their buffers in the active phases of $\proc$.
In the final phase  $r=m$, 
all processes will be considered to be active because the index ${i_{m+1}}$ is not defined in the definition of the sequence $I$.
The DTSO-computation  $\comp_{\sf DTSO}$ will end at the configuration $\dconf_{m+1}$.

%
For every $r : -1 \leq r < m$ and  $\proc\in\procs$, 
we define the function  $\posof{r}{\proc}$
 in an inductive way on $r$:
 \begin{enumerate}
 \item
$\posof{-1}{\proc}:=0$ for all $\proc\in\procs$.
\item
$\posof{r}{\proc}:=\posof{r-1}{\proc}$ for all $\proc\neq\processof{r+1}$ and $0\leq r < m$.
\item
$\posof{r}{\proc}:=\matchingof{i_{r+1}}$ for $\proc=\processof{r+1}$ and $0\leq r < m$.
\end{enumerate}
In other words, 
the function  $\posof{r}{\proc}$ is the index of the last simulated transition  by process $\proc$   at the end of phase $r$ in the computation  $\comp_{\sf TSO}$. 
Moreover, we use  $\posof{-1}{\proc}$ to be the index of the starting transition of process $\proc$ before phase $0$.

\begin{exa}
\label{tso:dualtso:construction:exampl3}
In the following, we give an example of how to calculate the function {\tt pos} for the TSO-computation $\tsocomp$ 
given in Example~\ref{tso:dualtso:construction:exampl1}.
We recall that
$m=1$ and 
$\tsocomp$ contains only transitions of the process $\proc$.
We also recall that the function {\tt match} is given in Example~\ref{tso:dualtso:construction:exampl2}.
Following the above definition of {\tt pos}, for every $r: -1\leq r < m=1$, we define the function 
$\posof{r}{\proc}$ as follows:
\begin{align*} 
\posof{-1}{\proc} &= 0, & 
\posof{0}{\proc} &= 1.  \tag*{$\triangle$}
\end{align*}
\end{exa}

Let $\dconf_0=\dinitconf=\tuple{\initstatemapping,\initbuffermapping,\initmem}$. We define 
the sequence of DTSO-configurations $\dconf_1,\ldots,\dconf_m,\dconf_{m+1}$   by defining their local states, buffer contents, and memory states as follows: 
\begin{enumerate}
\item For every configuration $\dconf_{r+1}$ where $ 0 \leq r < m$: 
\begin{itemize}
\item $\twostatemappingof{\dconf_{r+1}}\proc:=\twostatemappingof{\conf_{\posof{r}{\proc}}}\proc$,
\item $\onememoryof{\dconf_{r+1}}:=\onememoryof{\conf_{i_{r+1}}}$,
\item $\twobuffermappingof{\dconf_{r+1}}\proc:={\sf label}^{\sf rev}_\proc({\posof{r}{\proc}+1} \cdots {i_{r+1}} )$.
\end{itemize}
\item
For the final configuration $\dconf_{m+1}$:
\begin{itemize}
\item $\twostatemappingof{\dconf_{m+1}}\proc:=\twostatemappingof{\conf_n}\proc$,
\item $\onememoryof{\dconf_{m+1}}:=\onememoryof{\conf_n}$,
\item $\twobuffermappingof{\dconf_{m+1}}\proc:=\emptyword$.
\end{itemize}
\end{enumerate}
\begin{exa}
\label{tso:dualtso:construction:exampl4}
In the following, we give an example of how to calculate the sequence of configurations $\dconf_1,\ldots,\dconf_m,\dconf_{m+1}$
that will appear in the constructed DTSO-computation $\comp_{\sf DTSO}$ 
from the TSO-computation  $\tsocomp$ given in Figure~\ref{tso:dualtso:construction:exampl1}. 
We recall that $m=1$, $n=3$, and the TSO-computation  $\tsocomp$ contains only transitions of the process $\proc$.
We also recall that 
the functions $\tt label$ and {\tt pos} are given in Example~\ref{tso:dualtso:construction:exampl2}
and Example~\ref{tso:dualtso:construction:exampl3}, respectively. 

The DTSO-computation $\comp_{\sf DTSO}$ will consist of $m+1=2$ phases, referred as the phase $0$ and  the phase $1$. 
For each $r: 0 \leq r \leq m+1=2$, we define the DTSO-configuration $\dconf_r=(\statemapping'_r,\buffermapping'_r,\mem'_r)$ based on the TSO-configurations that are appearing in $\tsocomp$ as follows:
\begin{align*} 
\dconf_{0} &: & \statemapping'_0(\proc) &= \state_0, & \buffermapping'_0(\proc) &= \emptyword, &  \mem'_0(\xvar) &= 0,   \mem'_0(\yvar) = 0, \\
\dconf_{1} &: & \statemapping'_1(\proc) &= \state_1, & \buffermapping'_1(\proc) &= (\xvar,1,{\it own}), &  \mem'_1(\xvar) &= 1,   \mem'_1(\yvar) = 0, \\
\dconf_{2} &: & \statemapping'_2(\proc) &= \state_2, & \buffermapping'_1(\proc) &= \emptyword, &  \mem'_2(\xvar) &= 1,   \mem'_2(\yvar) = 0. 
\end{align*}

Finally, we construct the DTSO-computation as follows:
$$
\comp_{\sf DTSO} = \dconf_0\dtsomovesto{\transition'_1} \dconf_1 \dtsomovesto{\transition'_2} \dconf_{12}
\dtsomovesto{\transition'_3} \dconf_{13} 
\dtsomovesto{\transition'_4} \dconf_{14} 
\dtsomovesto{\transition'_5} \dconf_2
$$  
where
 $\dconf_{12}=(\statemapping'_{12},\buffermapping'_{12},\mem_{12})$,
 $\dconf_{13}=(\statemapping_{13},\buffermapping'_{13},\mem_{13})$,
 $\dconf_{14}=(\statemapping'_{14},\buffermapping'_{14},\mem_{14})$,
 $\transition'_1=(\state_0,\wop(\xvar,1),\state_1)$,
  $\transition'_2=\propagate_\proc^{\yvar}$,
  $\transition'_3=\delete_\proc$
   $\transition'_4=(\state_1,\rop(\yvar,0),\state_2)$, 
   $\transition'_5=\delete_\proc$,
   and:
\begin{align*} 
\dconf_{12} &: & \statemapping'_{12}(\proc) &= \state_1, & \buffermapping'_{12}(\proc) &= (\yvar,0)\app(\xvar,1,{\it own}), &  \mem'_{12}(\xvar) &= 1,   \mem'_{12}(\yvar) = 0, \\
\dconf_{13} &: & \statemapping'_{13}(\proc) &= \state_1, & \buffermapping'_{13}(\proc) &= (\yvar,0), &  \mem'_{13}(\xvar) &= 1,   \mem'_{13}(\yvar) = 0, \\
\dconf_{14} &: & \statemapping'_{14}(\proc) &= \state_2, & \buffermapping'_{14}(\proc) &= (\yvar,0), &  \mem'_{14}(\xvar) &= 1,   \mem'_{14}(\yvar) = 0.
\end{align*}
Since there is only one update transition in   $\comp_{\sf DTSO}$ and $\tsocomp$, it is easy to see that $\tsocomp$ has the same sequence of memory updates as $\comp_{\sf DTSO}$.
It is also easy to see that 
 $\dconf_0=\dinitconf$
 and
 $\dconf_3=\tuple{\onestatemappingof{\conf_3},\buffermapping,\onememoryof{\conf_3}}$
 where $\buffermapping(\proc):=\emptyword$.
Therefore $\comp_{\sf DTSO}$ is a witness of the construction.
\myend
\end{exa}

Lemma~\ref{proof:if-direction:full}
shows the existence of a DTSO-computation $\comp_{\sf DTSO}$ that starts from the initial
TSO-configuration and whose target has the same local state definitions
as the target $\conf_n$ of the TSO-computation $\tsocomp$.
The only if direction of  Theorem~\ref{DTSO:TSO:equivalence:theorem} will follow
directly from Lemma~\ref{proof:if-direction:full}. 
This concludes the proof of the only if direction of  Theorem~\ref{DTSO:TSO:equivalence:theorem}.

\begin{lem}
\label{proof:if-direction:full}
The following properties hold for the constructed sequence $\dconf_1,\ldots,\dconf_m,\dconf_{m+1}$:
\begin{itemize}
\item
For every $r: 0\leq r < m$, 
$\dconf_r\dtsomovesto{*}\dconf_{r+1}$,
\item
$\dconf_m\dtsomovesto{*}\dconf_{m+1}$.
\end{itemize}
\end{lem}

\begin{proof}
We show the proof of the lemma
follows directly Lemma~\ref{tso:dtso:middelstep:lemma} and Lemma~\ref{tso:dtso:finalstep:lemma}.
To make the proof understandable,
below we consider a $\fenceop$ transition $\transition=\tuple{\state,\fenceop,\state'}$  
such that
$\conf\tsomovesto{\transition}\conf'$ for some $\conf,\conf'$
 as an atomic read-write transition of the form
$\tuple{\state,\arw(\xvar,v,v),\state'}$ where $v\in\dataset$ is the memory value of variable $\xvar\in \vars$ in  $\conf$. For a given
TSO-computation $\tsocomp$, we can calculate such value $v$ for each $\fenceop$ transition $\tsocomp$. 
\end{proof}

%
%
%
%
%
%
%
%


\begin{lem}
\label{tso:dtso:middelstep:lemma}
If $0\leq r < m$, then
$\dconf_r\dtsomovesto{*}\dconf_{r+1}$.
\end{lem}
\begin{proof}

We are in phase $r$.
Because from the  configuration $\dconf_r$, the memory has not been changed until the transition $\transition_{i_{r+1}}$, we observe that all memory-read transitions of the process $\proc$ between transitions  $\transition_{i_r}$ and $\transition_{i_{r+1}}$ will get values from $\onememoryof{\dconf_r}$  where $\proc\in\procs$.
Therefore, we can execute a sequence of propagation transitions to propagate from the memory to the buffer of process $\proc$ to
full fill it by all messages that will satisfy all memory-read transitions of $\proc$ between  $\transition_{i_r}$ and  $\transition_{i_{r+1}}$.
We propagate to processes according to the order $\prec$: first to the process $\proc_{\it min}$ and last to the process $\proc_{\it max}$. We have the following sequence:
$\dconf_r \dtsomovesto{({\transitions^\propagate})^*}{\dconf_r^{\proc_{\it min}}}
\cdots\dtsomovesto{(\transitions^\propagate)^*} {\dconf_r^{\proc_{\it max}}}$.
The shape of the configuration $\dconf_r^{\proc_{\it max}}$ is:
\begin{itemize}
\item $\twostatemappingof{\dconf_r^{\proc_{\it max}}}\proc=\twostatemappingof{\conf_{\posof{r-1}{\proc}}}\proc$,
\item $\onememoryof{\dconf_r^{\proc_{\it max}}}=\onememoryof{\conf_{i_{r}}}$,
\item $\twobuffermappingof{\dconf_r^{\proc_{\it max}}}\proc={\sf label}^{\sf rev}_\proc({\posof{r-1}{\proc}+1} \cdots {i_{r+1}-1} )$.
\end{itemize}

Below let $\proc=\processof{r+1}$ be the active process in phase $r$ of the DTSO-computation.
For each transition $\transition$ in the sequence of transitions (including updates) of the active process, $seq$=$(\transition_{\posof{r-1}{\proc}+1}\cdots\transition_{\matchingof{i_{r+1}}})|_{\transitions_{\processof{r+1}} \cup\set{\updateop_{\processof{r+1}}}}$,
 we execute a set of transitions in the DTSO-computation as follows:

\begin{itemize}
\item To simulate a memory-read transition, we execute the same read transition. 
And then we execute a delete transition to delete the oldest 
message 
in the buffer of  $\processof{r+1}$.
\item To simulate a read-own-write transition, we execute the same read transition. 
\item
To simulate a write transition, we execute the same write transition. 
This transition must be the transition $\transition_{\matchingof{i_{r+1}}}$. 
According to Dual TSO semantics, we add an own-message to the buffer of  $\processof{r+1}$.
\item
To simulate an $\arw$ transition, we execute the same atomic read-write transition.
This transition must be the transition $\transition_{\matchingof{i_{r+1}}}$ and  $\matchingof{i_{r+1}}=i_{r+1}$.
\item
To simulate an update transition, we execute a delete transition to delete the oldest 
message in the buffer of  $\processof{r+1}$.
\item To simulate a $\nop$ transition, we execute the same transitions in the DTSO-computation.
\end{itemize}

Let $\beta(r,l)$ indicate the index in the TSO-computation of the $l^{th}$ transition   in the sequence 
$seq$ where $1 \leq l \leq |seq|$.
Formally, we define $\beta(r,l):=j$ where $1 \leq j \leq n$,
$\transition_j\in (\transitions_\proc \cup\set{\updateop_{\proc}})$
 and
$seq(l)=\transition_j$.
Let configuration $\dconf_{r,l}$ where $0 \leq r < m$ be the DTSO-configuration
\emph{before} simulating the transition with the index $\beta(r,l)$.
We define $\dconf_{r,l}$ by defining its local states,
buffer contents,
and memory state:
\begin{itemize}
\item $\twostatemappingof{\dconf_{r,l}}\proc$=$\twostatemappingof{\conf_{\posof{r-1}{\proc}}}\proc$ for all inactive process $\proc$ and all $l: 1\leq l \leq |seq|$,
\item $\twostatemappingof{\dconf_{r,l}}\proc=\twostatemappingof{\conf_{\beta(r,l)-1}}\proc$ for the active process $\proc$ and all $l: 1\leq l \leq |seq|$,
\item  $\onememoryof{\dconf_{r,l}}=\onememoryof{\conf_{i_r}}$   for the active process $\proc$ and all $l: 1\leq l \leq |seq|$,
\item $\twobuffermappingof{\dconf_{r,l}}\proc$=${\sf label}^{\sf rev}_\proc({\posof{r-1}{\proc}+1} \cdots {i_{r+1}-1} )$ for all inactive process $\proc$ and all $l: 1\leq l \leq |seq|$,
\item
$\twobuffermappingof{\dconf_{r,l}}\proc$=${\sf label}^{\sf rev}_\proc({\beta(r,l)} \cdots {i_{r+1}-1} )$  for the active process $\proc$  and all $l: 1\leq l \leq |seq|$.
\end{itemize}

The Lemma~\ref{tso:dtso:firststep:lemma}, Lemma~\ref{tso:dtso:midlestep:lemma},
and Lemma~\ref{tso:dtso:endstep:lemma} imply the result.
More precisely, it shows
the existence of a DTSO-computation
that starts from the DTSO-configuration $\dconf^{pmax}_r$ and
whose target is the configuration $\dconf_{r+1}$.
This concludes the proof of Lemma~\ref{tso:dtso:middelstep:lemma}.
\end{proof}

\begin{lem}
\label{tso:dtso:firststep:lemma}
$\dconf_{r,1}$= $\dconf_r^{\proc_{\it max}}$ for $0\leq r < m$.
\end{lem}
\begin{proof}
We show that $\dconf_{r,1}$ and $\dconf_r^{\proc_{\it max}}$ have the same local states,
memory, and buffer contents. We consider two cases for the active and inactive processes.

\begin{itemize}
\item
For inactive process $\proc\neq\processof{r+1}$,
it is easy to see that:
\begin{itemize}
\item
$\twostatemappingof{\dconf_{r,1}}\proc=\twostatemappingof{\conf_{\posof{r-1}{\proc}}}\proc=\twostatemappingof{\dconf^{pmax}_{r}}\proc$ by the definitions of configurations $\dconf_{r,1}$ and $\dconf^{pmax}_{r}$.
\item
$\twobuffermappingof{\dconf_{r,1}}\proc$=${\sf label}^{\sf rev}_\proc({\posof{r-1}{\proc}+1} \cdots {i_{r+1}-1} )
= \twobuffermappingof{\dconf^{pmax}_{r}}\proc$ by the definitions of configurations $\dconf_{r,1}$ and $\dconf^{pmax}_{r}$.
\end{itemize}
\item
For the active process $\proc=\processof{r+1}$:
\begin{itemize}
\item
$\twostatemappingof{\dconf_{r,1}}\proc=\twostatemappingof{\conf_{\beta(r,1)-1}}\proc=
\twostatemappingof{\conf_{\posof{r-1}{\proc}}}\proc$ by the definition of $\beta(r,1)$.
Therefore  $\twostatemappingof{\dconf_{r,1}}\proc=\twostatemappingof{\dconf^{pmax}_{r}}\proc$.
\item
$\twobuffermappingof{\dconf_{r,1}}\proc$=${\sf label}^{\sf rev}_\proc({\beta(r,1)} \cdots {i_{r+1}-1} )$
= ${\sf label}^{\sf rev}_\proc({\posof{r-1}{\proc}+1} \cdots {i_{r+1}-1} )$ by the definition of $\beta(r,1)$.
Therefore $\twobuffermappingof{\dconf_{r,1}}\proc$ = $\twobuffermappingof{\dconf^{pmax}_{r}}\proc$.
\end{itemize}
\end{itemize}
In both cases, 
for the memory,
$\onememoryof{\dconf_{r,1}}=\onememoryof{\conf_{i_r}}=\onememoryof{\dconf^{pmax}_{r}}$ by the definitions of configurations $\dconf_{r,1}$ and $\dconf^{pmax}_{r}$.

This concludes the proof of Lemma~\ref{tso:dtso:firststep:lemma}.
\end{proof}

\begin{lem}
\label{tso:dtso:endstep:lemma}
$\dconf_{r,|seq|}  \dtsomovesto{\transition_{\matchingof{i_{r+1}}}}  \dconf_{r+1}$ for $0\leq r < m$.
\end{lem}

\begin{proof}
To prove the lemma, we will show the following properties:
\begin{enumerate}
\item
$\exists \dconf'_{r+1}: \dconf_{r,|seq|}  \dtsomovesto{\transition_{\matchingof{i_{r+1}}}}  \dconf'_{r+1}$, i.e.\
the transition $\transition_{\matchingof{i_{r+1}}}$
 is feasible from
the configuration $\dconf_{r,|seq|}$.
 \item 
 Moreover, $\dconf'_{r+1}$=$\dconf_{r+1}$.
\end{enumerate}

%

Let $\proc=\processof{r+1}$ be the active process.
We show the property (1) by considering two cases:
\begin{itemize}
\item
$\matchingof{i_{r+1}}$ is a write transition: 
By simulation, we execute the same transition in the DTSO-computation.
It is feasible since 
$\twostatemappingof{\dconf_{r,|seq|}}\proc=\twostatemappingof{\conf_{\beta(r,|seq|)-1}}\proc
= \twostatemappingof{\conf_{\matchingof{i_{r+1}}-1}}\proc$
by the definitions of $\beta(r,|seq|)$ and $\dconf_{r,|seq|}$.
This concludes the property (1).
\item

 $\matchingof{i_{r+1}}$ is an atomic read-write transition: 
We notice that $\matchingof{i_{r+1}} = i_{r+1}$.
It is feasible since 
$\twostatemappingof{\dconf_{r,|seq|}}\proc=\twostatemappingof{\conf_{\beta(r,l)-1}}\proc
= \twostatemappingof{\conf_{\matchingof{i_{r+1}}-1}}\proc$,
$\onememoryof{\dconf_{r,|seq|}}=\onememoryof{\conf_{i_r}}$,
and
\begin{displaymath}
\twobuffermappingof{\dconf_{r,|seq|}}\proc = \twobuffermappingof{\conf_{\beta(r,l)-1}}\proc
= \twobuffermappingof{\conf_{\matchingof{i_{r+1}}-1}}\proc =\emptyword
\end{displaymath}
by the definitions of $\beta(r,|seq|)$ and $\dconf_{r,|seq|}$.
This concludes the property (1).
\end{itemize}

We show the property (2)
 by showing that $\dconf'_{r+1}$ and $\dconf_{r+1}$ have the same local states,
 memory,
 and buffer contents.
  Recall that  
 the $\transition_{\matchingof{i_{r+1}}}$ can be a write transition or an atomic read-write transition.
 
 We consider  inactive processes.
For an inactive process $\proc\neq\processof{r+1}$,
we have:
\begin{itemize}
\item
Since the transition $\transition_{\matchingof{i_{r+1}}}$ is of the active process,
we have $\twostatemappingof{\dconf'_{r+1}}\proc=\twostatemappingof{\dconf^{pmax}_{r}}\proc$.
 Moreover,
 by
the definition
of $\dconf^{pmax}_{r}$, we see that $\twostatemappingof{\dconf^{pmax}_{r}}\proc=\twostatemappingof{\conf_{\posof{r-1}{\proc}}}\proc$ . 
Hence,  by the definition of $\dconf_{r+1}$,
 $$\twostatemappingof{\dconf'_{r+1}}\proc=\twostatemappingof{\dconf_{r+1}}\proc.$$
\item
Since the transition $\transition_{\matchingof{i_{r+1}}}$ is of the active process,
we have
$\twobuffermappingof{\dconf'_{r+1}}\proc=\twobuffermappingof{\dconf^{pmax}_{r}}\proc$.
Moreover, by
the definition
of $\dconf^{pmax}_{r}$, we have $\twobuffermappingof{\dconf^{pmax}_{r}}\proc={\sf label}^{\sf rev}_\proc({\posof{r-1}{\proc}+1} \cdots {i_{r+1}-1} )$.
Hence, by the definition of 
$\dconf_{r+1}$,  $$\twobuffermappingof{\dconf'_{r+1}}\proc)=\twobuffermappingof{\dconf_{r+1}}\proc.$$
\end{itemize}

We consider the active process $\proc=\processof{r+1}$ for the case that the transition $\transition_{\matchingof{i_{r+1}}}$ is a write one.
By executing the same transition,
we add an owing message to the buffer of process $\proc$ and change the memory.
\begin{itemize}
\item
Since the transition $\transition_{\matchingof{i_{r+1}}}$ is of the active process,
we have 
\begin{align*}
\twostatemappingof{\dconf'_{r+1}}\proc=\twostatemappingof{\conf_{\beta(r,l)}}\proc.
\end{align*}
Moreover, 
it follows from the fact 
$\beta(r,l)=\matchingof{i_{r+1}}$ and the definition of $\posof{r}{\proc}$
that 
$\twostatemappingof{\conf_{\beta(r,l)}}\proc
=\twostatemappingof{\conf_{\matchingof{i_{r+1}}}}\proc
=\twostatemappingof{\conf_{\posof{r}{\proc}}}\proc$.
Hence, it follows by the definition of 
$\dconf_{r+1}$
that  $\twostatemappingof{\dconf'_{r+1}}\proc=
\twostatemappingof{\dconf_{r+1}}\proc$.
\item
\sloppypar{Since the transition $\transition_{\matchingof{i_{r+1}}}$ is of the active process,
we have 
$\twobuffermappingof{\dconf'_{r+1}}\proc={\sf label}^{\sf rev}_\proc(i_{r+1}) \app \twobuffermappingof{\dconf_{r, |seq|}}\proc$.
Then, it follows from the definition of $\dconf_{r, |seq|}$
that
$
\twobuffermappingof{\dconf'_{r+1}}\proc=
{\sf label}^{\sf rev}_\proc(i_{r+1}) \app {\sf label}^{\sf rev}_\proc({\beta(r, |seq|)} \cdots {i_{r+1}-1} )
= {\sf label}^{\sf rev}_\proc(i_{r+1}) \app {\sf label}^{\sf rev}_\proc({\matchingof{i_{r+1}}} \cdots {i_{r+1}-1} )
= {\sf label}^{\sf rev}_\proc({\matchingof{i_{r+1}}} \cdots {i_{r+1}})
= {\sf label}^{\sf rev}_\proc({\posof{r}{\proc}} \cdots {i_{r+1}})
={\sf label}^{\sf rev}_\proc({\posof{r}{\proc}+1} \cdots {i_{r+1}})$.
Hence, it follows by the definition of 
$\dconf_{r+1}$
that $\twobuffermappingof{\dconf'_{r+1}}\proc=\twobuffermappingof{\dconf_{r+1}}\proc$.}
\end{itemize}
 
We consider the active process $\proc=\processof{r+1}$ for the case that the transition $\transition_{\matchingof{i_{r+1}}}$ is an atomic read-write one.
By simulation, we execute the same transition and change the memory.
\begin{itemize}
\item
Since the transition $\transition_{\matchingof{i_{r+1}}}$ is of the active process,
we have 
$\twostatemappingof{\dconf'_{r+1}}\proc=\twostatemappingof{\conf_{\beta(r,l)}}\proc$.
Moreover, 
it follows from the fact 
 $\beta(r,l)=\matchingof{i_{r+1}}$ and the definition of $\posof{r}{\proc}$
 that
$\twostatemappingof{\conf_{\beta(r,l)}}\proc= \twostatemappingof{\conf_{\matchingof{i_{r+1}}}}\proc
= \twostatemappingof{\conf_{\posof{r}{\proc}}}\proc$. 
Hence, it follows by
the definition of 
$\dconf_{r+1}$
that $\twostatemappingof{\dconf'_{r+1}}\proc=
\twostatemappingof{\dconf_{r+1}}\proc$.
\item
Since the transition $\transition_{\matchingof{i_{r+1}}}$ is of the active process,
we have 
$\twobuffermappingof{\dconf'_{r+1}}\proc=\emptyword$.
From the definitions of 
$\dconf_{r+1}$ and $\posof{r}{\proc}$ and the fact   $\matchingof{i_{r+1}} = i_{r+1}$, we have
$\twobuffermappingof{\dconf_{r+1}}\proc= {\sf label}^{\sf rev}_\proc({\posof{r}{\proc}+1} \cdots {i_{r+1}})={\sf label}^{\sf rev}_\proc({\matchingof{i_{r+1}}+1} \cdots {i_{r+1}})={\sf label}^{\sf rev}_\proc({i_{r+1}+1} \cdots {i_{r+1}})=\emptyword$.
Hene, it follows that $\twobuffermappingof{\dconf'_{r+1}}\proc$=$\twobuffermappingof{\dconf_{r+1}}\proc$.
\end{itemize}

For both cases, for the memory, we have
$\onememoryof{\dconf'_{r+1}}(\proc)=\onememoryof{\conf_{i_{r+1}}}=\onememoryof{\dconf_{r+1}}$ from the fact that we change the memory by transition $\transition_{\matchingof{i_{r+1}}}$
and by the definition of $\dconf_{r+1}$.
Finally, we have $\dconf'_{r+1}$=$\dconf_{r+1}$.

This concludes the proof of Lemma~\ref{tso:dtso:endstep:lemma}.
\end{proof}

\begin{lem}
\label{tso:dtso:midlestep:lemma}
$\dconf_{r,l} \dtsomovesto{*} \dconf_{r,l+1}$ for $0\leq r < m$, $1\leq l < |seq|$.
\end{lem}
\begin{proof}
The transition $\transition_{\beta(r,l)}$ can be a read-from-memory, read-own-write, nop, update one. 
First, we give our simulation
of the transition $\transition_{\beta(r,l)}$
from the configuration $\dconf_{r,l}$ and show that this simulation is feasible.
We consider different types of the transition  $\transition_{\beta(r,l)}$.
Let process $\proc=\processof{r+1}$ is the active process.
\begin{itemize}
\item
$\transition_{\beta(r,l)}$ is a read-from-memory transition:
By simulation, we execute the same transition in the DTSO-computation. 
Note that under the DTSO semantics, this transition will read an element in the buffers.
Then we delete the 
oldest 
element in the buffer of the active process.
The  transition $\transition_{\beta(r,l)}$ is feasible because 
by  the definition of $\dconf_{r,l}$, we have
$\twostatemappingof{\dconf_{r,l}}\proc=\twostatemappingof{\conf_{\beta(r,l)-1}}\proc$
and 
$\twobuffermappingof{\dconf_{r,l}}\proc={\sf label}^{\sf rev}_\proc({\beta(r,l)} \cdots {i_{r+1}-1} )$.
\item
$\transition_{\beta(r,l)}$ is a nop transition:
By simulation, we execute the same transition in the DTSO-computation. 
The nop transition is feasible because by the definition of $\dconf_{r,l}$,
we have
$\twostatemappingof{\dconf_{r,l}}\proc=\twostatemappingof{\conf_{\beta(r,l)-1}}\proc$.
\item
 $\transition_{\beta(r,l)}$ is a read-own-write read transition:
By simulation, we execute the same transition in the DTSO-computation. 
Observe that 
$\twostatemappingof{\dconf_{r,l}}\proc=\twostatemappingof{\conf_{\beta(r,l)-1}}\proc$.
We show the read-own-write transition is feasible in the DTSO-computation.
In the TSO-computation, this read must get its value from
a write transition $\transition'_1\in\transitions_\proc^{\wop}$ that
has the corresponding update transition $\transition'_2\in\transitions_\proc^{\updateop}$.
According to the TSO semantics,
the write comes and goes out the buffer in FIFO order.
We have the order of these transitions in the TSO-computation:
(i) transition $\transition_{\beta(r,l)}$ is between transitions $\transition'_1$ and $\transition_{\matchingof{i_{r+1}}}$, and
(ii) transition $\transition'_2$ is between transitions $\transition_{\beta(r,l)}$ and $\transition_{\matchingof{i_{r+1}}}$.
Moreover, (iii) there is no other write transition of the same process and the same variable between transitions 
$\transition'_1$ and $\transition_{\beta(r,l)}$.
In the simulation of the DTSO-computation,
when we meet the transition $\transition'_1$ we put an own-message $m$ to the buffer of the active process.
From that we do not meet any write transition to the same variable of the active process until the simulation of transition 
$\transition_{\beta(r,l)}$.
Moreover, the  message $m$ exists in the buffer until the simulation of transition 
$\transition_{\beta(r,l)}$ because the update transition $\transition'_2$ is after the transition 
$\transition_{\beta(r,l)}$.
Therefore the message $m$ is the 
newest 
own-message in the buffer that can match to the read $\transition_{\beta(r,l)}$.
In other words, the  read transition $\transition_{\beta(r,l)}$ is feasible.
\item
$\transition_{\beta(r,l)}$ is an update transition:
By simulation, we delete the 
oldest 
own-message in the buffer of the active process in the DTSO-computation. 
This transition is feasible because
by the definition of $\dconf_{r,l}$, we have
$\twostatemappingof{\dconf_{r,l}}\proc=\twostatemappingof{\conf_{\beta(r,l)-1}}\proc$
and 
$\twobuffermappingof{\dconf_{r,l}}\proc={\sf label}^{\sf rev}_\proc({\beta(r,l)} \cdots {i_{r+1}-1} )$.
\end{itemize}

We have show our simulation of the transition
$\transition_{\beta(r,l)}$ in the DTSO-computation is feasible.
Let $\dconf'_{r,l+1}$ be the configuration in the DTSO-computation after the simulation.
We proceed the proof of the lemma by 
proving that $\dconf'_{r,l+1}$=$\dconf_{r,l+1}$.
To do this, we will show that 
$\dconf'_{r,l+1}$ and $\dconf_{r,l+1}$ have the same local states, memory,
and buffer contents.

We consider  inactive processes. 
For an inactive process $\proc\neq\processof{r+1}$,
we have:
\begin{itemize}
\item
Since in the simulation, we only execute the transition of the active process,
we have
$\twostatemappingof{\dconf'_{r,l+1}}\proc=\twostatemappingof{\dconf^{pmax}_{r}}\proc$. Moreover, by the definition of 
$\dconf^{pmax}_{r}$, we see that $\twostatemappingof{\dconf^{pmax}_{r}}\proc=\twostatemappingof{\conf_{\posof{r-1}{\proc}}}\proc$ . 
Hence, it follows by the definition of $\dconf_{r,l+1}$
that $\twostatemappingof{\dconf'_{r,l+1}}\proc=\twostatemappingof{\dconf_{r,l+1}}\proc$.
\item
Since in the simulation, we only execute the transition of the active process, we have
$\twobuffermappingof{\dconf'_{r,l+1}}\proc=\twobuffermappingof{\dconf^{pmax}_{r}}\proc$.
Moreover, by the  definition of 
$\dconf^{pmax}_{r}$,
we see that $\twobuffermappingof{\dconf^{pmax}_{r}}\proc= {\sf label}^{\sf rev}_\proc({\posof{r-1}{\proc}+1} \cdots {i_{r+1}-1} )$.
Hence, it follows by the definition of $\dconf_{r,l+1}$ that $\twobuffermappingof{\dconf'_{r+1}}\proc=\twobuffermappingof{\dconf_{r,l+1}}\proc$.
\end{itemize}

We consider the active process $\proc\neq\processof{r+1}$ for the case that 
the transition  $\transition_{\beta(r,l)}$ is a read-from-memory one.
From the simulation of $\transition_{\beta(r,l)}$, 
$\twostatemappingof{\dconf'_{r,l+1}}\proc=\twostatemappingof{\conf_{\beta(r,l+1)-1}}\proc$.
We have $\twostatemappingof{\dconf_{r,l+1}}\proc=\twostatemappingof{\conf_{\beta(r,l+1)-1}}\proc$
from the definition of $\dconf_{r,l+1}$.
Furthermore, because we delete the 
oldest 
message in the buffer of the process $\proc$ after we execute the read transition, it follows by the definition of $\dconf_{r,l+1}$ that
$\twobuffermappingof{\dconf'_{r,l+1}}\proc={\sf label}^{\sf rev}_\proc({\beta(r,l+1)} \cdots {i_{r+1}-1} )=\twobuffermappingof{\dconf_{r,l+1}}\proc$.
Finally, by the definitions of $\dconf_{r,l}$ and $\dconf_{r,l+1}$,
we have
$\onememoryof{\dconf'_{r,l+1}}=\onememoryof{\dconf_{r,l}}=\onememoryof{\conf_{i_r}}= \onememoryof{\dconf_{r,l+1}}$.
Hence, it follows that $\dconf'_{r,l+1}$=$\dconf_{r,l+1}$.

We consider the active process $\proc\neq\processof{r+1}$ for the case that  the transition  $\transition_{\beta(r,l)}$ is a nop one.
From the simulation of $\transition_{\beta(r,l)}$, 
$\twostatemappingof{\dconf'_{r,l+1}}\proc=\twostatemappingof{\conf_{\beta(r,l+1)-1}}\proc$.
From the definition of $\dconf_{r,l+1}$,
$\twostatemappingof{\dconf_{r,l+1}}\proc=\twostatemappingof{\conf_{\beta(r,l+1)-1}}\proc$.
From the definitions of $\dconf_{r,l}$ and $\dconf_{r,l+1}$,
$\twobuffermappingof{\dconf'_{r,l+1}}\proc=\twobuffermappingof{\dconf_{r,l}}\proc = {\sf label}^{\sf rev}_\proc({\beta(r,l)} \cdots {i_{r+1}-1} )={\sf label}^{\sf rev}_\proc({\beta(r,l+1)} \cdots {i_{r+1}-1} )=\twobuffermappingof{\dconf_{r,l+1}}\proc$.
Finally, by the definitions of $\dconf_{r,l}$ and $\dconf_{r,l+1}$, we have 
$\onememoryof{\dconf'_{r,l+1}}=\onememoryof{\dconf_{r,l}}=\onememoryof{\conf_{i_r}}=\onememoryof{\dconf_{r,l+1}}$.
Hence, it follows that $\dconf'_{r,l+1}$=$\dconf_{r,l+1}$.

\sloppypar{
We consider the active process $\proc\neq\processof{r+1}$ for the case that  the transition  $\transition_{\beta(r,l)}$ is a read-own-write one.
From the simulation of the transition, we have
$\twostatemappingof{\dconf'_{r,l+1}}\proc=\twostatemappingof{\conf_{\beta(r,l+1)-1}}\proc$.
Then from the definition of  $\dconf_{r,l+1}$,
we have 
$\twostatemappingof{\dconf_{r,l+1}}\proc=\twostatemappingof{\conf_{\beta(r,l+1)-1}}\proc$.
It follows from the definitions of $\dconf_{r,l}$ and $\dconf_{r,l+1}$
that
$\twobuffermappingof{\dconf'_{r,l+1}}\proc=\twobuffermappingof{\dconf_{r,l}}\proc= {\sf label}^{\sf rev}_\proc({\beta(r,l)} \cdots {i_{r+1}-1} )= {\sf label}^{\sf rev}_\proc({\beta(r,l+1)} \cdots {i_{r+1}-1} )=\twobuffermappingof{\dconf_{r,l+1}}\proc$.
Finally, 
by the definitions of $\dconf_{r,l}$ and $\dconf_{r,l+1}$, we have 
$\onememoryof{\dconf'_{r,l+1}}=\onememoryof{\dconf_{r,l}}=\onememoryof{\conf_{i_r}}=\onememoryof{\dconf_{r,l+1}}$ .
Hence, it follows that $\dconf'_{r,l+1}$=$\dconf_{r,l+1}$.
}

We consider the active process $\proc\neq\processof{r+1}$ for the case that  the transition  $\transition_{\beta(r,l)}$ is an update one.
From the simulation of the transition, 
$\twostatemappingof{\dconf'_{r,l+1}}\proc=\twostatemappingof{\conf_{\beta(r,l+1)-1}}\proc$.
We have 
$\twostatemappingof{\conf_{\beta(r,l+1)-1}}\proc=\twostatemappingof{\dconf_{r,l+1}}\proc$
from the definition of $\dconf_{r,l+1}$.
Moreover, we have 
$\twobuffermappingof{\dconf'_{r,l+1}}\proc
= {\sf label}^{\sf rev}_\proc({\beta(r,l+1)} \cdots {i_{r+1}-1} ) = \twobuffermappingof{\dconf_{r,l+1}}\proc$.
Futhermore,  by the definitions of $\dconf_{r,l}$ and $\dconf_{r,l+1}$,
we have 
$\onememoryof{\dconf'_{r,l+1}}=\onememoryof{\dconf_{r,l}}=\onememoryof{\conf_{i_r}}=\onememoryof{\dconf_{r,l+1}}$ .
Hence, it follows that $\dconf'_{r,l+1}$=$\dconf_{r,l+1}$.

This concludes the proof of Lemma~\ref{tso:dtso:midlestep:lemma}.
\end{proof}

\begin{lem}
\label{tso:dtso:finalstep:lemma}
$\dconf_m\dtsomovesto{*}\dconf_{m+1}$.
\end{lem}
\begin{proof}
We are in the final phase $r=m$.
Observe that in this phase we do not have any write and atomic read-write transitions.
 Because from the  configuration $\dconf_m$ until the end of the TSO-computation the memory has not been changed, we observe that all memory-read transitions of a process $\proc\in\procs$ after transitions  $\transition_{i_m}$ get their values from $\onememoryof{\dconf_m}$.
Therefore, we can execute a sequence of propagation transitions to propagate from the memory to buffer of the process $\proc$ to
full fill it by all messages that will satisfy all memory-read transitions of $\proc$ after  $\transition_{i_m}$.
We propagate to processes according to the order $\prec$: first to process $\proc_{\it min}$ and last to process $\proc_{\it max}$. We have the following sequence:
$\dconf_m\dtsomovesto{(\transitions^\propagate)^*}{\dconf_m^{\proc_{\it min}}}\cdots\dtsomovesto{(\transitions^\propagate)^*}{\dconf_m^{\proc_{\it max}}}$.

Next we simulate the remaining transitions 
 $(\transition_{\posof{m-1}{\proc}+1}\cdots\transition_{n})|_{\transitions_{\proc}\cup\set{\updateop_\proc}}$ 
 for each process $\proc$ of the TSO-computation $\comp_{\sf TSO}$ according to the order $\prec$: first process 
 $\proc_{min}$ and last process $\proc_{max}$.

 \begin{itemize}
\item To simulate a memory-read transition, we execute the same read transition. 
And then we execute a delete transition to delete the oldest 
message
in the buffer of the process $\proc$.
\item To simulate a read-own-write transition, we execute the same read transition. 
\item
To simulate an update transition, we execute a delete transition to delete the oldest 
message in the buffer of the process $\proc$.
\item To simulate a $\nop$ transition, we execute the same transitions in the DTSO-computation.
\end{itemize}

Following the same argument as in Lemma~\ref{tso:dtso:firststep:lemma}, Lemma~\ref{tso:dtso:endstep:lemma},
and Lemma~\ref{tso:dtso:midlestep:lemma}
we show that all simulations of transitions are feasible.
As a consequence,
from the configuration $\dconf_m$
we reach  the configuration $\dconf_{m+1}$ where for all $\proc\in\procs$: $\twostatemappingof{\dconf_{m+1}}\proc=\twostatemappingof{\conf_{n}}\proc$,
$\twobuffermappingof{\dconf_{m+1}}\proc$=$\emptyword$,
and 
$\onememoryof{\dconf_{m+1}}=\onememoryof{\conf_{n}}$.

This concludes the proof of Lemma~\ref{tso:dtso:finalstep:lemma}.
\end{proof}



\section{Proof of Lemma~\ref{monotonicity-dtso}}
\label{proofs:monotonicity-dtso}

Let $\conf_i=\tuple{\statemapping_i,\buffermapping_i,\mem_i}$ be DTSO-configurations
for $i:1\leq i\leq 3$.  Let us assume that 
$\conf_1\dtsomovesto{\transition}\conf_2$
for some
$\transition\in\transitions_{\proc}\cup\set{\propagate_{\proc}^{\xvar} , \delete_{\proc}}$
and $\proc\in\procs$.
We will define
$\conf_4=\tuple{\statemapping_4,\buffermapping_4,\mem_4}$
such that 
$\conf_3\dtsomovesto{*}\conf_4$ 
and $\conf_2\cordering\conf_4$.
We consider the following cases depending on $\transition$:
\begin{enumerate}
\item {\sf Nop:}
$\transition=\tuple{\state_1,\nop,\state_2}$.
Define 
$\statemapping_4:=\statemapping_2$,
$\buffermapping_4:=\buffermapping_3$, and
$\mem_4:=\mem_2=\mem_3=\mem_1$.
We have  
$\conf_3\dtsomovesto{\transition}\conf_4$.

\item {\sf Write to memory:}
$\transition=\tuple{\state,\wop(\xvar,\data),\state'}$.
Define 
$\statemapping_4:=\statemapping_2$,
$\buffermapping_4:=
\buffermapping_3\update{\proc}{(\xvar,\data,{\textsc{own}}) \app 
\buffermapping_3(\proc)}$,
and 
$\mem_4:=\mem_2$.
We have  
$\conf_3\dtsomovesto{\transition}\conf_4$.
\item {\sf Propagate:}
$\transition=\propagate_\proc^\xvar$.
Define 
$\statemapping_4:=\statemapping_2$,
$\mem_4:=\mem_2=\mem_3=\mem_1$, and
$\buffermapping_4:=
\buffermapping_3\update{\proc}{(\xvar,\data) \app \buffermapping_3(\proc)}$
where $\data=\mem_4(\xvar)$.
We have  
$\conf_3\dtsomovesto{\transition}\conf_4$.
\item {\sf Delete:}
$\transition=\delete_\proc$.
Define 
$\statemapping_4:=\statemapping_2$
and
$\mem_4:=\mem_2=\mem_3=\mem_1$.
Define $\buffermapping_4$
according to one of the following cases:
\begin{itemize}
\item
If 
$\buffermapping_1=
\buffermapping_2\update{\proc}{\buffermapping_2(\proc) \app (\xvar,\data) }$, 
then define
$\buffermapping_4:=\buffermapping_3$.
In other words, we define $\conf_4:=\conf_3$.
\item
If 
$\buffermapping_1=
\buffermapping_2\update{\proc}{ \buffermapping_2(\proc) \app (\xvar,\data,{\textsc{own}})}$ 
and $(\xvar,\data',{\textsc{own}})\in\buffermapping_2(\proc)$ for some
$\data'\in\valset$,
then define
$\buffermapping_4:=\buffermapping_3$.
In other words, we define $\conf_4:=\conf_3$.
%
%
%
\item
If 
$\buffermapping_1=
\buffermapping_2\update{\proc}{ \buffermapping_2(\proc) \app (\xvar,\data,{\textsc{own}})}$ 
and there is no $\data'\in\valset$ 
such that $(\xvar,\data',{\textsc{own}})\in\buffermapping_2(\proc)$.
Since 
$\buffermapping_1(\proc)\genordering\buffermapping_3(\proc)$,
we know that there is an $i$
and therefore a smallest $i$ such that
$\buffermapping_3(\proc)(i)=(\xvar,\data,{\textsc{own}})$.
Define 
$\buffermapping_4:=
\buffermapping_3\update{\proc}{\buffermapping_3(\proc)(1) 
\app
\buffermapping_3(\proc)(2)  \cdots \buffermapping_3(\proc)(i-1)  }$.
We can perform the following sequence of transitions
$\conf_3
\dtsomovesto{\delete_\proc}
\conf'_1
\dtsomovesto{\delete_\proc}
\conf'_2
\cdots
\dtsomovesto{\delete_\proc}
\conf'_{\sizeof{\buffermapping_3(\proc)}-i}
\dtsomovesto{\delete_\proc}
\conf_4$.
In other words, we reach  the configuration $\conf_4$
from $\conf_3$ by first deleting
$\sizeof{\buffermapping_3(\proc)}-i$ messages from the head of 
$\buffermapping_3(\proc)$.

\end{itemize}
\item {\sf Read:}
$\transition=\tuple{\state,\rop(\xvar,\data),\state'}$.
Define 
$\statemapping_4:=\statemapping_2$,
and
$\mem_4:=\mem_2=\mem_3=\mem_1$.
We define $\buffermapping_4$
according to one of the following cases:
\begin{itemize}
\item  {\sf Read-own-write:}
If there is an $i:1\leq i\leq\sizeof{\buffermapping_1(\proc)}$ such that
$\buffermapping_1(\proc)(i)=\tuple{\xvar,\data,{\textsc{own}}}$, and
there are no $j: 1\leq j <i$ and $\data'\in\dataset$  
such that $\buffermapping_1(\proc)(j)=\tuple{\xvar,\data',{\textsc{own}}}$.
Since $\buffermapping_1(\proc)\genordering\buffermapping_3(h(\proc))$,
there is an $i':1\leq i'\leq\sizeof{\buffermapping_3(\proc)}$ such that
$\buffermapping_3(\proc)(i')=\tuple{\xvar,\data,{\textsc{own}}}$, and
there are no $j: 1\leq j <i'$ and $\data'\in\dataset$  
such that $\buffermapping_3(\proc)(j)=\tuple{\xvar,\data',{\textsc{own}}}$.
Define $\buffermapping_4:=\buffermapping_3$.
In other words,
we define $\conf_4:=\conf_3$.
\item {\sf Read from  buffer:}
If
$\tuple{\xvar,\data',{\textsc{own}}}\not\in\buffermapping_1(\proc)$
for all $\data'\in\dataset$ and $\buffermapping_1(\proc)= w  \app \tuple{\xvar,\data}$.
%
Let $i$ be the largest $i:1\leq i \leq \sizeof{\buffermapping_3(\proc)}$
such that $\buffermapping_3(\proc)(i)=\tuple{\xvar,\data}$.
Since $\buffermapping_1(\proc)\genordering\buffermapping_3(\proc)$,
we know that such index $i$ exists.
Define 
$\buffermapping_4:=
\buffermapping_3\update{\proc}{\buffermapping_3(\proc)(1) 
\app
\buffermapping_3(\proc)(2)  \cdots \buffermapping_3(\proc)(i-1)  }$.
We can perform the following sequence of transitions
$\conf_3
\dtsomovesto{\delete_\proc}
\conf'_1
\dtsomovesto{\delete_\proc}
\conf'_2
\cdots
\dtsomovesto{\delete_\proc}
\conf'_{\sizeof{\buffermapping_3(\proc)}-i}
\dtsomovesto{\delete_\proc}
\conf_4$.
In other words, we reach  the configuration $\conf_4$
from $\conf_3$ by first deleting
$\sizeof{\buffermapping_3(\proc)}-i$ messages from the head of 
$\buffermapping_3(\proc)$.
\end{itemize}
\item {\sf Fence:}
$\transition=\tuple{\state,\fenceop,\state'}$.
Define 
$\statemapping_4:=\statemapping_2$,
$\buffermapping_4:=\emptyword$, and
$\mem_4:=\mem_2$.
We can perform the following sequence of transitions
\begin{align*}
\pconf_3
\dtsomovesto{\delete_\proc}
\pconf'_1
\dtsomovesto{\delete_\proc}
\pconf'_2
\cdots
\dtsomovesto{\delete_\proc}
\pconf'_{\sizeof{\buffermapping_3(\proc)}}
\dtsomovesto{\transition}
\pconf_4.
\end{align*}
In other words, we reach  the configuration $\conf_4$
from $\conf_3$ by first  emptying
the content of
$\buffermapping_3(\proc)$
and then performing $\transition$.

\item {\sf ARW:}
$\transition=\tuple{\state,\arw(\xvar,\data,\data'),\state'}$.
Define 
$\statemapping_4:=\statemapping_2$,
$\buffermapping_4:=\emptyword$, and
$\mem_4:=\mem_2$.
We can reach  the configuration $\conf_4$
from $\conf_3$ in a similar manner to the case of the fence transition.
\end{enumerate}

This concludes the proof of Lemma~\ref{monotonicity-dtso}.
\qed

\section{Proof of Lemma~\ref{well-quasi-order}}
\label{proofs:well-quasi-order}



First we show that the ordering 
$\word \genordering \word'$ is a well-quasi-ordering.
It is an immediate consequence of the fact that
(i) the sub-word relation is a well-quasi-ordering on finite words
\cite{higman:divisibility},
and that (ii)
the number of 
own-messages in the form $(x,v, {\it own})$ 
that should be equal, is finite.

Given
two DTSO-configurations  $\conf=\tuple{\statemapping,\buffermapping,\mem}$ and $\conf'=\tuple{\statemapping',\buffermapping',\mem'}$.
We define three orders $\genordering^{{\it state}}$, $\genordering^{{\it mem}}$,
and $\genordering^{{\it buffer}}$ over  configurations of $\dtsoconfs$:
$\conf \genordering^{{\it state}} \conf'$ iff $\statemapping=\statemapping'$,
$\conf \genordering^{{\it mem}} \conf'$ iff $\mem'=\mem$,
and $\conf \genordering^{{\it bufer}} \conf'$ iff $\buffermapping(\proc)\genordering\buffermapping'(\proc)$ for all process $\proc \in \procs$.

It is easy to see that each one of three orderings is a well-quasi-ordering.
It follows that
the ordering $\genordering$ on DTSO-configurations
based on $\genordering^{{\it state}}$, $\genordering^{{\it mem}}$,
and $\genordering^{{\it buffer}}$ is a well-quasi-ordering.

Since the number of processes, the number of local states, memory content, and the number of own-messages that should be equal are finite,
 it is decidable
whether $\conf_1 \genordering \conf_2$.

This concludes the proof of Lemma~\ref{well-quasi-order}.
\qed

\section{Proof of  Lemma~\ref{compute-pre-DTSO}}
\label{proofs:compute-pre-DTSO}


Consider a DTSO-configuration $\conf=\tuple{\statemapping,\buffermapping,\mem}$.
Let we recall the definition of $\minpre(\set{\conf})$:
$\minpre(\set{\conf}):=\minnof{\preof{\tsys}{ \upclosure{\set{c}} } \cup \upclosure{\set{c}}}$.
%
%
We observe that
$$\minpre(\set{\conf})=
\minn\left(\cup_{\transition\in\transitions\cup\transitions''}
\minn{\setcomp{\conf'}{\conf'\by{t} \conf}}
\cup \set{\conf} \right).$$
For  $t\in\transitions\cup\transitions''$,
we select $\minn{\setcomp{\conf'}{\conf'\by{t} \conf}}$ to be the minimal
set  of all {\it finite} DTSO-configurations of the form 
$\conf'=\tuple{\statemapping',\buffermapping',\mem'}$
such that one of the following properties is satisfied:
\begin{enumerate}
\item {\sf Nop:}
$\transition=\tuple{\state_1,\nop,\state_2}$,
$\statemapping(\proc)=\state_2$ for some $\proc\in\procs$,
$\statemapping'=\statemapping\update{\proc}{\state_1}$,
$\buffermapping'=\buffermapping$, and
$\mem'=\mem$.
\item {\sf Write:}
$\transition=\tuple{\state_1,\wop(\xvar,\data),\state_2}$,
$\statemapping(\proc)=\state_2$ for some $\proc\in\procs$,
$\buffermapping(\proc)=(\xvar,\data,{\textsc{own}})\app\word$  for some $\word$,
$\mem(\xvar)=\data$,
$\mem'(\yvar)=\mem(\yvar)$ if $\yvar\neq\xvar$,
$\statemapping'=\statemapping\update{\proc}{\state_1}$,
 and
one of the following properties is satisfied:
\begin{itemize}
\item
$\buffermapping'=\buffermapping\update{\proc}{\word}$.

\item
$\buffermapping'=\buffermapping\update{\proc}{\word_1\app(\xvar,\data',{\it own})\app\word_2}$ for some $\data'\in \dataset$ where $\word_1\app\word_2=\word$ and $(\xvar,\data'',{\it own}) \notin \word_1$ for all $\data''\in \dataset$.
\end{itemize}

\item {\sf Propagate:}
$\transition=\propagate_\proc^\xvar$ for some $\proc\in\procs$,
$\mem(\xvar)=\data$,
$\statemapping'=\statemapping$,
$\mem'=\mem$, 
$\buffermapping(\proc)=(\xvar,\data) \app \word$
for some $\word$, 
and
$\buffermapping'=\buffermapping\update{\proc}{\word}$.

\item {\sf Read:}
$\transition=\tuple{\state_1,\rop(\xvar,\data),\state_2}$,
$\statemapping(\proc)=\state_2$ for some $\proc\in\procs$,
$\statemapping'=\statemapping\update{\proc}{\state_1}$,
and
$\mem'=\mem$, and one
of the following  conditions is
 satisfied:
\begin{itemize}
\item  {\sf Read-own-write:}
there is an $i:1\leq i\leq\sizeof{\buffermapping(\proc)}$ such that
$\buffermapping(\proc)(i)=\tuple{\xvar,\data,{\it own}}$, and
there are no $j: 1\leq j <i$ and $\data'\in\dataset$  
such that $\buffermapping(\proc)(j)=\tuple{\xvar,\data',{\it own}}$, and
$\buffermapping'=\buffermapping$.

\item {\sf Read from  buffer:}
$\tuple{\xvar,\data',{\textsc{own}}}\not\in\buffermapping(\proc)$
for all $\data'\in\dataset$, 
$\buffermapping(\proc)=\word\app(\xvar,\data)$ for some $\word$,
and
$\buffermapping'=\buffermapping$.

\item {\sf Read from  buffer:}
$\tuple{\xvar,\data',{\it own}}\not\in\buffermapping(\proc)$
for all $\data'\in\dataset$, 
$\buffermapping(\proc)\neq\word\app(\xvar,\data)$ for all $\word$,
and
$\buffermapping'=\buffermapping\update{\proc}{\buffermapping(\proc)\app(\xvar,\data)}$.

\end{itemize}

\item {\sf Fence:}
$\transition=\tuple{\state_1,\fenceop,\state_2}$,
$\statemapping(\proc)=\state_2$ for some $\proc\in\procs$,
$\buffermapping(\proc)=\emptyword$,
$\statemapping'=\statemapping\update{\proc}{\state_1}$,
$\buffermapping'=\buffermapping$, and
$\mem'=\mem$.

\item {\sf ARW:}
$\transition=\tuple{\state_1,\arw(\xvar,\data,\data'),\state_2}$,
$\mem(\xvar)=\data'$,
$\mem'=\mem\update{\xvar}{\data}$,
$\statemapping(\proc)=\state_2$ for some $\proc\in\procs$,
$\buffermapping(\proc)=\emptyword$,
$\statemapping'=\statemapping\update{\proc}{\state_1}$,
$\buffermapping'=\buffermapping$.

\item {\sf Delete:}
$\transition=\delete_\proc$ for some $\proc\in\procs$,
$\statemapping'=\statemapping$,
$\mem'=\mem$.
Moreover,
$(\xvar,\data,{\it own})\notin \buffermapping(\proc)$ for some $\xvar\in\vars$ and all $\data\in\dataset$,
$\buffermapping'=\buffermapping\update{\proc}{\buffermapping(\proc)\app(\xvar,\data',{\it own})}$ for some $\data'\in\dataset$.
\end{enumerate}

This concludes the proof of Lemma~\ref{compute-pre-DTSO}.
\qed

\section{Proof of Lemma~\ref{monotonicity-para}}
\label{proofs:monotonicity-para}

Let 
$\pconf_i=\tuple{\procs_i,\conf_i}$
and
$\conf_i=\tuple{\statemapping_i,\buffermapping_i,\mem_i}$
for $i:1\leq i\leq 4$.
We show that
if $\pconf_1\by{\transition}\pconf_2$ and $\pconf_1\trianglelefteq\pconf_3$
for some
$\transition\in\transitions_{\proc}\cup\set{\propagate_{\proc}^{\xvar} , \delete_{\proc}}$
and
$\proc\in\procs_1$ (note that $\procs_1=\procs_2$)
then the configuration $\pconf_4$ exists
such that $\pconf_3\by{}^{*}\pconf_4$ 
and $\pconf_2\trianglelefteq\pconf_4$.
%
%
First we define $\procs_4$:=$\procs_3$.
Because of $\pconf_1\trianglelefteq\pconf_3$, 
there exists an
injection $h:\procs_1\mapsto\procs_3$
in the ordering 
$\pconf_1\trianglelefteq\pconf_3$.
We define an injection
$h':\procs_2\mapsto\procs_4$
in 
 the ordering 
$\pconf_2\trianglelefteq\pconf_4$
such that
$h=h'$. 
Moreover,
for 
$\proc\in\procs_4$,
let
$\statemapping_4(\proc):=\statemapping_2(h'(\proc))$ if the process $\proc\in\procs_2$,
otherwise
$\statemapping_4(\proc):=\statemapping_3(\proc)$.
We define $\conf_4$ depending on different cases of $\transition$:
\begin{enumerate}
\item {\sf Nop:}
$\transition=\tuple{\state_1,\nop,\state_2}$.
Define 
$\buffermapping_4:=\buffermapping_3$ and
$\mem_4:=\mem_2=\mem_3=\mem_1$.
We have  
$\pconf_3\dtsomovesto{\transition}\pconf_4$.

\item {\sf Write:}
$\transition=\tuple{\state,\wop(\xvar,\data),\state'}$.
Define 
$\buffermapping_4:=
\buffermapping_3\update{h(\proc)}{(\xvar,\data,{\textsc{own}}) \app 
\buffermapping_3(h(\proc))}$
and
$\mem_4:=\mem_2$.
We have  
$\pconf_3\dtsomovesto{\transition}\pconf_4$.
\item {\sf Propagate:}
$\transition=\propagate_\proc^\xvar$.
Define 
$\mem_4:=\mem_2=\mem_3=\mem_1$
and
$\buffermapping_4:=
\buffermapping_3\update{h(\proc)}{(\xvar,\data) \app \buffermapping_3(h(\proc))}$ where $\data=\mem_4(\xvar)$.
We have  
$\pconf_3\dtsomovesto{\transition}\pconf_4$.
\item {\sf Delete:}
$\transition=\delete_\proc$.
Define 
$\mem_4:=\mem_2=\mem_3=\mem_1$.
Define $\buffermapping_4$
according to one of the following cases:
\begin{itemize}
\item
If 
$\buffermapping_1=
\buffermapping_2\update{\proc}{\buffermapping_2(\proc) \app (\xvar,\data) }$, 
then define
$\buffermapping_4:=\buffermapping_3$.
In other words, we have $\pconf_4=\pconf_3$.
\item
If 
$\buffermapping_1=
\buffermapping_2\update{\proc}{ \buffermapping_2(\proc) \app (\xvar,\data,{\textsc{own}})}$ 
and $(\xvar,\data',{\textsc{own}})\in\buffermapping_2(\proc)$ for some
$\data'\in\valset$,
then define
$\buffermapping_4:=\buffermapping_3$.
In other words, we have $\pconf_4=\pconf_3$.
\item
If 
$\buffermapping_1=
\buffermapping_2\update{\proc}{ \buffermapping_2(\proc) \app (\xvar,\data,{\textsc{own}})}$ 
and there is no $\data'\in\valset$ 
with $(\xvar,\data',{\textsc{own}})\in\buffermapping_2(\proc)$,
then since 
$\buffermapping_1(\proc)\genordering\buffermapping_3(h(\proc))$
we know that there is an $i$
and therefore a smallest $i$ such that
$\buffermapping_3(h(\proc))(i)=(\xvar,\data,{\textsc{own}})$.
Define 
$$\buffermapping_4:=
\buffermapping_3\update{h(\proc)}
{\buffermapping_3(h(\proc))(1) 
\app
\buffermapping_3(h(\proc))(2)  \cdots \buffermapping_3(h(\proc))(i-1)  }$$
%
We can perform the following sequence of transitions
$\pconf_3
\dtsomovesto{\delete_\proc}
\pconf'_1
\dtsomovesto{\delete_\proc}
\pconf'_2
\cdots
\dtsomovesto{\delete_\proc}
\pconf'_{\sizeof{\buffermapping_3(h(\proc))}-i}
\dtsomovesto{\delete_\proc}
\pconf_4$.
In other words, we reach  the configuration $\pconf_4$
from $\pconf_3$ by first deleting
$\sizeof{\buffermapping_3(h(\proc))}-i$ messages from the head of 
$\buffermapping_3(h(\proc))$.

\end{itemize}
\item {\sf Read:}
$\transition=\tuple{\state,\rop(\xvar,\data),\state'}$.
Define 
$\mem_4:=\mem_2$.
We define $\buffermapping_4$
according to one of the following cases:
\begin{itemize}
\item  {\sf Read-own-write:}
If there is an $i:1\leq i\leq\sizeof{\buffermapping_1(\proc)}$ such that
$\buffermapping_1(\proc)(i)=\tuple{\xvar,\data,{\textsc{own}}}$, and
there are no $1\leq j <i$ and $\data'\in\dataset$  
such that $\buffermapping_1(\proc)(j)=\tuple{\xvar,\data',{\textsc{own}}}$.
Since $\buffermapping_1(\proc)\genordering\buffermapping_3(h(\proc))$,
there is an $i':1\leq i'\leq\sizeof{\buffermapping_1(\proc)}$ such that
$\buffermapping_1(\proc)(i')=\tuple{\xvar,\data,{\textsc{own}}}$, and
there are no $1\leq j <i'$ and $\data'\in\dataset$  
such that $\buffermapping_1(\proc)(j)=\tuple{\xvar,\data',{\textsc{own}}}$.
Define $\buffermapping_4:=\buffermapping_3$.
In other words, we have that $\pconf_4=\pconf_3$.
\item {\sf Read from  buffer:}
If
$\tuple{\xvar,\data',{\textsc{own}}}\not\in\buffermapping_1(\proc)$
for all $\data'\in\dataset$ and $\buffermapping_1=
\buffermapping_2\update{\proc}{ \buffermapping_2(\proc) \app \tuple{\xvar,\data}  }$, then
let $i$ be the largest $i:1\leq i \leq \sizeof{\buffermapping_3(h(\proc))}$
such that $\buffermapping_3(h(\proc))(i)=\tuple{\xvar,\data}$.
Since $\buffermapping_1(\proc)\genordering\buffermapping_3(h(\proc))$,
we know that such an $i$ exists.
Define 
$$\buffermapping_4:=
\buffermapping_3\update{h(\proc)}
{\buffermapping_3(h(\proc))(1) 
\app
\buffermapping_3(h(\proc))(2)  \cdots \buffermapping_3(h(\proc))(i-1)  }$$
%
We can reach  the configuration $\pconf_4$ from $\pconf_3$ in a similar manner to the last case
of the delete transition.
\end{itemize}
\item {\sf Fence:}
$\transition=\tuple{\state,\fenceop,\state'}$.
Define 
$\buffermapping_4:=\emptyword$ and
$\mem_4:=\mem_2$.
We can perform the following sequence of transitions
$\pconf_3
\dtsomovesto{\delete_\proc}
\pconf'_1
\dtsomovesto{\delete_\proc}
\pconf'_2
\cdots
\dtsomovesto{\delete_\proc}
\pconf'_{\sizeof{\buffermapping_3(h(\proc))}}
\dtsomovesto{\transition}
\pconf_4$.
In other words, we can reach  the configuration $\pconf_4$
from $\pconf_3$ by first  emptying
the contents of
$\buffermapping_3(h(\proc))$
and then performing $\transition$.

\item {\sf ARW:}
$\transition=\tuple{\state,\arw(\xvar,\data,\data'),\state'}$.
Define 
$\buffermapping_4:=\emptyword$ and
$\mem_4:=\mem_2$.
We can reach  the configuration $\pconf_4$
from $\pconf_3$ in a similar manner to the case of the fence transition.
\end{enumerate}

This concludes the proof of Lemma~\ref{monotonicity-para}.
\qed

\section{Proof of Lemma~\ref{compute-pre-parsys}}
\label{proofs:compute-pre-parsys}


Consider a parameterized configuration  $\pconf=\tuple{\procs,\conf}$
with $\conf=\tuple{\statemapping,\buffermapping,\mem}$.
We recall the definition of $\minpre(\set{\pconf})$:
$\minpre(\set{\pconf})$:=$\minnof{\preof{\tsys}{ \upclosure{\set{\pconf}} } \cup \upclosure{\set{\pconf}}}$.
%
We observe that
$$\minpre(\set{\pconf})=
\minn\left(\cup_{\transition\in\transitions\cup\transitions''}
\minn{\setcomp{\pconf'}{\pconf'\by{t} \pconf}}
\cup \set{\pconf} \right).$$
For  $t\in\transitions\cup\transitions''$,
we select $\minn{\setcomp{\pconf'}{\pconf'\by{t} \pconf}}$ 
 to be the minimal
set  of all {\it finite} parameterized configurations of the form 
$\pconf'=\tuple{\procs',\conf'}$ with
$\conf'=\tuple{\statemapping',\buffermapping',\mem'}$
such that one of the following properties is satisfied:
\begin{enumerate}
\item {\sf Nop:}
$\transition=\tuple{\state_1,\nop,\state_2}$,
$\statemapping(\proc)=\state_2$ for some $\proc\in\procs$,
$\procs'=\procs$,
$\statemapping'=\statemapping\update{\proc}{\state_1}$,
$\buffermapping'=\buffermapping$, and
$\mem'=\mem$.
\item {\sf Write:}
$\transition=\tuple{\state_1,\wop(\xvar,\data),\state_2}$,
$\mem(\xvar)=\data$ for some $\data\in\dataset$,
$\mem'(\yvar)=\mem(\yvar)$ if $\yvar\neq\xvar$,
 and
one of the following conditions is satisfied:
\begin{itemize}
\item
$\statemapping(\proc)=\state_2$ for some $\proc\in\procs$, 
$\procs'=\procs$,
$\statemapping'=\statemapping\update{\proc}{\state_1}$,
$\buffermapping'=\buffermapping\update{\proc}{\word}$,
$\buffermapping(\proc)=(\xvar,\data,{\textsc{own}})\app\word$  for some $\word\in\left((\vars\times\dataset) 
\cup (\vars\times\dataset\times \{\textsc{own}\})\right)^*$.

\item
$\statemapping(\proc)=\state_2$ for some $\proc\in\procs$, 
$\procs'=\procs$,
$\statemapping'=\statemapping\update{\proc}{\state_1}$,
$\buffermapping(\proc)=(\xvar,\data,{\textsc{own}})\app\word$  for some $\word\in\left((\vars\times\dataset) 
\cup (\vars\times\dataset\times \{\textsc{own}\})\right)^*$, 
$\buffermapping'=\buffermapping\update{\proc}{\word_1\app(\xvar,\data',{\it own})\app\word_2}$ for some $\data'\in \dataset$ where $\word_1,\word_2\in\left((\vars\times\dataset) 
\cup (\vars\times\dataset\times \{\textsc{own}\})\right)^*$, $\word_1\app\word_2=\word$ and $(\xvar,\data'',{\it own}) \notin \word_1$ for all $\data''\in \dataset$.
In other words, $(\xvar,\data',{\it own})$ is the most recent message to variable $\xvar$ belonging to $\proc$ in the buffer $\buffermapping'(\proc)$. This condition corresponds to the case when we have some messages $(\xvar,\data',{\it own})$ that are hidden by
  the message $(\xvar,\data,{\it own})$ in the buffer $\buffermapping(\proc)$.

\item
$\statemapping(\proc)\neq\state_2$ or $\buffermapping(\proc)\neq(\xvar,\data,{\textsc{own}})\app\word$  for any $\proc\in\procs$, $\word\in\left((\vars\times\dataset) 
\cup (\vars\times\dataset\times \{\textsc{own}\})\right)^*$,
$\procs'=\procs\cup\set{\proc'}$ for some $\proc'\not\in\procs$,
$\statemapping'(\proc')=\state_1$,
$\statemapping'(\proc'')=\statemapping(\proc'')$
if $\proc''\neq\proc'$,
$\buffermapping'(\proc')=
\langle(\xvar_1,\data_1,{\textsc{own}}) | \emptyword \rangle
\langle (\xvar_2,\data_2,{\textsc{own}}) | \emptyword  \rangle \cdots
\langle (\xvar_m,\data_m,{\textsc{own}})  | \emptyword \rangle$
where $\xvar_i\neq\xvar_j$,
$\data_i\in\dataset$, $1 \leq i,j \leq \sizeof{X}$
and
$\buffermapping'(\proc'')=\buffermapping(\proc'')$
if $\proc''\neq\proc'$.
In other words, we add one more process $\proc'$ to the configuration $\pconf'$.

\end{itemize}

\item {\sf Propagate:}
$\transition=\propagate_\proc^\xvar$ for some $\proc\in\procs$,
$\mem(\xvar)=\data$,
$\procs'=\procs$,
$\statemapping'=\statemapping$,
$\mem'=\mem$, 
$\buffermapping(\proc)=(\xvar,\data) \app \word$
for some $\word\in\left((\vars\times\dataset) 
\cup (\vars\times\dataset\times \{\textsc{own}\})\right)^*$, 
and
$\buffermapping'=\buffermapping\update{\proc}{\word}$.

\item {\sf Read:}
$\transition=\tuple{\state_1,\rop(\xvar,\data),\state_2}$,
$\statemapping(\proc)=\state_2$ for some $\proc\in\procs$,
$\procs'=\procs$,
$\statemapping'=\statemapping\update{\proc}{\state_1}$,
and
$\mem'=\mem$, and one
of the following conditions is
 satisfied:
\begin{itemize}
\item  {\sf Read-own-write:}
there is an $i:1\leq i\leq\sizeof{\buffermapping(\proc)}$ such that
$\buffermapping(\proc)(i)=\tuple{\xvar,\data,{\it own}}$, and
there are no $j: 1\leq j <i$ and $\data'\in\dataset$  
such that $\buffermapping(\proc)(j)=\tuple{\xvar,\data',{\it own}}$, and
$\buffermapping'=\buffermapping$.

\item {\sf Read from  buffer:}
$\tuple{\xvar,\data',{\textsc{own}}}\not\in\buffermapping(\proc)$
for all $\data'\in\dataset$, 
$\buffermapping(\proc)=\word\app(\xvar,\data)$ for some $\word\in\left((\vars\times\dataset) 
\cup (\vars\times\dataset\times \{\textsc{own}\})\right)^*$,
and
$\buffermapping'=\buffermapping$.

\item {\sf Read from  buffer:}
$\tuple{\xvar,\data',{\it own}}\not\in\buffermapping(\proc)$
for all $\data'\in\dataset$, 
$\buffermapping(\proc)\neq\word\app(\xvar,\data)$ for any $\word\in\left((\vars\times\dataset) 
\cup (\vars\times\dataset\times \{\textsc{own}\})\right)^*$,
and
$\buffermapping'=\buffermapping\update{\proc}{\buffermapping(\proc)\app(\xvar,\data)}$.
This condition corresponds to the case when we have some  messages $(\xvar,\data)$ that are not explicitly presented  at the head of the buffer $\buffermapping(\proc)$.

\end{itemize}

\item {\sf Fence:}
$\transition=\tuple{\state_1,\fenceop,\state_2}$,
$\statemapping(\proc)=\state_2$ for some $\proc\in\procs$,
$\buffermapping(\proc)=\emptyword$,
$\procs'=\procs$,
$\statemapping'=\statemapping\update{\proc}{\state_1}$,
$\buffermapping'=\buffermapping$, and
$\mem'=\mem$.

\item {\sf ARW:}
$\transition=\tuple{\state_1,\arw(\xvar,\data,\data'),\state_2}$,
$\mem(\xvar)=\data'$,
$\mem'=\mem\update{\xvar}{\data}$,
and one of the following conditions is satisfied:
\begin{itemize}
\item
$\statemapping(\proc)=\state_2$ for some $\proc\in\procs$,
$\buffermapping(\proc)=\emptyword$,
$\procs'=\procs$,
$\statemapping'=\statemapping\update{\proc}{\state_1}$,
$\buffermapping'=\buffermapping$.
\item
$\statemapping(\proc)\neq\state_2$ or
$\buffermapping(\proc)\neq\emptyword$ for any $\proc\in\procs$,
$\procs'=\procs\cup\set{\proc'}$ for some $\proc'\not\in\procs$,
$\statemapping'(\proc')=\state_1$,
$\statemapping'(\proc'')=\statemapping(\proc'')$
if $\proc''\neq\proc'$,
$\buffermapping'(\proc')=\emptyword$, and
$\buffermapping'(\proc'')=\buffermapping(\proc'')$ if $\proc''\neq\proc'$.
In other words, we add one more process $\proc'$ to the configuration $\pconf'$.
\end{itemize}

\item {\sf Delete:}
$\transition=\delete_\proc$ for some $\proc\in\procs$,
$\procs'=\procs$,
$\statemapping'=\statemapping$,
$\mem'=\mem$,
$(\xvar,\data,{\it own})\notin \buffermapping(\proc)$ for some $\xvar\in\vars$ and all $\data\in\dataset$,
$\buffermapping'=\buffermapping\update{\proc}{\buffermapping(\proc)\app(\xvar,\data',{\it own})}$ for some $\data'\in\dataset$.
\end{enumerate}

This concludes the proof of Lemma~\ref{compute-pre-parsys}.
\qed


%
%

%




\end{document}